\numberwithin{equation}{section}
\theoremstyle{plain}
\newtheorem{theorem}{Theorem}
\numberwithin{theorem}{section}
\newtheorem{lemma}{Lemma}       	
\numberwithin{lemma}{section}
\numberwithin{proposition}{section}
\newtheorem{corollary}{Corollary}
\numberwithin{corollary}{section}
\theoremstyle{definition}
\newtheorem{definition}{Definition}
\numberwithin{definition}{section}
\numberwithin{example}{section}
\newtheorem{remark}{Remark}
\numberwithin{remark}{section}
\newtheorem{assumption}{Assumption}
\numberwithin{assumption}{section}
\newcommand{\cref}{\eqref}
\newcommand\Eb{\mathds{E}}
\newcommand\Fb{\mathds{F}}
\newcommand\Pb{\mathds{P}}
\newcommand\Rb{\mathds{R}}
\newcommand\Ac{\mathscr{A}}
\newcommand\Cc{\mathcal{C}}
\newcommand\Dc{\mathcal{D}}
\newcommand\Fc{\mathscr{F}}
\newcommand\Nc{\mathscr{N}}
\newcommand\Uc{\mathcal{U}}
\newcommand\eps{\epsilon}
\newcommand\Om{\Omega}
\newcommand\sig{\sigma}
\newcommand\gam{\gamma}
\newcommand\lam{\lambda}
\newcommand\del{\delta}
\newcommand\kap{\kappa}
\newcommand\yb{\bar{y}}
\newcommand\xu{\underline{x}}
\newcommand\zu{\underline{z}}
\newcommand\phih{\widehat{\varphi}}
\newcommand\psih{\widehat{\psi}}
\newcommand \ch {\hat{c}}
\newcommand\At{\widetilde{A}}
\newcommand\Bt{\widetilde{B}}
\newcommand\at{\tilde{a}}
\newcommand\bt{\tilde{b}}
\newcommand\yt{\widetilde{y}}
\newcommand\zt{\widetilde{z}}
\newcommand\Ut{\widetilde{U}}
\newcommand\Vt{\widetilde{V}}
\newcommand\Wt{\widetilde{W}}
\newcommand\phit{\widetilde{\varphi}}
\newcommand\psit{\widetilde{\psi}}
\newcommand\epst{\tilde{\epsilon}}
\newcommand\dd{\mathrm{d}}
\newcommand\ee{\mathrm{e}}
\newcommand\pis{{\pi^*}}
\newcommand{\cs}{{c^{*}}}
\newcommand{\Acr}{{\Ac_\text{\rm rel.}}}
\newcommand{\R}{\mathbb{R}}
\newcommand{\E}{\mathbb{E}\thinspace }
\newcommand \al {\alpha}
\begin{document}

\title{Optimal consumption under loss-averse multiplicative habit-formation}

\author{
	Bahman Angoshtari\thanks{Department of Mathematics, University of Miami, Coral Gables, USA. 
		\url{bangoshtari@miami.edu}}
	\and Xiang Yu\thanks{Department of Applied Mathematics, Hong Kong Polytechnic University, Kowloon, Hong Kong. 
		\url{xiang.yu@polyu.edu.hk}}
	\and Fengyi Yuan\thanks{School of Science and Engineering, Chinese University of Hong Kong (Shenzhen), Shenzhen, China.  
		\url{yuanfengyi@cuhk.edu.cn}}
}
\date{\vspace{-0.2in}}

\maketitle

\begin{abstract}
This paper studies a loss-averse version of the multiplicative habit formation preference and the corresponding optimal investment and consumption strategies over an infinite horizon. The agent's consumption preference is depicted by a general S-shaped utility function of her consumption-to-habit ratio. By considering the concave envelope of the S-shaped utility and the associated dual value function, we provide a thorough analysis of the HJB equation for the concavified problem via studying a related nonlinear free boundary problem. Based on established properties of the solution to this free boundary problem, we obtain the optimal consumption and investment policies in feedback form. Some new and technical verification arguments are developed to cope with generality of the utility function. The equivalence between the original problem and the concavified problem readily follows from the structure of the feedback controls. We also discuss some quantitative properties of the optimal policies, complemented by illustrative numerical examples and their financial implications. 	\vspace{1em}
	
\noindent\textbf{Keywords}: multiplicative habit formation preference, loss aversion, S-shaped utility, HJB equation, nonlinear free-boundary problem, verification.
\end{abstract}



%
%

\section{Introduction}\label{sec:intro}
In the past decades, time non-separable preferences have been popularized to explain some empirically observed phenomena such as excessive consumption smoothing (\cite{Camp}, \cite{Sun89}) and the equity premium puzzle (\cite{Mehra}, \cite{Cons}).  Among them, habit formation preference $U(C_t, H_t)$ has been widely studied to steer consumption planning, where the agent's satisfaction and risk aversion depend on the relative change of consumption rate $C_t$ with respect to the habit level $H_t$, rather than on the absolute consumption rate as in  (\cite{Merton69}, \cite{Merton71}). Here, the consumption habit is typically modeled by the exponentially weighted average of the past consumption that
\begin{align}\label{eq:HABIT}
	H_t=he^{-\rho t} + \rho\int^t_0 \ee^{-\rho(t-u)} C_u \dd u,\quad t\ge0,
\end{align}
in which $h>0$ is the initial habit and $\rho>0$ is the habit persistence parameter. 

Two types of habit formation preferences can be found in the literature, namely the \emph{linear} (or \emph{additive}) habit formation (\cite{Cons}) and the \emph{multiplicative} habit formation (\cite{Abel}). The linear habit formation $U(C_t-H_t)$ measures the utility based on the difference between current consumption rate and the habit. Since it is commonly assumed that $U'(0^+)=+\infty$, an \textit{addictive} consumption constraint $C_t\geq H_t$ is imposed on admissible consumption policies. To sustain the control constraint without bankruptcy, the initial wealth also needs to fulfill a threshold constraint relative to the initial habit $W_0\geq ah$ for some constant $a>0$. Along this direction, extensive studies on the characterization of the optimal consumption and some financial insights can be found in \cite{DZ1992}, \cite{Eng}, \cite{Munk}, \cite{SS2002}, \cite{Yu2015}, \cite{Yu2017}, \cite{YY}, \cite{BWY}, \cite{GHLY}, to name a few. 

The multiplicative habit formation $U(C_t/H_t)$, on the other hand, imposes no subsistence constraint on the agent's initial wealth. Indeed, the consumption control $C_t$ under the multiplicative habit formation is allowed to fall below the habit formation process $H_t$ from time to time. 
This \textit{non-addictive} nature of multiplicative habit formation preferences has been favored by some recent studies (see \cite{Car2000}, \cite{CarrollEtal1997}, \cite{CarrollEtal2000}, \cite{Fuh}, \cite{corr}, \cite{VanBilsen2020}, \cite{LWY}, \cite{Kam22}) thanks to its flexibility in fitting diverse market environments. However, in contrast to the linear habit formation, it is generally more challenging to derive analytical characterization of the optimal consumption policies under multiplicative habit formation. In \cite{VanBilsen2020}, the authors replaced the common habit process \cref{eq:HABIT} with an alternative geometric specification $d\log(H_t)=\rho(\log C_t -\log H_t)$ and then considered the optimal lifetime consumption problem with a power-type multiplicative habit formation utility. Using the geometric specification of the habit process, they derived an approximate optimal consumption control in closed-form. To obtain the approximate control, however, they rely on a linear approximation of their budget constraint (see Section III.C therein for details) that relies on the optimal consumption-to-habit ratio to be around 1. In other words, the validity of their results rely on $C_t/H_t\approx 1$. Using the same geometric form of the habit formation process as in  \cite{VanBilsen2020}, \cite{Kam22} further develop a duality result under power-type multiplicative habit formation utility. 

The above studies rely on the Maximum Principle to analyze the underlying stochastic control problems. The following studies use the Dynamic Programming Principle. \cite{ABY22} and \cite{ABY23} considered the common habit process \cref{eq:HABIT} and study a new variation of multiplicative habit formation preference $U(C_t/H_t)=\frac{1}{1-\gam}(C_t/H_t)^{1-\gam}$, $\gam>1$, by mandating the additional habit constraint $C_t\ge \alpha H_t$ with a constant parameter $\alpha\in[0,1]$ into the set of admissible portfolio-consumption strategies. They derived the feedback form of the optimal investment and consumption policies by analyzing the corresponding Hamilton–Jacobi–Bellman (HJB) equation. Due to the control constraint, however, the initial wealth therein also needs to stay above a proportion of the initial habit level (similar to previous studies on linear habit formation). Most recently, \cite{ABG2025} considered a new variation the common habit process \cref{eq:HABIT} with added noise, i.e. $\dd H_t = \rho(C_t - H_t)\dd t + \beta H_t \dd \Wt_t$, in which $\Wt$ is a Brownian motion independent of Brownian motion in their budget constraint. They studied an infinite horizon optimal consumption problem with multiplicative habit formation utility $U(C_t/H_t)=\frac{1}{1-\gam}(C_t/H_t)^{1-\gam}$, $\gam>1$, and established a full verification result for the underlying stochastic control problem, i.e. that the value function is the classic solution of the HJB equation. They also provided several properties of their value function, while leaving the characterization of the optimal policies for future research.

Although fruitful studies can be found in both types of habit formation preferences, neither can address loss aversion tendencies of individuals, evidently supported by empirical observations (see \cite{Kos09} and \cite{KKP}) that the agent might suffer more from a reduction in the relative consumption than would benefit from the same size of increment. Indeed, linear habit formation models intrinsically rule out the possibility of consumption loss relative to habit level. Albeit the multiplicative habit formation allows the agent to strategically budget the consumption plan below the habit level, the agent exhibits the same risk aversion attitude on gains and losses, failing to reflect her loss aversion in its conventional formulation.

As suggested by \cite{Kahneman79} and \cite{TverskyKahneman1992}, it is more suitable to employ S-shaped two-part utility functions such that the agent's risk aversion attitudes differ when the consumption rate is above or below a reference level. In the literature of behavioral finance, the loss-averse two-part utilities with a reference point has been predominantly studied for terminal wealth optimization, see \cite{JZhou},  \cite{HZhou}, \cite{HYang},  \cite{DongZheng}, \cite{BKP}, among others. Only a handful of studies can be found that focuses on the agent's loss aversion on relative consumption towards the accustomed habit. In \cite{Cur17}, the S-shaped power utilities $U(C_t-H_t)$ are considered similar to the linear habit formation preferences where, instead of the common dynamics \cref{eq:HABIT}, their habit formation process is simplified to the form of $H_t=h+(1-\alpha)\nu ht+ \alpha C_t$ with $0\leq \alpha<1$, $\nu\leq 0$ and the initial habit $h\geq 0$ such that the past consumption influence actually disappears. Later, to accommodate the conventional path-dependent habit formation process $H_t$ in the problem $U(C_t-H_t)$ under S-shaped power utilities, \cite{VanBilsenvLaevenNijman2020} develop the martingale duality method by taking advantage of the linear structure in terms of the consumption and imposing an artificial lower bound on the difference $C_t-H_t\ge -M$ for a fixed constant $M$. Recently, \cite{LYZ} investigate the loss-averse two-part power utilities on relative consumption $U(C_t-H_t)$ based on PDE analysis where the reference process $H_t$ is chosen as a proportion of the historical running maximum of the consumption process.

The present paper aims to pioneer the study of optimal consumption by featuring both the multiplicative habit formation and the loss-averse two-part utilities. We consider the loss-averse multiplicative habit formation $U(C_t/H_t)$ on the consumption-to-habit ratio with a reference level $\alpha>0$, namely,
\begin{align}
	U(c) = 
	\begin{cases}
		U_+(c-\al),\quad c>\al,\\
		-U_-(\al-c),\quad 0\le c\le \al,
	\end{cases}
\end{align}
in which $U_\pm(\cdot)$ are two general utility functions with possibly distinct risk aversion (see \cref{eq:Uc}). Contrary to \cite{VanBilsenvLaevenNijman2020} that imposes the lower bound $C_t-H_t\geq -M$ for some $M>0$ in their loss-averse linear habit formation, our formulation and methodology allow us to consider any nonnegative consumption and its aggregated habit formation process without model restrictions. Furthermore, instead of imposing a strict habit formation constraint $C_t/H_t\ge \al$ considered in \cite{ABY22}, we take a more relaxed formation by assigning a loss averse utility whenever the consumption-to-habit ratio falls below the reference level $\alpha$. Thus, the constraint on the initial wealth in \cite{ABY22} is no longer needed in our formulation, making our model applicable to agents starting at any financial situation. Finally, in contrast to most existing literature on consumption habit formation, we also allow for the case $\alpha>1$, which reflects individuals with a high consumption–aspiration level.

It is well-known that the optimization problem is no longer concave under the above S-shaped utility. We choose to employ the concave envelope (of the utility function) and first study the HJB equation associated to the concavified stochastic control problem. To circumvent the challenge caused by the path-dependency due to habit formation, we further consider the auxiliary controls and state process, namely, the relative consumption, the relative investment, and the relative wealth, all respect to the habit formation process (see \cref{eq:ratios} and \cref{eq:X}). Consequently, it is sufficient to investigate an equivalent one-dimensional HJB equation under the concavified utility. Thanks to the structure of the concave envelope, it is natural to conjecture the existence of a free boundary threshold $x_0\geq 0$ for the relative wealth such that the optimal relative consumption is completely suspended, i.e., $c^*_t=0$, whenever the relative wealth falls below $x_0$; and the optimal relative consumption is characterized by the first order condition when the relative wealth diffuses above or at $x_0$. We eventually encounter a nonlinear free boundary problem (see problem \cref{eq:HJB-FBP1-moderate}), and our mathematical task is to show the existence of a classical solution to this free boundary problem and establish the rigorous verification proof for the conjectured optimal controls. An added level of difficulty stems from the generality of our utility function. As noted earlier, all existing studies on multiplicative habit formation assume a restricted class of power utility functions, namely, $U(C_t/H_t)=\frac{1}{1-\gam}(C_t/H_t)^{1-\gam}$ for $\gam>1$.

The contribution of our paper is three-fold:\vspace{1ex}

\noindent\textbf{(i)} Our work is the first study on loss-averse multiplicative habit formation that yields feedback optimal strategies. Our characterization of feedback controls leads to a straightforward numerical scheme (namely, a bisection search for the free-boundary $x_0$ coupled with an ODE solver) with theoretical guarantees, which allows us to discuss some interesting financial implications. Moreover, our model encompasses many existing models as limiting cases, and also allows generic utility functions; see Section \ref{sec:Examples} for detailed discussions.\vspace{1ex}

\noindent \textbf{(ii)} We develop a methodology to analyze a type of nonlinear free boundary problem under the general S-shaped utility. In a nutshell, we choose to transform the targeted problem into several auxiliary problems, for which we are able to obtain some additional conditions and properties to assist our analysis. First, we adopt the dual transform of the original problem, leading to another nonlinear free boundary problem (see \cref{eq:modFBPDual1}--\cref{eq:modFBPDual4}). However, the dual PDE problem when the dual variable is larger than the free boundary (see \cref{eq:modFBPDual1}) is linear, which provides an additional explicit free boundary condition (see \cref{eq:u1-moderate}) on strength of the smooth-fit principle. Next, to study the dual nonlinear free boundary problem (see \cref{eq:modFBPDual-reduced}) with two free boundary conditions on the function and its derivative respectively, we propose to investigate an auxiliary system of first-order free boundary ODEs \cref{eq:ODESys-moderate}-\cref{eq:FB-moderate}, whose solutions are related to the dual free boundary problem via the transformations \cref{up_phi_main} and \cref{upp_phi_psi_main} in our first main result Theorem \ref{thm:DualFBP-moderate}. For this coupled system of free boundary problems, we develop some delicate arguments, new to the literature, to address the existence of the unique classical solution and to derive some important asymptotic conditions of the solutions at the boundary $0$.\vspace{1ex}

\noindent\textbf{(iii)} Another theoretical contribution of the present paper is our verification arguments for general utility functions that satisfy mild growth conditions. Thanks to two asymptotic conditions of the solutions to the coupled system, we are able to first prove the key step of transversality condition (see Lemma \ref{lem:trans}) in two separate cases via different estimations and techniques. Despite the lack of an explicit structure due to the generality of the utility function and its concave envelope, we are still able to show the existence of a unique strong solution to the state SDE under the optimal feedback controls, and further verify that the solution of the HJB free boundary problem indeed coincide with the value functions of the concavified problem and the original problem.\vspace{1ex}

The remainder of the paper is organized as follows. Section \ref{sec:model} introduces the market model and the problem formulation under the loss-averse multiplicative habit formation. Section \ref{sec:OptimalPolicies} studies the HJB equation for the concavified problem and transforms it into another auxiliary free boundary problem, for which the existence of a unique classical solution is established. In addition, after obtaining some key boundary conditions of the solution, the verification theorem of the optimal feedback controls is established under general S-shaped utility functions. Section \ref{sec:Examples} presents some numerical illustrations and sensitivity analysis with respect to model parameters. Numerous financial implications, particularly the impact of habit formation and loss aversion levels, are also discussed and illustrated therein. Finally, Section \ref{sec:techproofs} collects lengthy and technical proofs of the main results in the earlier sections.  

%
%
\section{Model Setup}\label{sec:model}

We consider a financial market model consisting of a riskless asset with short rate $r\ge0$ and a risky asset whose price process $\{S_t\}_{t\ge0}$ is governed by
\begin{align}\label{eq:S-SDE}
	\dd S_t = (r+\mu) S_t \dd t + \sig S_t\dd B_t, \quad t\ge0,
\end{align}
where $\mu,\ \sig>0$ are the expected excess rate of return and volatility of the risky asset, and $B=\{B_t\}_{t\ge0}$ is a standard Brownian motion supported on the filtered probability space $(\Om, \Fc, \Fb=\{\Fc_t\}_{t\ge0}, \Pb)$ with $\Fb$ being the augmented filtration of $B$. 

An individual (henceforth, the agent) funds her lifetime consumption by investing in this market. Let $\Pi_t$ stand for the amount of wealth invested in the risky asset and $C_t$ be the consumption rate at time $t$. The agent's self-financing wealth process $\{W_t\}_{t\ge 0}$ then satisfies
\begin{align}\label{eq:W}
	\dd W_t = (r W_t + \mu \Pi_t - C_t) \dd t + \sig \Pi_t \dd B_t,\quad t\ge0,
\end{align}
with the initial wealth $W_0=w>0$.

\begin{definition}
	Given $w>0$, a progressively measurable process $(\Pi,C)=\{(\Pi_t,C_t)\}_{t\ge0}$ is an admissible investment and consumption control pair if $\int_0^t\big(|C_u|+\Pi_u^2\big)\dd u<+\infty$ and $W_t\ge0$ for all $t\ge0$, in which $\{W_t\}_{t\ge0}$ is the strong solution of \cref{eq:W}. $\Ac_0(w)$ denotes the set of all admissible controls starting with the initial wealth $w$.\qed
\end{definition}

In our model, it is assumed that the agent gradually develops consumption habits such that her performance and risk aversion on consumption depend on the relative changes with respect to her accustomed habit level. In particular, her habit formation process $\big\{H_t\big\}_{t\ge0}$, also called the standard of living process, is conventionally defined as the exponentially-weighted average of the past consumption up to date, namely
\begin{align}\label{eq:H}
	H_t := h + \rho\int_0^t \left(C_u - H_u\right)\dd u = he^{-\rho t} + \rho\int^t_0 \ee^{-\rho(t-u)} C_u \dd u,\quad t\ge0,
\end{align}
in which $h>0$ is the initial habit level and $\rho>0$ stands for the persistence rate of the past consumption levels. A larger value of $\rho$ indicates that the agent's habit is more inclined to recent consumption behavior compared to past consumption rates in distant time.

In contrast to previous studies in the literature, we aim to study the \emph{loss averse version of the multiplicative habit formation preference} to account for the agent's psychological difference towards the same sized gain and loss of relative consumption with reference to the habit formation process. To this end, we adapt a general S-shaped two-part utility (see \cite{Kahneman79} and \cite{TverskyKahneman1992}) to encode the agent's loss aversion in the multiplicative habit formation, namely,
\begin{align}\label{eq:U}
	U(c) = 
	\begin{cases}
		U_+(c-\al),\quad c>\al,\\
		-U_-(\al-c),\quad 0\le c\le \al.
	\end{cases}
\end{align}
Here, $0<\al$ is the reference point for the agent's consumption-to-habit ratio, indicating the scenario of the relative consumption $0\le C_t/H_t< \al$ as loss, and the scenario $C_t/H_t>\al$ as gain. Distinct risk aversion stemming from utilities $U_\pm(\cdot)$ then effectively depict the agent's different feelings over same-sized gains and losses in the relative consumption rates. That is, $c\mapsto -U_-(\al-c)$ is the (convex) loss-averse utility function in the loss region $0\le c\le \al$, and $c\mapsto U_+(c-\al)$ is the (concave) risk-averse utility function in the gain region $c>\al$. In \cref{eq:U}, it is assumed that $U_\pm(\cdot)$ are concave utility functions satisfying
\begin{align}\label{eq:Uc}
	U_\pm \in\Uc:=\Big\{f\in\Cc^2(\Rb_+): f'>0, f^{\prime\prime}<0, f'(+\infty)=0, f(0)=0\Big\}.
\end{align}
Note that $U(\al)=U_+(0)=-U_-(0)=0$ since $U_\pm(0)=0$ by \cref{eq:Uc}. This assumption is taken without loss of generality, as we can always shift a utility function by a constant.

In contrast to most of the existing literature on consumption habit formation, we also include the case $\alpha>1$, in which the consumption reference level $\alpha H_t$ exceeds the habit process $H_t$. This regime represents an agent with a high consumption–aspiration level.

Later, in Subsection \ref{subsec:candidatepolicies}, we will impose two additional assumptions on the utility function $U$, namely, Assumptions \ref{assumption:moderateLA} and \ref{assum:Growth} below. These assumptions are technical in nature, but, they are common in the literature. See Remarks \ref{rem:moderateLA} and \ref{rem:Growth} for further discussion.

To find her optimal investment and consumption policies, the agent needs to solve the following stochastic control problem
\begin{align}\label{eq:V1}
	V_0(w,h) = \sup_{(\Pi,C)\in\Ac_0(w)} \Eb\left[\int_0^{+\infty}\ee^{-\del t} U\left(\frac{C_t}{H_t}\right)\dd t\right],\quad w,h>0,
\end{align}
in which $\del>0$ is her discount rate for utility of consumption. This stochastic control problem has two state processes, namely $\{W_t\}_{t\ge0}$ and $\{H_t\}_{t\ge0}$. Next, we reduce the number of state processes to one by considering an equivalent form of \cref{eq:V1}.

Fix the values of $w,h>0$ and consider an arbitrary admissible policy $(\Pi,C)\in\Ac_0(w)$. Let $\{W_t\}_{t\ge0}$ and $\{H_t\}_{t\ge0}$ be the corresponding wealth and habit processes given by \cref{eq:W} and \cref{eq:H}, respectively. Note that, by \cref{eq:H}, $H_t\ge h\ee^{-\rho t}>0$. Thus, we can define the relative wealth process $\{X_t\}_{t\ge0}$, the relative investment process $\{\pi_t\}_{t\ge0}$, and the relative consumption process $\{c_t\}_{t\ge0}$, respectively, by
\begin{align}\label{eq:ratios}
	X_t:= \frac{W_t}{H_t}, \quad \pi_t:=\frac{\Pi_t}{H_t}, \quad c_t:= \frac{C_t}{H_t},  \quad t \geq 0.
\end{align}

Combining It\^o's formula, \cref{eq:W}, and \cref{eq:H}, we can write the dynamics of $X$ as
\begin{align}\label{eq:X}
	\dd X_t = \Big((r+\rho)X_t + \mu\pi_t - (1+\rho X_t)c_t\Big) \dd t + \sig \pi_t\dd B_t,\quad t\ge0,
\end{align}
with $X_0=x:=w/h>0$. With these observations in mind, we define the set of admissible relative consumption and investment controls as follows.

\begin{definition}\label{def:rel-admis}
	Given $x>0$, a progressively measurable process $(\pi,c)=\{(\pi_t,c_t)\}_{t\ge0}$ is an admissible relative investment and consumption policy if $\int_0^t\big(|c_u|+\pi_u^2\big)\dd u<+\infty$ and $X_t\ge0$ for all $t\ge0$, in which $\{X_t\}_{t\ge0}$ is the strong solution of \cref{eq:X}.
	We denote by $\Acr(x)$ the set of all admissible relative investment and consumption policies starting with relative wealth $x$.\qed
\end{definition}

As the next result shows, for any $h,w>0$, the sets $\Ac_0(w)$ and $\Acr(w/h)$ are equivalent in the sense that each admissible investment and consumption policy $(\Pi,C)\in\Ac_0(w)$ corresponds to an admissible relative investment and consumption policy $(\pi,c)\in\Acr(w/h)$, and vice versa. We omit its proof, which is a straightforward application of It\^o's formula.

\begin{lemma}\label{lem:WX}
	Assume that $w,h>0$. For any $(\Pi,C)\in\Ac_0(w)$, we have $(\Pi/H, C/H)\in \Acr(w/h)$. Conversely, if $(\pi,c)\in\Acr(h/w)$, then $(\pi W/X, c W/X)\in\Ac_0(w)$ in which the relative wealth $\{X_t\}_{t\ge0}$ is given by \cref{eq:X} and the wealth process $\{W_t\}_{t\ge0}$ is given by
	\begin{align*}
		\frac{\dd W_t}{W_t} = \left(r+\mu\frac{\pi_t}{X_t} - \frac{c_t}{X_t}\right)\dd t
		+\sig\frac{\pi_t}{X_t}\dd B_t,\quad t\ge0,
	\end{align*}
	with $W_0=w$.\qed
\end{lemma}

From Lemma \ref{lem:WX}, it follows that \cref{eq:V1} is equivalent to the following one-dimensional stochastic control problem
\begin{align}\label{eq:VF}
	V(x) = \sup_{(\pi,c)\in\Acr(x)} \Eb\left[\int_0^{+\infty}\ee^{-\del t} U\left(c_t\right)\dd t\right],\quad x>0,
\end{align}
in which the only state variable is the wealth-to-habit ratio $\{X_t\}_{t\ge0}$. Once we solved \cref{eq:VF}, we can use Lemma \ref{lem:WX} to obtain the optimal investment and consumption polices and the corresponding optimal wealth.

\begin{remark}\label{rem:V0_VF}
	In light of Lemma \ref{lem:WX} and by \cref{eq:V1} and \cref{eq:VF}, it follows that $V_0(w,h)=V(w/h)$ for $w,h>0$. That is, our dimension reduction notably does \emph{not} rely on a specific form of the utility functions $U_\pm(c)$. In particular, \cref{eq:V1} and \cref{eq:VF} are equivalent even if the utility functions $U_\pm(c)$ in \cref{eq:U} are not standard power, log, or exponential utilities.\qed
\end{remark}

%
%
\section{Methodology and Main Results}\label{sec:OptimalPolicies}

Our goal in this section is to solve the stochastic control problem \cref{eq:VF} under the general S-shaped multiplicative habit formation preference and to obtain the optimal relative investment and consumption policies in feedback form. To this end, we follow a three-step procedure described below: 

\noindent\textbf{Step-1}: We first consider the concavified version of \cref{eq:VF} by replacing the utility function $U(c)$ with its concave envelope $\Ut(c)$, see \cref{eq:Vt} below. We write down the Hamilton-Jacobi-Bellman (HJB) equation for the concavified problem, and transform it to a free-boundary problem \cref{eq:HJB-FBP1-moderate}  based on the conjectured optimal consumption control in 
\cref{eq:sup_c-moderate}. The arguments in this step are mainly motivational. Rigorous arguments are postponed to Steps 2 and 3.

\noindent\textbf{Step-2}: We solve the free-boundary problem and obtain a candidate value function $v(x)$ for the concavified problem (see corollary \ref{coro:solveHJB}). Along the way, we also establish several important properties of $v(x)$ and verify the conjectures in \textbf{Step-1}. 

\noindent\textbf{Step-3}: Based on the obtained properties of $v(x)$, we verify the optimality of the feedback controls and show that the solution $v(x)$ to the HJB equation coincides with both the value function $\Vt(x)$ of the concavified problem \cref{eq:Vt} and the value function $V(x)$ of the original problem \cref{eq:VF}. 

\subsection{Step-1: The concavified problem and its HJB equation}\label{subsec:candidatepolicies}
To study the non-concave stochastic control problem \cref{eq:VF} under the S-shaped utility, we follow the routine by considering its \emph{concavified} formulation (see \cite{Carp}), namely,
\begin{align}\label{eq:Vt}
	\Vt(x) = \sup_{(c,\pi)\in\Acr(x)} \Eb\left[\int_0^{+\infty}\ee^{-\del t} \Ut\left(c_t\right)\dd t\right],\quad x>0,
\end{align}
where $\Ut$  is the concave envelope of the S-shaped utility $U(c)$ in \cref{eq:U}, that is,
\begin{align}\label{eq:Ut_defined}
	\Ut(c) := \sup_{s,t}\left\{\frac{(t-c)U(t)+(c-s)U(s)}{t-s}\,:\, 0\le s\le c\le t\right\},\quad c\ge0.
\end{align} 

To this end, some growth conditions on the S-shaped utility are necessarily needed. In fact, Figure \ref{fig:U_smooth_nonsmooth} illustrates two possible choices of the utility function $U(c)$ along with their concave envelopes $\Ut(c)$ given by \cref{eq:Ut_defined}. It is worth noting here that, unlike most of the literature on loss-averse preferences, the general S-shaped utility allows for the loss-averse and the risk-averse parts of the utility to be of different classes. For example, in \cref{eq:U}, we may choose a power utility $U_+(c)=\frac{(c+\eps)^p}{p}-\frac{\eps^p}{p}$ for the risk-averse part, and an exponential utility $U_-(c) =1-\ee^{-q c}$ (or a log-utility $U_-(c)=\log\big((c+q)/q\big)$) for the loss-averse part, with $\eps>0$, $p<1$, and $q>0$. Furthermore, we allow for the S-shaped utility to be non-differentiable at the loss reference point (i.e. $U'(\al^-)\ne U'(\al^+)$), as in the case of the left plot in Figure \ref{fig:U_smooth_nonsmooth}.

%
%

\begin{figure}[htbp]
	\centerline{
			\adjustbox{trim={0.0\width} {0.02\height} {0.0\width} {0.0\height},clip}
			{\includegraphics[scale=0.325, page=1]{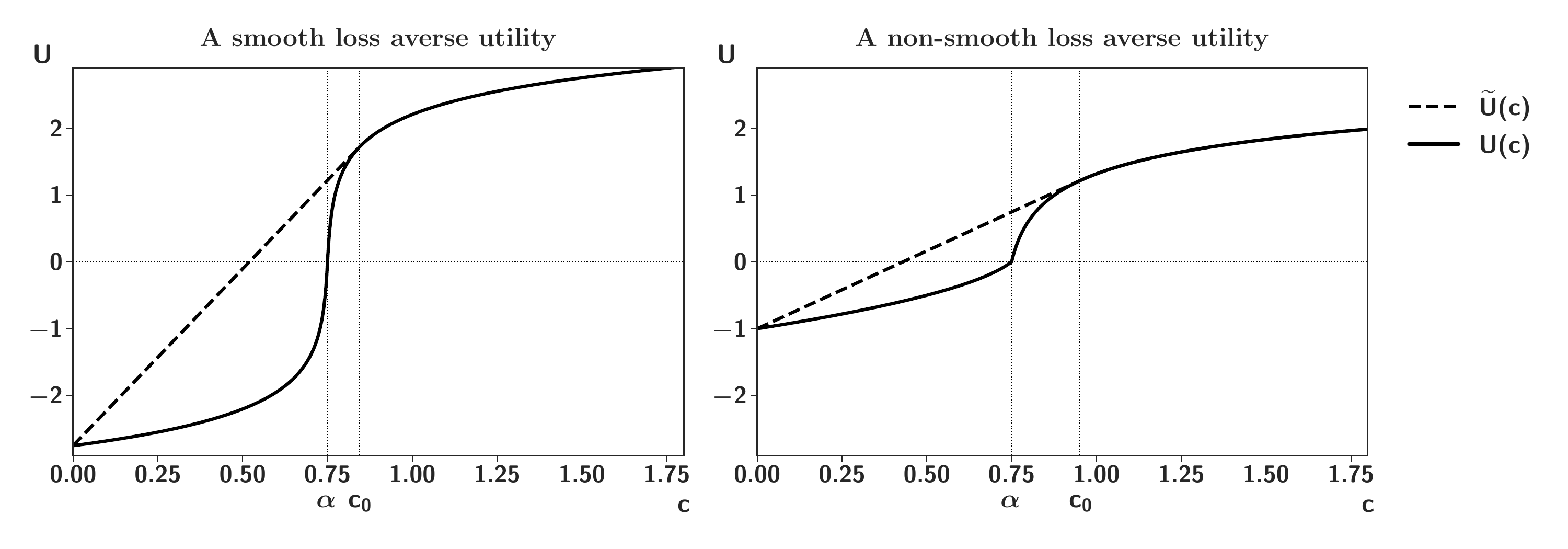}}
	}
	\caption{Plots of two possible S-shaped utility functions $U(c)$ (the solid curves) and their concave envelopes $\Ut(c)$ (the dashed curves), where we have set $\al=0.75$. Note that $U(c)$ and $\Ut(c)$ coincide on $[c_0,+\infty)$, and that $\Ut$ is linear on $[0,c_0]$. The constant $c_0$ is defined in Lemma \ref{lem:Ut} in Section \ref{sec:OptimalPolicies}, and is the unique solution of $c_0 U_+'(c_0-\al) - U_+(c_0-\al)=U_-(\al)$. Note, also, that the loss and gain utility functions $U_\pm$ in \cref{eq:Uc} can be different, and the S-shaped utility $U(c)$ can be non-differentiable at the reference point $\al$ (as in the right plot).
		\label{fig:U_smooth_nonsmooth}}
\end{figure}

We impose two conditions on the general utility function $U(c)$, which will be used later to obtain and verify the optimal controls.  The first assumption guarantees that $U(c)$ has a smooth concave envelope; see Lemma \ref{lem:Ut} below.
\begin{assumption}\label{assumption:moderateLA}
	The utility functions $U_\pm(c)$ in \cref{eq:U} satisfy $U_-(\al)\le\al U_+'(0)$.\qed
\end{assumption}

\begin{remark}\label{rem:moderateLA}
	Assumption \ref{assumption:moderateLA} is satisfied by the S-shaped utility functions commonly used in the literature, on which one of the following assumptions is typically imposed:	
	\begin{itemize}
		\item $U'(\al^+)=U_+'(0^+)=+\infty$, such as two-part power utilities (see \cref{eq:U_power}); or,
		\item $U(c)$ is differentiable at the reference point $\al$, that is, $U_+'(0)=U_-'(0)$.
	\end{itemize}
	Indeed, Assumption \ref{assumption:moderateLA} trivially holds under the first condition, while under the second condition, Assumption \ref{assumption:moderateLA} also holds because, by the concavity of $U_-(c)$, we have $U_-(\al)\le \al U_-'(0)=\al U_+'(0)$.\qed
\end{remark}

Our second assumption on the S-shaped utility function $U(c)$ is a growth condition on the gain utility $U_+(c)$ as $c\to+\infty$. This assumption is needed to facilitate the proof of the transversality condition for the stochastic control problem \cref{eq:VF}, see Lemma \ref{lem:trans} in Subsection \ref{subsec:verify}.
\begin{assumption}\label{assum:Growth}
	There exist constants $a_1,a_2, a_3>0$, and $a_4\ge0$ such that the gain utility function $U_+(\cdot)$ in \cref{eq:U} satisfies the growth condition $U_+(x) \le a_1+x U_+'(x)+a_2 U_+'(x)^{-a_4}$ for all $x>a_3$.\qed
\end{assumption}

\begin{remark}\label{rem:Growth}
	It is straightforward to check that Assumption \ref{assum:Growth} holds for the power, log, and Symmetric Asymptotic Hyperbolic Absolute Risk Aversion (SAHARA) utility functions (see \cref{eq:SAHARA}), as well as any upper bounded utility function such as the exponential utility. 
	In particular, note that we allow for the full range of power utilities, i.e. $U(x)=x^{1-\gam}/(1-\gam)$ for $\gam>0$. As pointed out in the introduction, existing results on multiplicative habit formation only consider power utilities for $\gam>1$.\qed
\end{remark}

We expect that problems \cref{eq:VF} and \cref{eq:Vt} are equivalent in the sense that $V(x)=\Vt(x)$ for $x>0$, and that they share a common optimal control $(\cs,\pis)$. These results will be proved in Theorem \ref{thm:verification_moderate} in Subsection \ref{subsec:verify} below.

The Hamilton-Jacobi-Bellman (HJB) equation corresponding to \cref{eq:Vt} is
\begin{align}\label{eq:HJB-concavified}
	&\sup_{\pi\in\Rb, c\ge 0}\left\{ \textstyle
	-\del v(x) + \big((r+\rho) x + \mu \pi - (1+\rho x) c\big) v'(x) + \frac{1}{2}\sig^2\pi^2 v^{\prime\prime}(x) + \Ut(c)
	\right\}= 0,
\end{align}
for $x>0$. Taking the ansatz $v^{\prime\prime}(x)<0$ for the solution of \cref{eq:HJB-concavified}, we have the maximizer
\begin{align}\label{eq:piStar}
	\pi^*(x):=\underset{\pi\in\Rb}{\arg\max}\left\{\frac{1}{2}\sig^2\pi^2 v^{\prime\prime}(x)+\mu \pi v'(x)\right\}
	=-\frac{\mu}{\sig^2}\frac{v'(x)}{v^{\prime\prime}(x)}.
\end{align}
The HJB equation \cref{eq:HJB-concavified} then becomes
\begin{align}\label{eq:HJB_c}
	&\sup_{c\ge 0}\left\{\Ut(c) - c(1+\rho x)v'(x)\right\}
	-\frac{\mu^2}{2\sig^2} \frac{v'(x)^2}{v^{\prime\prime}(x)}
	+(r+\rho) x v'(x) - \del v(x)=0,
	\quad x\ge0.\qquad
\end{align}
We use the next lemma to evaluate the term involving the maximization over $c$. The lemma relies on Assumption \ref{assumption:moderateLA} above. Its proof is straightforward and thus omitted.
\begin{lemma}\label{lem:Ut}
	Consider the S-shaped utility $U(c)$ in \cref{eq:U} with $\al>0$ and $U_\pm(c)$ satisfying \cref{eq:Uc} and Assumption \ref{assumption:moderateLA}. 
	The concave envelope $\Ut(c)$ in \cref{eq:Ut_defined} is given by
	\begin{align}\label{eq:Ut-moderate}
		\Ut(c)=
		\begin{cases}
			-U_-(\al)+\phi_0 c, &\quad 0\le c\le c_0,\\
			U(c)= U_+(c-\al), &\quad c>c_0,
		\end{cases}
	\end{align}
	in which $\phi_0:= U_+'(c_0-\al)$ and $c_0\ge \al$ is the unique solution of $c_0 U_+'(c_0-\al) - U_+(c_0-\al)=U_-(\al)$. 
	Furthermore, for any $\phi>0$, $\ch(\phi)$ given by
	\begin{align}\label{eq:Ut_c-moderate}
		\ch(\phi):=
		\begin{cases}
			0,&\quad \phi>\phi_0,\\
			\al+(U_+')^{-1}(\phi),&\quad 0<\phi\le \phi_0,
		\end{cases}\qquad
	\end{align}
	is a common maximizer for both problems $\displaystyle\max_{c\ge0} \left\{U(c)-\phi c\right\}$ and $\displaystyle\max_{c\ge0}\left\{\Ut(c)-\phi c\right\}$.\qed	
\end{lemma}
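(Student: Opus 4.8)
The plan is to construct the concave envelope $\Ut$ explicitly and then read off the maximizer from it. The starting observation is the shape of $U$: on the loss region $[0,\al]$ the map $c\mapsto -U_-(\al-c)$ is convex (because $U_-$ is concave), while on the gain region $[\al,+\infty)$ the map $c\mapsto U_+(c-\al)$ is concave, and $U$ is strictly increasing throughout. Hence the smallest concave majorant should be obtained by drawing, from the left endpoint $\big(0,U(0)\big)=\big(0,-U_-(\al)\big)$, the line tangent to the concave branch $c\mapsto U_+(c-\al)$, and then following $U$ itself past the point of tangency. If $c_0$ denotes the abscissa of tangency, then matching value and slope at $c_0$ gives $-U_-(\al)+\phi_0c_0=U_+(c_0-\al)$ and $\phi_0=U_+'(c_0-\al)$; eliminating $\phi_0$ yields precisely the defining relation $c_0U_+'(c_0-\al)-U_+(c_0-\al)=U_-(\al)$, and the candidate envelope is the right-hand side of \eqref{eq:Ut-moderate}.

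To make this rigorous I would proceed in four short steps. (a) \emph{Well-posedness of $c_0$}: set $g(c):=cU_+'(c-\al)-U_+(c-\al)-U_-(\al)$ on $[\al,+\infty)$; Assumption \ref{assumption:moderateLA} gives $g(\al)=\al U_+'(0)-U_-(\al)\ge0$, concavity of $U_+$ (so that $(c-\al)U_+'(c-\al)\le U_+(c-\al)$) together with $U_+'(+\infty)=0$ and $U_-(\al)>0$ gives $\limsup_{c\to+\infty}g(c)\le-U_-(\al)<0$, and $g'(c)=cU_+''(c-\al)<0$; hence $g$ has a unique zero $c_0\ge\al$, as in the footnote. (b) \emph{$\Ut$ is $C^1$ and concave}: its derivative is the constant $\phi_0$ on $[0,c_0]$ and equals $U_+'(c-\al)$ on $[c_0,+\infty)$, which matches $\phi_0$ at $c_0$ and is strictly decreasing afterwards, so $\Ut'$ is continuous and nonincreasing. (c) \emph{$\Ut\ge U$}: the two coincide on $[c_0,+\infty)$; on $[\al,c_0]$ the affine $\Ut$ is the tangent line at $c_0$ to the concave function $c\mapsto U_+(c-\al)$, hence lies above it, and in particular $\Ut(\al)\ge U(\al)=0$; on $[0,\al]$ the convex function $U$ lies below the affine function $\Ut$ since $\Ut(0)=U(0)$ and $\Ut(\al)\ge U(\al)$. (d) \emph{Minimality}: if $\psi$ is any concave majorant of $U$, then $\psi\ge U=\Ut$ on $[c_0,+\infty)$, while on $[0,c_0]$ concavity of $\psi$ together with $\psi(0)\ge\Ut(0)$ and $\psi(c_0)\ge\Ut(c_0)$ forces $\psi$ above the chord $\Ut$. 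Thus $\Ut$ is the concave envelope, which is \eqref{eq:Ut-moderate}.

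For the maximizer, I would solve the first-order condition for the concave $C^1$ function $\Ut$. Since $\Ut'$ equals $\phi_0$ on $[0,c_0]$ and decreases continuously from $\phi_0$ to $0$ on $[c_0,+\infty)$, the stationarity equation $\Ut'(c)=\phi$ is solved by $c=0$ when $\phi>\phi_0$ and by $c=\al+U_+^{\prime(-1)}(\phi)$ when $0<\phi\le\phi_0$, i.e. by $\ch(\phi)$ of \eqref{eq:Ut_c-moderate}, and by concavity this is a global maximizer of $\Ut(c)-\phi c$. To transfer this to $U$, note that $\ch(\phi)$ always lands in the set $\{0\}\cup[c_0,+\infty)$ where $\Ut=U$, so $\Ut(\ch(\phi))=U(\ch(\phi))$; combined with $\Ut\ge U$ this gives $U(\ch(\phi))-\phi\ch(\phi)=\max_{c\ge0}\{\Ut(c)-\phi c\}\ge\max_{c\ge0}\{U(c)-\phi c\}$, and the reverse inequality is trivial, so $\ch(\phi)$ also maximizes $U(c)-\phi c$.

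The only genuinely delicate point is step (c): the candidate envelope is determined solely by the left endpoint and the tangency condition, so one must verify \emph{a posteriori} that it still dominates $U$ on the loss region. This is exactly where the elementary fact that the tangent line to a concave function lies above its graph is combined with Assumption \ref{assumption:moderateLA}, which guarantees $c_0\ge\al$ so that the tangency indeed occurs on the concave branch of $U$; the remaining steps are routine monotonicity and convexity bookkeeping.
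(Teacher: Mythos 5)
Your proof is correct. The paper actually omits the proof of this lemma (stating only that it is ``straightforward''), giving just the footnote argument for existence and uniqueness of $c_0$, which your step (a) reproduces exactly; your remaining steps (tangency construction, verification that the affine piece majorizes the convex loss branch via Assumption \ref{assumption:moderateLA}, minimality over concave majorants, and the transfer of the maximizer from $\Ut$ to $U$ using $\Ut(\ch(\phi))=U(\ch(\phi))$) supply precisely the standard argument the authors had in mind, and each step checks out.
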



With \cref{eq:Ut_c-moderate} in mind, we speculate that there exists a (yet to be determined) constant $x_0\ge0$  such that $v(x)$ (i.e. the solution of \cref{eq:HJB_c}) satisfies
\begin{align}\label{eq:Ansatz2-moderate}
	\begin{cases}
		(1+\rho x)v'(x)> \phi_0,\quad 0\le x < x_0,\\
		(1+\rho x_0)v'(x_0)= \phi_0,\\
		0<(1+\rho x)v'(x)< \phi_0,\quad x> x_0,
	\end{cases}
\end{align}
for the constant $\phi_0$ given in Lemma \ref{lem:Ut}. These conjectured inequalities will be verified later in Corollary \ref{coro:solveHJB} in Subsection \ref{subsec:HJB_sol}. Setting $\phi= (1+\rho x)v'(x)$ in \cref{eq:Ut_c-moderate} and using \cref{eq:Ansatz2-moderate}, we readily get that
\begin{align}\label{eq:sup_c-moderate}
	\sup_{c\ge 0}\left\{\Ut(c) - c(1+\rho x)v'(x)\right\}=
	\begin{cases}
		\Ut(0)=U(0)
		,&\quad 0\le x < x_0,\\
		G\big((1+\rho x)v'(x)\big),&\quad x\ge x_0,
	\end{cases}\qquad
\end{align}
where we define
\begin{align}\label{eq:G}
	G(\phi) &:= 
	U_+\big((U_+')^{-1}(\phi)\big) - \phi \big(\al+(U_+')^{-1}(\phi)\big),
	\quad 0<\phi\le \phi_0.\qquad
\end{align}
For future reference, we also note that the maximizer of $c$ in \cref{eq:sup_c-moderate} is given by
\begin{align}\label{eq:cs-moderate}
	\cs(x):=
	\begin{cases}
		0,&\quad 0\le x < x_0,\\
		\al+(U_+')^{-1}\big((1+\rho x)v'(x)\big),&\quad x\ge x_0.
	\end{cases}\qquad
\end{align}

\begin{remark}
	Note that the maximizer $\cs(x)$ in \cref{eq:cs-moderate} has a jump at $x=x_0$ because $\lim_{x\to x_0^-}\cs(x)=0$, but
	\begin{align*}
		\cs(x_0)=\al+(U_+')^{-1}\big((1+\rho x_0)v'(x_0)\big)
		=\al+(U_+')^{-1}\big(\phi_0\big)=c_0\ge \al >0.
	\end{align*}
	Here, we have used \cref{eq:Ansatz2-moderate} and that $\phi_0:=U_+'(c_0-\al)$ with the constant $c_0$ in Lemma \ref{lem:Ut}. 
	Note also that there is no discontinuity in the optimal value given by \cref{eq:sup_c-moderate} at $x=x_0$. This can be confirmed as follows. By \cref{eq:Ansatz2-moderate} and \cref{eq:G}, we have
	\begin{align}\nonumber
		G\big((1+\rho x_0)v'(x_0)\big)
		&=G(\phi_0)
		=U_+\big((U_+')^{-1}(\phi_0)\big) - \phi_0 \big(\al+(U_+')^{-1}(\phi_0)\big)\\
		\label{eq:Psi-phi0}
		&=U_+\big(c_0-\al\big) - U_+'(c_0-\al) c_0
		=-U_-(\al)=U(0),
	\end{align}
	in which the second-to-last step follows from $c_0 U_+'(c_0-\al) - U_+(c_0-\al)=U_-(\al)$ (which holds by the definition of $c_0$ in Lemma \ref{lem:Ut}), and the last step follows from \cref{eq:Ut-moderate}.\qed
\end{remark}

By using \cref{eq:sup_c-moderate}, the HJB equation \cref{eq:HJB_c} becomes the following free-boundary problem for the function $v(x)$ and with the free boundary $x_0\ge 0$,
\begin{align}\label{eq:HJB-FBP1-moderate}
	&\begin{cases}
		\displaystyle
		-\frac{\mu^2}{2\sig^2} \frac{v'(x)^2}{v^{\prime\prime}(x)}
		+(r+\rho) x v'(x) - \del v(x) + U(0)=0
		,&\quad 0\le x < x_0,\vspace{1ex}\\
		\displaystyle
		-\frac{\mu^2}{2\sig^2} \frac{v'(x)^2}{v^{\prime\prime}(x)}
		+ G\big((1+\rho x)v'(x)\big)
		+ (r+\rho) x v'(x) - \del v(x)=0,&\quad x\ge x_0,\vspace{1ex}\\
		(1+\rho x_0)v'(x_0) = \phi_0,\vspace{1ex}\\
		\displaystyle
		\lim_{x\to0^+}\frac{v'(x)}{v^{\prime\prime}(x)} = 0,
	\end{cases}
\end{align}
in which $G(\phi)$ is given by \cref{eq:G} and $\phi_0$ is defined in Lemma \ref{lem:Ut}. In \cref{eq:HJB-FBP1-moderate}, the free boundary conditions at $x=x_0$ follows from \cref{eq:Ansatz2-moderate}. Furthermore, in light of \cref{eq:piStar}, the initial condition at $x\to0^+$ means that once the wealth hits zero, it is optimal not to invest in the risky asset (in fact, this is the only admissible investment policy at $x=0$).

By reversing the argument leading to \cref{eq:HJB-FBP1-moderate}, it follows that if a solution $\big(v(x),x_0\big)$ of the free-boundary problem \cref{eq:HJB-FBP1-moderate} satisfies  \cref{eq:Ansatz2-moderate} and $v^{\prime\prime}(x)<0$ for $x>0$, then $v(x)$ is a solution of the HJB equation \cref{eq:HJB_c}. In the next subsection, we establish the existence of a solution $\big(v(x),x_0\big)$ to this free boundary problem, see Corollary \ref{coro:solveHJB}.

\subsection{Step-2: Auxiliary nonlinear free boundary problems} \label{subsec:HJB_sol}
To study problem \cref{eq:HJB-FBP1-moderate}, we first consider the Legendre transform
\begin{align}\label{eq:Legendre}
	u(y) := \sup_{x>0}\big\{v(x)-xy\big\}, y>0,
\end{align}
which implies the following relationships between $v(x)$ and $u(y)$
\begin{align}\label{eq:Legendre-Relations}
	\begin{cases}
		v\big((v')^{-1}(y)\big) = u(y) - y u'(y),\\
		(v')^{-1}(y) = -u'(y),\\
		v^{\prime\prime}\big((v')^{-1}(y)\big)=-1/u^{\prime\prime}(y).
	\end{cases}
\end{align}
By applying the change of variable $y=v'(x)~\Leftrightarrow~x=(v')^{-1}(y)$ and using \cref{eq:Legendre-Relations}, we can rewrite the nonlinear free-boundary problem \cref{eq:HJB-FBP1-moderate} for $x\geq 0$ as the dual nonlinear free-boundary problem for $u(y)$ for $y>0$ with the free boundary $y_0=v'(x_0)>0$,
\begin{align}
	&\displaystyle\label{eq:modFBPDual1}
	\frac{\mu^2}{2\sig^2} y^2 u^{\prime\prime}(y) + (\del - r-\rho) y u'(y) - \del u(y) + U(0)=0,
	\hspace{10.5em} y> y_0,\\
	&\displaystyle\label{eq:modFBPDual2}
	\frac{\mu^2}{2\sig^2} y^2 u^{\prime\prime}(y) + G\big(y-\rho y u'(y)\big)
	+ (\del - r - \rho) y u'(y) - \del u(y)=0,
	\hspace{4.25em}  0< y \le y_0,\\
	&\displaystyle\label{eq:modFBPDual3}
	y_0 - \rho y_0 u'(y_0) = \phi_0,\\
	\intertext{and}
	\label{eq:modFBPDual4}
	&\lim_{y\to+\infty} u'(y) = \lim_{y\to+\infty}y u^{\prime\prime}(y) = 0.
\end{align}

One advantage of working with the dual problem \cref{eq:modFBPDual1}--\cref{eq:modFBPDual4} (instead of the original nonlinear free boundary problem \cref{eq:HJB-FBP1-moderate}) lies in the fact that \cref{eq:modFBPDual1} is a linear Euler equation, which admits a general solution of the explicit form
\begin{align}\label{eq:Euler-moderate}
	u(y) = A_1 y^{\lam} + A_2 y^{\lam'} + \frac{U(0)}{\del}, &\quad y>y_0,
\end{align}
in which $A_1$ and $A_2$ are constants to be determined, 
\begin{align*}
	\lam &:= \frac{\sig^2}{\mu^2}\left( \frac{\mu^2}{2\sig^2} + r + \rho - \del - \sqrt{\left(\frac{\mu^2}{2\sig^2} + r + \rho - \del\right)^2 + 2\del\frac{\mu^2}{\sig^2}}\right)
	\in\left(-\frac{2\del\sig^2}{\mu^2},0\right),
	\intertext{and}
	\lam'&:= -\frac{2\del\sig^2}{\mu^2\lam}>1.
\end{align*}

\begin{remark}\label{rem:lam}
	That
	\begin{align}\label{eq:lam_bounds}
		-\frac{2\del\sig^2}{\mu^2}<\lam<0,\quad \text{and}\quad \lam'>1,
	\end{align}
	are shown as follows. Note that $\lam$ and $\lam'$ are, respectively, the positive and the negative roots of the quadratic equation $f(x):=\mu^2x^2/(2\sig^2)-\big(\mu^2/(2\sig^2)+r+\rho-\del\big)x - \del=0$. As $f(-2\del\sig^2/\mu^2)=2(r+\rho)\del\sig^2/\mu^2>0$ and $f(0)=-\del<0$, it follows that $-2\del\sig^2/\mu^2<\lam<0$ which, in turn, yields that $\lam':= -2\del\sig^2/(\mu^2\lam)>1$. 
	\qed
\end{remark}

Differentiating \cref{eq:Euler-moderate}  yields
$u'(y) = \lam\,A_1 y^{\lam - 1 } + \lam'A_2y^{\lam' - 1}$ and
$yu^{\prime\prime}(y) = \lam(\lam-1)A_1 y^{\lam - 1 } + \lam'(\lam' -1)A_2 y^{\lam' - 1}$, for $y>y_0$. 
In light of \cref{eq:lam_bounds}, the boundary condition \cref{eq:modFBPDual4} holds only if $A_2=0$. 
From \cref{eq:modFBPDual3}, we then obtain the value of $A_1$. That is,
\begin{align*}
	\phi_0 = y_0 - \rho y_0 u'(y_0)
	=y_0 - \rho y_0  \lam A_1 y_0^{\lam -1}
	\quad \Longrightarrow\quad
	A_1  = \frac{y_0 - \phi_0}{\rho \lam\, y_0^{\lam}}.
\end{align*}
Substituting $A_1$ and $A_2$ back into \cref{eq:Euler-moderate} yields
\begin{align}\label{eq:u1-moderate}
	u(y) = \frac{y_0 - \phi_0}{\rho \lam} \left(\frac{y}{y_0}\right)^{\lam} + \frac{U(0)}{\del}, &\quad y>y_0.
\end{align}

Thus, by the smooth-fit principle, we can reduce the free-boundary problem \cref{eq:modFBPDual1}--\cref{eq:modFBPDual4} to the following free-boundary problem only for $0<y\le y_0$ with the free boundary $y_0>0$,
\begin{align}\label{eq:modFBPDual-reduced}
	\begin{cases}\displaystyle
		\frac{\mu^2}{2\sig^2} y^2 u^{\prime\prime}(y)
		+ G\big(y-\rho y u'(y)\big) 
		+ (\del - r - \rho) y u'(y) - \del u(y)=0,
		&\quad 0< y \le y_0,\vspace{1ex}\\
		\displaystyle
		y_0 - \rho y_0 u'(y_0) = \phi_0,\vspace{1ex}\\
		\displaystyle
		u(y_0) = \frac{y_0 - \phi_0}{\rho \lam} + \frac{U(0)}{\del}.
	\end{cases}
\end{align}

In particular, the additional explicit free boundary condition $u(y_0) = \frac{y_0 - \phi_0}{\rho \lam} + \frac{U(0)}{\del}$ will play an important role in showing the existence of the classical solution to \cref{eq:modFBPDual-reduced}. Once this free-boundary problem is solved, we in turn obtain a solution of the free-boundary problem \cref{eq:modFBPDual1}--\cref{eq:modFBPDual4} by pasting the solution in \cref{eq:u1-moderate}.

We analyze the second-order differential equation in \cref{eq:modFBPDual-reduced} by transforming it into a first-order system. Consider the change of variables
\begin{align}
	\varphi(y)&:= y-\rho y u'(y),&&\quad0<y<y_0,\label{eq:u_phi_alt}\\
	\psi(y)&:= \frac{\rho y u''(y)}{1-\rho u'(y)},&&\quad0<y<y_0,\nonumber
	\intertext{which is equivalent to}
	u'(y)&=\frac{y-\varphi(y)}{\rho y}, &&\quad 0<y<y_0, \label{up_phi_main-alt}\\
	u^{\prime\prime}(y)&=\frac{1}{\rho y^2} \varphi(y)\psi(y), &&\quad 0<y<y_0.\label{upp_phi_psi_main-alt}
\end{align}
Differentiating \cref{eq:u_phi_alt} and substituting the expressions for $u'$ and $u''$ from \cref{up_phi_main-alt}--\cref{upp_phi_psi_main-alt} yields
\begin{align}\label{eq:phi-ODE-alt}
	\varphi'(y)=\frac{1}{y}\,\varphi(y)\bigl(1-\psi(y)\bigr),
	\qquad 0<y\le y_0 .
\end{align}
Next, substituting \cref{up_phi_main-alt}--\cref{upp_phi_psi_main-alt} into the differential equation \cref{eq:modFBPDual-reduced} gives
\begin{align*}
	\delta\,u(y)
	= \frac{\mu^2}{2\rho\sigma^2}\,\varphi(y)\psi(y)
	+ G\bigl(\varphi(y)\bigr)
	+ \frac{\delta-r-\rho}{\rho}\bigl(y-\varphi(y)\bigr),
	\qquad 0<y\le y_0.
\end{align*}
Differentiating this identity, using $G'(\phi)=-\alpha-(U_+')^{-1}(\phi)$ from \cref{eq:G}, and substituting $u'$ from \cref{up_phi_main-alt} together with $\varphi'$ from \cref{eq:phi-ODE-alt}, we obtain
\begin{align}\label{eq:psi-ODE-alt}
	\psi'(y) = \textstyle-\frac{2\rho\sig^2}{\mu^2} 
	\left[
	\frac{1-\psi(y)}{y}
	\left(\frac{\mu^2}{2\rho\sig^2}\psi(y) - (U_+')^{-1}\big(\varphi(y)\big)+\frac{r-\del}{\rho}+1-\al\right)
	-\frac{r+\rho}{\rho \varphi(y)} + \frac{\del}{\rho y}\right],
\end{align}
for $0<y\le y_0$. Equations \cref{eq:phi-ODE-alt}–\cref{eq:psi-ODE-alt} therefore form a system of first–order differential equations for $\varphi$ and $\psi$.

The following theorem is our first main result of this section. It addresses the solution of the nonlinear free-boundary problem \cref{eq:modFBPDual-reduced} by considering another auxiliary system of first-order free boundary problems. In particular, we first introduce the auxiliary functions $\varphi(y)$ and $\psi(y)$ as the solution of \cref{eq:ODESys-moderate}, for which we establish several important properties that will be further used in future verification arguments. Its proof is technical and lengthy, which is postponed to Subsections \ref{app:Aux} and \ref{proof:moderate_dual_FBP}.
\begin{theorem}\label{thm:DualFBP-moderate}
	$(i)$ There exists a constant $y_0\in\left(\frac{\phi_0\lam}{\lam\,-\,1},\phi_0\right)$, an increasing function $\varphi:(0,y_0]\to (0,\phi_0)$, and a function $\psi:(0,y_0]\to(0,1)$ satisfying the coupled system of first-order free boundary ODEs 
	\begin{align}\label{eq:ODESys-moderate}
		\begin{cases}
			\varphi'(y) = \frac{1}{y}\varphi(y)\big(1-\psi(y)\big),\\[1ex]
			\psi'(y) = -\frac{2\rho\sig^2}{\mu^2} 
			\left[
			\frac{1-\psi(y)}{y}
			\left(\frac{\mu^2}{2\rho\sig^2}\psi(y) - (U_+')^{-1}\big(\varphi(y)\big)+\frac{r-\del}{\rho}+1-\al\right)
			-\frac{r+\rho}{\rho \varphi(y)} + \frac{\del}{\rho y}
			\right],
		\end{cases}
	\end{align}
	for $0<y\le y_0$ with the free boundary conditions
	\begin{align}\label{eq:FB-moderate}
		\begin{cases}
			\varphi(y_0) = \phi_0,\vspace{1ex}\\
			\psi(y_0)
			= \frac{2\sig^2}{\phi_0\mu^2}\left(\frac{\del}{ \lam} +r + \rho-\del\right) (y_0-\phi_0).
		\end{cases}
	\end{align}
	Moreover, at least one of these equations holds: $\lim_{y\to 0+} \varphi(y)>0$ or $\lim_{y\to 0+}\psi(y)=1$.\vspace{1em}
	
	\noindent$(ii)$ Given the solution $(\varphi(y), \psi(y), y_0)$ in part $(i)$, 
	let us define	
	\begin{align}\label{eq:u_phi_psi_moderate}
		u(y) := \frac{1}{\del}\left[
		\frac{\mu^2}{2\rho\sig^2} \varphi(y)\psi(y)
		+ G\big(\varphi(y)\big) +\frac{\del - r - \rho}{\rho} \big(y-\varphi(y)\big)
		\right],
		\quad 0<y\le y_0.
	\end{align}
	It then holds that $\big(u(y),y_0\big)$ is a solution of the free-boundary problem \cref{eq:modFBPDual-reduced}. Furthermore, we have that
	\begin{align}
		&u'(y)=\frac{y-\varphi(y)}{\rho y}, \quad 0<y<y_0, \label{up_phi_main}\\
		&u^{\prime\prime}(y)=\frac{1}{\rho y^2} \varphi(y)\psi(y), \quad 0<y<y_0,\label{upp_phi_psi_main}
	\end{align}
	$u(y)$ is decreasing and convex on $(0,y_0)$, and $\lim_{y\to 0+}u'(y)=-\infty$.\qed
\end{theorem}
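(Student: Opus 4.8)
The plan is to reduce the second-order free-boundary problem \eqref{eq:modFBPDual-reduced} to the first-order system \eqref{eq:ODESys-moderate} via the change of unknowns $\varphi(y):=y-\rho y u'(y)$ (the quantity already appearing as the argument of $G$ in \eqref{eq:modFBPDual-reduced}) and $\psi(y):=\rho y^2u''(y)/\varphi(y)$. Differentiating $\varphi$ and eliminating $u''$ with $\rho y u''=\varphi\psi/y$ gives at once $\varphi'=\varphi(1-\psi)/y$; solving the PDE of \eqref{eq:modFBPDual-reduced} for $y^2u''$ and re-expressing everything in $\varphi,\psi$ produces precisely the formula \eqref{eq:u_phi_psi_moderate} for $u$; differentiating that identity once more — using $G'(\phi)=-(\al+(U_+')^{-1}(\phi))$, which follows by differentiating \eqref{eq:G} (the $w'$-terms cancel since $U_+'(w)=\phi$ with $w=(U_+')^{-1}(\phi)$) — and simplifying with the first equation yields the second equation of \eqref{eq:ODESys-moderate}. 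Under this correspondence the free-boundary data match up: $\varphi(y_0)=y_0-\rho y_0u'(y_0)=\phi_0$ is the second line of \eqref{eq:modFBPDual-reduced}, while the prescribed value of $u(y_0)$ there, combined with \eqref{eq:u_phi_psi_moderate} and the identity $G(\phi_0)=U(0)$ from \eqref{eq:Psi-phi0}, pins down $\psi(y_0)$; rewriting the resulting expression with the quadratic equation defining $\lam$ (Remark \ref{rem:lam}) puts it in the form stated in \eqref{eq:FB-moderate}. So it suffices to build the solution of the first-order system, after which part $(ii)$ is essentially the verification that this correspondence runs backwards.

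For part $(i)$ I would treat $y_0$ as a shooting parameter over the interval $\big(\tfrac{\phi_0\lam}{\lam-1},\phi_0\big)$ — exactly the range for which the data $\varphi(y_0)=\phi_0$ and $\psi(y_0)$ as in \eqref{eq:FB-moderate} satisfy $0<\psi(y_0)<1$ and $y_0<\phi_0$ — and solve \eqref{eq:ODESys-moderate} backwards from $y_0$. Away from $y=0$, and as long as $\varphi$ stays strictly between $0$ and $U_+'(0^+)$, the right-hand side is locally Lipschitz in $(\varphi,\psi)$, so Picard--Lindel\"of gives a unique local solution, extended to a maximal interval $(y_{\min},y_0]$. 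The heart of the matter is to show the rectangle $\{0<\varphi\le\phi_0,\ 0<\psi<1\}$ is invariant under the backward flow: on $\{\psi=1\}$ the first equation forces $\varphi'=0$ and one checks the sign of $\psi'$ there; on $\{\psi=0\}$ and on $\{\varphi=\phi_0\}$ one checks the corresponding one-sided inequalities for the field; and $\psi<1$ keeps $\varphi$ increasing, hence confined to $(0,\phi_0]$. Since $(\varphi,\psi)$ then remains in a compact set on which the field is bounded, the solution can only fail to continue at $y=0$, so $y_{\min}=0$ — after, if needed, selecting the value of $y_0$ that prevents $\psi$ from exiting $(0,1)$ prematurely — and $\varphi,\psi$ extend to $0$; as $\varphi$ is increasing, $\varphi(0):=\lim_{y\to0^+}\varphi(y)\ge0$ exists automatically. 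The asymptotic dichotomy then comes from integrating $(\log\varphi)'=(1-\psi)/y$: if $\int_0(1-\psi(s))/s\,ds<\infty$ then $\varphi(0)>0$, and in the complementary case $\varphi(0)=0$ a closer look at the second equation near $y=0$ — where $(U_+')^{-1}(\varphi)\to+\infty$ because $U_+'(+\infty)=0$, and both $1/\varphi$ and $1/y$ blow up — is needed to force $\psi(0)=1$. This last step, together with the monotonicity of $\psi$ and its $(0,1)$-confinement all the way down to $0$, is the main obstacle, because of the coupling of the two equations and the genuine singularities at $y=0$ and at $\varphi=0$; I expect it to require separate comparison estimates in the regimes $\varphi(0)>0$ and $\varphi(0)=0$.

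For part $(ii)$, given $(\varphi,\psi,y_0)$ from part $(i)$ I define $u$ by \eqref{eq:u_phi_psi_moderate} and differentiate. Substituting both equations of \eqref{eq:ODESys-moderate} (the $\varphi'$-terms cancel after using $G'(\varphi)=-(\al+(U_+')^{-1}(\varphi))$) gives $u'(y)=(y-\varphi(y))/(\rho y)$, i.e.\ \eqref{up_phi_main}; differentiating once more and using $\varphi'=\varphi(1-\psi)/y$ gives $u''(y)=\varphi(y)\psi(y)/(\rho y^2)$, i.e.\ \eqref{upp_phi_psi_main}. Then $y-\rho y u'(y)=\varphi(y)$ and $\tfrac{\mu^2}{2\sig^2}y^2u''(y)=\tfrac{\mu^2}{2\rho\sig^2}\varphi\psi$, so the PDE of \eqref{eq:modFBPDual-reduced} reduces to the identity $\del u=\tfrac{\mu^2}{2\rho\sig^2}\varphi\psi+G(\varphi)+\tfrac{\del-r-\rho}{\rho}(y-\varphi)$, which holds by the very definition of $u$; the two boundary conditions follow from $\varphi(y_0)=\phi_0$, the value of $\psi(y_0)$ in \eqref{eq:FB-moderate}, and $G(\phi_0)=U(0)$, exactly as in the matching computation above. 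Finally, $u''>0$ is immediate from \eqref{upp_phi_psi_main} and $\varphi,\psi>0$; for monotonicity I show $\varphi(y)>y$ on $(0,y_0)$ by a barrier argument — at any point where $\varphi(y)=y$ one has $(\varphi-y)'=\varphi'-1=(1-\psi)-1=-\psi<0$, so $\varphi-y$ crosses $0$ at most once and, being positive at $y_0$, stays positive on $(0,y_0)$ — whence $u'(y)=(y-\varphi(y))/(\rho y)<0$; and since $\tfrac{d}{dy}\log(\varphi(y)/y)=-\psi(y)/y<0$, the ratio $\varphi(y)/y$ increases as $y\downarrow0$, so $\varphi(y)/y\to+\infty$ when $\varphi(0)>0$, while when $\psi(0)=1$ the bound $\psi\ge\tfrac12$ near $0$ gives $\int_0\psi(s)/s\,ds=+\infty$ and again $\varphi(y)/y\to+\infty$; in all cases $u'(y)=\tfrac1\rho\big(1-\varphi(y)/y\big)\to-\infty$ as $y\to0^+$.
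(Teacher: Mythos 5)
Your reduction of \eqref{eq:modFBPDual-reduced} to the first-order system, the matching of the boundary data (including the derivation of $\psi(y_0)$ from $u(y_0)$ via $G(\phi_0)=U(0)$ and the quadratic for $\lam$), and all of part $(ii)$ are correct and essentially identical to the paper's argument; your derivation of $\lim_{y\to0^+}u'(y)=-\infty$ in the case $\psi(0)=1$ via $\frac{\dd}{\dd y}\log(\varphi(y)/y)=-\psi(y)/y$ and divergence of $\int_0\psi(s)s^{-1}\dd s$ is in fact cleaner than the paper's contradiction-plus-L'H\^opital argument. The gaps are both in part $(i)$, and they are precisely the two hardest steps of the theorem. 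First, the claimed invariance of the rectangle $\{0<\varphi\le\phi_0,\,0<\psi<1\}$ under the backward flow is false: the paper's Lemma \ref{lemma:exit_FBP_moderate}.(iii) shows that for $\yb$ near $\gam\phi_0$ the backward solution exits through $\psi=1$ at a positive $y$, and for $\yb$ near $\phi_0$ it exits through $\psi=0$ at a positive $y$. The vector field does not point inward on those faces for generic $\yb$, so there is nothing to prevent premature exit; the entire content of the existence proof is the selection of the critical $y_0$, which you defer with ``after, if needed, selecting the value of $y_0$ that prevents $\psi$ from exiting $(0,1)$ prematurely.'' Making that selection rigorous requires (a) a comparison/monotonicity result in $\yb$ (the paper compares against the barrier $(\phi_0,\Psi(y))$ using $\psi_{\yb}\le\Psi$), (b) a proof that solutions cannot exit through $\varphi=0$ (the paper compares against $\phit(y)=y$; note your ``compact set on which the field is bounded'' fails near $\varphi=0$ because of the $-(r+\rho)/(\rho\varphi)$ term, so confinement away from $\varphi=0$ must be proved, not assumed), and (c) a genuinely delicate argument that the supremal $y_0$ itself yields $\eps(y_0)=0$ — the paper rules out the two alternatives ($\min\psi_{y_0}>0$ and $\min\psi_{y_0}=0$) by separate contradictions, the second requiring an extension of the flow to $\{\psi<1\}$ and a sign analysis at consecutive zeros of $\psi$.

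Second, the dichotomy ``$\varphi(0)>0$ or $\psi(0)=1$'' is not established: you correctly observe that $\varphi(0)=0$ forces $\int_0(1-\psi(s))s^{-1}\dd s=+\infty$, but you then state that forcing $\psi(0)=1$ ``is the main obstacle'' and only expect it to follow from unspecified comparison estimates. This is not a minor loose end — the conclusion is used critically in Lemma \ref{lem:trans} and Theorem \ref{thm:verification_moderate} — and it does not follow from the integral divergence alone (that only gives $\limsup_{y\to0^+}\psi(y)<1$ is impossible along a full neighborhood, not that the limit exists and equals $1$). The paper's proof passes to $z=\log y$ and splits into a monotone case (where a finite limit $c<1$ forces $\psit'\to+\infty$, contradicting boundedness) and an oscillatory case (where one extracts a decreasing sequence of stationary points $z_n^*\to-\infty$, reads off a quadratic equation $\psit(z_n^*)^2+b_n\psit(z_n^*)+c_n=0$ with $b_n\to-\infty$, $c_n\to+\infty$, $b_n/c_n\to-1$, and deduces $\psit(z_n^*)\to1$, hence $\liminf\psit\ge1$). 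Without an argument of this kind your proof of part $(i)$ is incomplete.
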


\begin{proof}
	See Subsections \ref{app:Aux} and \ref{proof:moderate_dual_FBP}.
\end{proof}

As a result, we obtain the solution of the free-boundary problem \cref{eq:modFBPDual1}--\cref{eq:modFBPDual4}.

\begin{corollary}\label{coro:solution_dual}
	Let $y_0$ be the constant in Theorem \ref{thm:DualFBP-moderate}.(i), and assume that $u:(0,\infty)\to \R$ is given by \cref{eq:u1-moderate} and \cref{eq:u_phi_psi_moderate}. Then, $(u(y), y_0)$ solves the free boundary problem \cref{eq:modFBPDual1}--\cref{eq:modFBPDual4}. Furthermore, $u\in C^2(0,\infty)$ is strictly decreasing and strictly convex, $u'(0^+)=-\infty$, and $u'(+\infty)=0$.\qed
\end{corollary}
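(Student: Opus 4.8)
The plan is to build $u$ by gluing the two explicit pieces furnished by Theorem~\ref{thm:DualFBP-moderate}, check that the result is a $C^2$ solution of the full dual free-boundary problem \eqref{eq:modFBPDual1}--\eqref{eq:modFBPDual4}, and then read off the qualitative properties, most of which come almost for free. On $(0,y_0]$ there is nothing new: Theorem~\ref{thm:DualFBP-moderate}.(ii) already asserts that the function defined by \eqref{eq:u_phi_psi_moderate} solves the reduced problem \eqref{eq:modFBPDual-reduced}, whose governing ODE is exactly \eqref{eq:modFBPDual2}, whose free-boundary equation is \eqref{eq:modFBPDual3}, and whose third line records the value-matching identity $u(y_0)=\frac{y_0-\phi_0}{\rho\lam}+\frac{U(0)}{\del}$; the same theorem hands us the representations \eqref{up_phi_main}--\eqref{upp_phi_psi_main}, convexity and monotonicity of $u$ on $(0,y_0)$, and $u'(0^+)=-\infty$. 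On $(y_0,\infty)$ the function \eqref{eq:u1-moderate} is, by the derivation preceding it, the unique solution of the linear Euler equation \eqref{eq:modFBPDual1} that is compatible with \eqref{eq:modFBPDual4} (this forced $A_2=0$) and satisfies \eqref{eq:modFBPDual3} at $y_0$ (this fixed $A_1$ via \eqref{eq:A1-moderate}); differentiating $y^{\lam}$ and using $\lam<0<1$ from \eqref{eq:lam}--\eqref{eq:lamp} immediately gives $u'(y)\to0$ and $yu''(y)\to0$ as $y\to+\infty$, hence both \eqref{eq:modFBPDual4} and $u'(+\infty)=0$.

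So the only genuine work is the $C^2$ pasting at $y=y_0$, after which the property list is bookkeeping. Value matching holds since \eqref{eq:u1-moderate} at $y_0$ equals the right side of the third line of \eqref{eq:modFBPDual-reduced}. First-derivative matching is just \eqref{eq:modFBPDual3} read from both sides: from the right, \eqref{eq:u1-moderate} with \eqref{eq:A1-moderate} gives $u'(y_0^+)=\lam A_1 y_0^{\lam-1}=(y_0-\phi_0)/(\rho y_0)$, and from the left \eqref{up_phi_main} with $\varphi(y_0)=\phi_0$ gives the same value. The one identity I actually have to verify is second-derivative matching: the right piece gives $u''(y_0^+)=\lam(\lam-1)A_1 y_0^{\lam-2}=(\lam-1)(y_0-\phi_0)/(\rho y_0^2)$, while \eqref{upp_phi_psi_main} together with the second free-boundary condition in \eqref{eq:FB-moderate} gives $u''(y_0^-)=\phi_0\psi(y_0)/(\rho y_0^2)=\frac{2\sig^2}{\rho\mu^2 y_0^2}\big(\frac{\del}{\lam}+r+\rho-\del\big)(y_0-\phi_0)$. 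I expect these to agree because dividing the indicial quadratic $\frac{\mu^2}{2\sig^2}\lam^2-\big(\frac{\mu^2}{2\sig^2}+r+\rho-\del\big)\lam-\del=0$ satisfied by $\lam$ through by $\lam$ yields precisely $\lam-1=\frac{2\sig^2}{\mu^2}\big(\frac{\del}{\lam}+r+\rho-\del\big)$. In short, the $C^2$-fit at $y_0$ is exactly what the free-boundary value $\psi(y_0)$ in \eqref{eq:FB-moderate} was designed to produce, and since $\varphi,\psi$ are $C^1$ on $(0,y_0]$ and \eqref{eq:u1-moderate} is smooth, this gives $u\in C^2(0,\infty)$ and shows $(u,y_0)$ solves \eqref{eq:modFBPDual1}--\eqref{eq:modFBPDual4}.

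For the global monotonicity and convexity I would argue by sign analysis on each piece. Since $y_0\in\big(\frac{\phi_0\lam}{\lam-1},\phi_0\big)$ we have $y_0-\phi_0<0$, so with $\lam<0$ the constant $A_1=(y_0-\phi_0)/(\rho\lam y_0^{\lam})$ is strictly positive; hence on $(y_0,\infty)$ one reads off $u'=\lam A_1 y^{\lam-1}<0$ and $u''=\lam(\lam-1)A_1 y^{\lam-2}>0$. On $(0,y_0)$, \eqref{upp_phi_psi_main} gives $u''=\varphi(y)\psi(y)/(\rho y^2)>0$ because $\varphi>0$ and $\psi\in(0,1)$ by Theorem~\ref{thm:DualFBP-moderate}, and strict convexity there together with $u'(y_0^-)=(y_0-\phi_0)/(\rho y_0)<0$ forces $u'<0$ throughout $(0,y_0)$. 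As $u\in C^2(0,\infty)$ with $u''>0$ on either side of $y_0$ and $u''(y_0)$ equal to the common positive one-sided limit, $u$ is strictly convex and, $u'$ being negative on both pieces and continuous, strictly decreasing on all of $(0,\infty)$; combined with $u'(0^+)=-\infty$ and $u'(+\infty)=0$ established above, this completes the corollary. The main obstacle, such as it is, is the second-derivative identity at $y_0$, which reduces to the short calculation with the defining quadratic \eqref{eq:lam} for $\lam$ indicated above.
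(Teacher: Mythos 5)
Your proposal is correct and follows the same route as the paper, which simply declares the corollary a direct consequence of Theorem \ref{thm:DualFBP-moderate} and the explicit formula \eqref{eq:u1-moderate}; you have merely made explicit the $C^2$-pasting at $y_0$ that the paper leaves implicit. Your key second-derivative matching identity $\lam-1=\frac{2\sig^2}{\mu^2}\big(\frac{\del}{\lam}+r+\rho-\del\big)$ is exactly the identity the paper records in \eqref{Psi_def}, and the remaining sign and limit checks are all accurate.
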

\begin{proof}
	These statements are direct consequences of Theorem \ref{thm:DualFBP-moderate} and the explicit expression \cref{eq:u1-moderate}.
\end{proof}

As a second corollary of Theorem \ref{thm:DualFBP-moderate}, we can now solve the original free-boundary problem \cref{eq:HJB-FBP1-moderate} for $(v(x), x_0)$ by the inverting the Legendre transform \cref{eq:Legendre}.

\begin{corollary}\label{coro:solveHJB}
	Let $(u(y), y_0)$ be the solution of the free boundary problem \cref{eq:modFBPDual1}--\cref{eq:modFBPDual4} in Corollary \ref{coro:solution_dual}. Then, $(v(x),x_0)$ given by
	\begin{align}\label{v_moderate}
		\begin{cases}
			\displaystyle v(x) := u\left((u')^{-1}(-x)\right)+x\,(u')^{-1}(-x),\quad x>0,\\[1ex]
			\displaystyle x_0 :=\frac{\phi_0-y_0}{\rho y_0},
		\end{cases}
	\end{align}
	satisfy \cref{eq:HJB-FBP1-moderate}, \cref{eq:Ansatz2-moderate}, $v'(+\infty)=0$, $v'(0^+)=+\infty$, $v'(x)=(u')^{-1}(-x)>0$, and $v^{\prime\prime}(x)=-1/u^{\prime\prime}\big(v'(x)\big)<0$ for $x>0$.\qed
\end{corollary}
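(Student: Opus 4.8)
The plan is to recognize $(v,x_0)$ in \eqref{v_moderate} as the inverse Legendre transform of $(u,y_0)$ and to transfer every ingredient of the dual free-boundary problem \eqref{eq:modFBPDual1}--\eqref{eq:modFBPDual4} back to the primal one \eqref{eq:HJB-FBP1-moderate} through the change of variable $y=v'(x)\Leftrightarrow x=-u'(y)$.

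First I would set up the dual variable. By Corollary \ref{coro:solution_dual}, $u\in C^2(0,\infty)$ is strictly convex with $u'(0^+)=-\infty$ and $u'(+\infty)=0$, so $u'$ is a strictly increasing bijection from $(0,\infty)$ onto $(-\infty,0)$; hence $y(x):=(u')^{-1}(-x)$ is well-defined, strictly decreasing, $C^1$, maps $(0,\infty)$ onto $(0,\infty)$, and satisfies $y(0^+)=+\infty$ and $y(+\infty)=0$. Writing $v(x)=u(y(x))+xy(x)$ and differentiating, the terms carrying $y'(x)$ cancel since $u'(y(x))=-x$, leaving $v'(x)=y(x)=(u')^{-1}(-x)>0$, which gives $v'(0^+)=+\infty$ and $v'(+\infty)=0$. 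Differentiating $u'(y(x))=-x$ gives $y'(x)=-1/u''(y(x))$, hence $v''(x)=y'(x)=-1/u''(v'(x))<0$ because $u''>0$ everywhere (by \eqref{upp_phi_psi_main} on $(0,y_0)$ and the explicit form \eqref{eq:u1-moderate} on $(y_0,\infty)$); in particular $v\in C^2(0,\infty)$. The Legendre relations then read $x=-u'(y)$ and $v(x)=u(y)-yu'(y)$ with $y=v'(x)$.

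Next I would locate the free boundary and verify \eqref{eq:Ansatz2-moderate}. From \eqref{eq:modFBPDual3}, $u'(y_0)=(y_0-\phi_0)/(\rho y_0)=-x_0$, so $v'(x_0)=y(x_0)=y_0$, and by monotonicity of $v'$, $x<x_0\Leftrightarrow v'(x)>y_0$ and $x>x_0\Leftrightarrow v'(x)<y_0$. Since $1+\rho x=1-\rho u'(y)$, one computes $(1+\rho x)v'(x)=y-\rho y u'(y)=:\Phi(y)$ with $y=v'(x)$. On $(0,y_0]$, \eqref{up_phi_main} gives $\Phi=\varphi$, which by Theorem \ref{thm:DualFBP-moderate}(i) is increasing with $\varphi((0,y_0))\subset(0,\phi_0)$ and $\varphi(y_0)=\phi_0$; this yields $0<(1+\rho x)v'(x)<\phi_0$ for $x>x_0$ and $(1+\rho x_0)v'(x_0)=\phi_0$. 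On $[y_0,\infty)$, the explicit formula \eqref{eq:u1-moderate} gives $\Phi(y)=y-(y_0-\phi_0)(y/y_0)^{\lam}$; using $\lam\in(-2\del\sig^2/\mu^2,0)$, $y_0<\phi_0$, and, crucially, the lower bound $y_0>\phi_0\lam/(\lam-1)$ from Theorem \ref{thm:DualFBP-moderate}(i), a short computation shows $\Phi'(y_0)>0$ with $\Phi'$ nondecreasing on $[y_0,\infty)$, so $\Phi$ is strictly increasing there; as $\Phi(y_0)=\phi_0$ this gives $(1+\rho x)v'(x)=\Phi(v'(x))>\phi_0$ for $0\le x<x_0$, completing \eqref{eq:Ansatz2-moderate}.

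Finally I would translate the equations. Using $v'(x)^2/v''(x)=-y^2u''(y)$, $(r+\rho)xv'(x)=-(r+\rho)yu'(y)$, and $\del v(x)=\del(u(y)-yu'(y))$ with $y=v'(x)$, the first line of \eqref{eq:HJB-FBP1-moderate} (for $0\le x<x_0$, i.e.\ $y>y_0$) becomes exactly \eqref{eq:modFBPDual1}, while the second line (for $x\ge x_0$, i.e.\ $0<y\le y_0$), after replacing $(1+\rho x)v'(x)$ by $y-\rho y u'(y)$ inside $G$, becomes exactly \eqref{eq:modFBPDual2}; both hold by Corollary \ref{coro:solution_dual}. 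The free-boundary condition $(1+\rho x_0)v'(x_0)=\phi_0$ is $\Phi(y_0)=\phi_0$, i.e.\ \eqref{eq:modFBPDual3}, and $\lim_{x\to0^+}v'(x)/v''(x)=-\lim_{y\to+\infty}yu''(y)=0$ by \eqref{eq:modFBPDual4}. The only step needing genuine (though brief) work is the monotonicity of $\Phi$ on $[y_0,\infty)$, where the precise position $y_0\in\big(\phi_0\lam/(\lam-1),\phi_0\big)$ of the free boundary is exactly what keeps $(1+\rho x)v'(x)$ above $\phi_0$ on $[0,x_0)$; everything else is bookkeeping under the change of variable.
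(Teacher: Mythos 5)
Your proposal is correct and follows essentially the same route as the paper: invert the Legendre transform, read off $v'$, $v''$ and the boundary limits from Corollary \ref{coro:solution_dual}, and transfer \eqref{eq:modFBPDual1}--\eqref{eq:modFBPDual4} back to \eqref{eq:HJB-FBP1-moderate} through the correspondence $y=v'(x)\Leftrightarrow x=-u'(y)$. The only (minor) difference is in verifying \eqref{eq:Ansatz2-moderate}: you show that $y\mapsto y-\rho y u'(y)$ is increasing (it equals $\varphi$ on $(0,y_0]$, and on $[y_0,\infty)$ its monotonicity hinges precisely on $y_0>\phi_0\lam/(\lam-1)$), whereas the paper shows the equivalent fact that $x\mapsto(1+\rho x)v'(x)$ is decreasing on each side of $x_0$; your version has the merit of making explicit the role of the lower bound on $y_0$ that the paper's sign computation for $y>y_0$ uses only implicitly.
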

\begin{proof}
	For ease of notation, let us define $J:(-\infty,0)\to (0,\infty)$ by $J:=(u')^{-1}$. We derive from \cref{v_moderate} and the chain rule that $v'(x) = J(-x)>0$ and $v^{\prime\prime}(x) = -1/u^{\prime\prime}(J(-x))<0$ for $x>0$, in which the inequalities hold thanks to the fact that $u(y)$ is strictly decreasing and convex by Corollary \ref{coro:solution_dual}. That $v'(+\infty)=0$ and $v'(0^+)=+\infty$ follows from $u'(0^+)=-\infty$ and $u'(+\infty)=0$ shown in Corollary \ref{coro:solution_dual}. By \cref{up_phi_main}, $x_0 :=\frac{\phi_0-y_0}{\rho y_0} = -u'(y_0)$. 
	Therefore, by \cref{eq:Legendre-Relations} and the correspondence $y = v'(x)=J(-x) ~\Leftrightarrow~ x = -u'(y)$, we conclude that $v(x)$ and $x_0$ (given by \cref{v_moderate}) satisfy \cref{eq:HJB-FBP1-moderate}. To verify \cref{eq:Ansatz2-moderate}, we note that
	\begin{align*}
		\frac{\dd}{\dd x} \Big( (1+\rho x)v'(x)\Big)
		&= \rho v'(x) + (1+\rho x)v^{\prime\prime}(x)
		= \rho y - \frac{1-\rho u'(y)}{u^{\prime\prime}(y)}
		= \rho y\left(1-\frac{1}{\psi(y)}\right) < 0,
	\end{align*}
	for $0<y<y_0$ (equivalently, $x>x_0$). Here, we have used \cref{up_phi_main} and \cref{upp_phi_psi_main}. Moreover,  we have $u'(y) = \frac{y_0-\phi_0}{\rho} \frac{y^{\lambda-1}}{y_0^\lambda}$ and $u^{\prime\prime}(y) = \frac{(y_0-\phi_0)(\lambda-1)}{\rho}\, \frac{y^{\lambda-2}}{y_0^\lambda}$, for $y>y_0$.  
	Therefore,
	\begin{align*}
		\frac{\dd}{\dd x} \Big( (1+\rho x)v'(x) \Big) 
		&= \rho v'(x) + (1+\rho x)v^{\prime\prime}(x)
		= \rho y - \frac{1-\rho u'(y)}{u^{\prime\prime}(y)}
		= \lambda \rho y - \frac{1}{u^{\prime\prime}(y)}
		< 0.
	\end{align*}
	In view of $(1+\rho x_0)v'(x_0)=\phi_0$, \cref{eq:Ansatz2-moderate} readily follows.
\end{proof}

\subsection{Step-3: The verification theorem and the optimal feedback controls} \label{subsec:verify}

In this subsection, our goal is to establish the verification theorem that provides the optimal investment and consumption policies in feedback form and in terms of the solution $(\varphi(y),\psi(y), y_0)$ of the auxiliary system of first-order free boundary problems \cref{eq:ODESys-moderate}-\cref{eq:FB-moderate}.

We need the next result regarding $v(x)$ (of Corollary \ref{coro:solveHJB}), which is the so-called transversality condition. Its proof, included in Subsection \ref{app:trans}, relies on Assumption \ref{assum:Growth} as well as various properties of the solution $(\varphi(y),\psi(y), y_0)$ of  \cref{eq:ODESys-moderate}-\cref{eq:FB-moderate} that was established by Theorem \ref{thm:DualFBP-moderate}.

\begin{lemma}\label{lem:trans}
	Suppose that the utility function $U(\cdot)$ satisfies \cref{eq:U}, \cref{eq:Uc}, Assumption \ref{assumption:moderateLA}, and Assumption \ref{assum:Growth}. Then, the function $v(\cdot)$ of \cref{v_moderate} fulfills the transversality condition
	\begin{equation}\label{eq:trans}
		\lim_{T \to \infty}\Eb \left[e^{-\delta T}v(X_T) \right]=0,
	\end{equation}
	for all $(\pi_t,c_t)_{t\geq 0}\in \Acr(x)$ and all $x>0$, in which $\{X_t\}_{t\ge0}$ is given by \cref{eq:X}.\qed
\end{lemma}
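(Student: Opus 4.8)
The plan is to establish the two one-sided estimates $\liminf_{T\to\infty}\Eb[e^{-\delta T}v(X_T)]\ge 0$ and $\limsup_{T\to\infty}\Eb[e^{-\delta T}v(X_T)]\le 0$ separately. The first is immediate: by Corollary~\ref{coro:solveHJB}, $v$ is increasing on $(0,\infty)$, and a direct computation from \eqref{v_moderate} and the explicit Euler form \eqref{eq:u1-moderate} (valid on $\{y>y_0\}\leftrightarrow\{0<x<x_0\}$) gives $v(0^+)=U(0)/\delta$; since $X_T\ge 0$ this yields $v(X_T)\ge U(0)/\delta$ and hence $\Eb[e^{-\delta T}v(X_T)]\ge e^{-\delta T}U(0)/\delta\to 0$. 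So everything reduces to the upper bound.

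For the upper bound I would combine two ingredients. First, by the Legendre duality of Corollary~\ref{coro:solveHJB} (see also \eqref{eq:Legendre-Relations}) and the fact that $v'$ maps $(0,\infty)$ bijectively onto $(0,\infty)$, we have $u(y)=\sup_{x>0}\{v(x)-xy\}$ for every $y>0$, so the Fenchel--Young inequality gives $v(x)\le u(y)+xy$ for all $x,y>0$. Second, I would establish a budget-type supermartingale bound: rewriting \eqref{eq:X} with ``consumption rate'' $\hat c_t:=(1+\rho X_t)c_t\ge 0$ and introducing the pricing kernel $\xi^0_t:=\exp\!\big(-(r+\rho)t-\tfrac{\mu}{\sig}B_t-\tfrac{\mu^2}{2\sig^2}t\big)$, It\^o's formula shows that $\xi^0_tX_t+\int_0^t\xi^0_s\hat c_s\,\dd s$ is a nonnegative local martingale, hence a supermartingale, so that $\Eb[\xi^0_TX_T]\le x$ for every $(\pi,c)\in\Acr(x)$. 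Fixing $\epsilon\in(0,\delta)$ and $\kappa>0$ and applying Fenchel--Young pointwise with the random dual variable $y_T:=\kappa\,e^{(\delta-\epsilon)T}\xi^0_T>0$, we obtain $e^{-\delta T}v(X_T)\le e^{-\delta T}u(y_T)+\kappa\,e^{-\epsilon T}\xi^0_TX_T$, whose last term has expectation at most $\kappa x\,e^{-\epsilon T}\to 0$. (Together with $u\ge U(0)/\delta$, these two-sided bounds also ensure that all the expectations involved are finite.)

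It remains to prove that $\Eb[e^{-\delta T}u(y_T)]\to 0$, and this is where the dichotomy of Theorem~\ref{thm:DualFBP-moderate} enters. The key is a global bound of the form $u(y)\le C_1+C_2\log^{+}(1/y)$, $y>0$. On $[y_0,\infty)$ this is clear from \eqref{eq:u1-moderate}, where $u$ is bounded and decreasing to $U(0)/\delta$. On $(0,y_0)$ I would integrate $u'(y)=\big(y-\varphi(y)\big)/(\rho y)$ from \eqref{up_phi_main}: if $\varphi(0)>0$, then $u'(y)\sim-\varphi(0)/(\rho y)$ near $0$, whence $u(y)\sim\tfrac{\varphi(0)}{\rho}\log(1/y)$; if instead $\psi(0)=1$ while $\varphi(0)=0$, I would first invoke Assumption~\ref{assum:Growth}, which yields $G(\phi)\le a_1-\al\phi+a_2\phi^{-a_4}$, and combine it with the ODE system \eqref{eq:ODESys-moderate} and the boundary asymptotics of $(\varphi,\psi)$ from Theorem~\ref{thm:DualFBP-moderate} to control the rate at which $\varphi(y)\to 0$, showing that $\int_{0^+}\varphi(s)/s\,\dd s$ is either finite (so $u$ is bounded) or $o\big(\log(1/y)\big)$; in all cases the logarithmic bound holds. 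Granting it, since $\log(1/y_T)=-\log\kappa-(\delta-\epsilon)T+(r+\rho)T+\tfrac{\mu^2}{2\sig^2}T+\tfrac{\mu}{\sig}B_T$ has $\Eb[\log^{+}(1/y_T)]=O(T)$ (using $\Eb|B_T|=\sqrt{2T/\pi}$), we conclude $\Eb[e^{-\delta T}u(y_T)]\le e^{-\delta T}\big(C_1+C_2\,O(T)\big)\to 0$. Combining the last two paragraphs gives $\limsup_{T\to\infty}\Eb[e^{-\delta T}v(X_T)]\le 0$, and with the lower bound this proves \eqref{eq:trans}.

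I expect the main obstacle to be the logarithmic growth estimate for $u$ near $0$ in the case $\varphi(0)=0$: there the nonlinear term $G\big((1+\rho x)v'(x)\big)=G\big(\varphi(v'(x))\big)$ in \eqref{eq:HJB_c} may a priori blow up as $x\to+\infty$, and ruling this out needs the precise interplay between the growth Assumption~\ref{assum:Growth} on $U_+$ and the fine behaviour of $(\varphi,\psi)$ at the free-boundary origin established in Theorem~\ref{thm:DualFBP-moderate}. This is exactly why the argument must split into the two cases $\varphi(0)>0$ and $\psi(0)=1$; the remaining ingredients --- Fenchel--Young, the pricing-kernel supermartingale, and the Gaussian moment bound --- are routine.
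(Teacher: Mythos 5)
Your proposal is correct, and in its decisive step it takes a genuinely different route from the paper. The paper splits on the dichotomy of Theorem \ref{thm:DualFBP-moderate}.(i): when $\varphi(0)>0$ it shows $v$ itself is uniformly bounded (via \eqref{eq:u_phi_psi_moderate} evaluated at $y\to0^+$), and only when $\psi(0)=1$ does it run the duality argument, using the consumption-corrected kernel $Y_t$ of \eqref{eq:Y} (whose extra factor $\exp(\rho\int_0^tc_s\dd s)$ plays the same bookkeeping role as your $\hat c_t=(1+\rho X_t)c_t$ does for $\xi^0$ — the two budget constraints $\Eb[\xi^0_TX_T]\le x$ and $\Eb[Y_TX_T]\le x$ are equivalent), and bounding $u$ near $0$ by a small power $y^{-\eps A_4}$ obtained from Assumption \ref{assum:Growth} applied to the $G(\varphi(y))$ term of \eqref{eq:u_phi_psi_moderate} together with the estimate $\varphi(y)\ge B_\eps y^\eps$ (Lemma \ref{lem:phi_estimate}). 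Your route instead uses a single logarithmic bound $u(y)\le C_1+C_2\log^+(1/y)$ plus a Gaussian first-moment estimate, which handles both cases at once; the deterministic tilt $\kappa e^{(\del-\eps)T}$ in your dual variable is harmless but also unnecessary, since $e^{-\del T}\cdot O(T)\to0$ already. One remark on the step you flag as the main obstacle: it is actually easier than you fear. Since Theorem \ref{thm:DualFBP-moderate}.(i) gives $0<\varphi(y)<\phi_0$ on all of $(0,y_0]$, \eqref{up_phi_main} yields $-u'(y)=\big(\varphi(y)-y\big)/(\rho y)<\phi_0/(\rho y)$, and integrating from $y$ to $y_0$ gives $u(y)\le u(y_0)+\frac{\phi_0}{\rho}\log(y_0/y)$ uniformly — no appeal to Assumption \ref{assum:Growth}, to the fine asymptotics of $(\varphi,\psi)$ at $0$, or to the behaviour of $G$ is needed (the worry about $G$ blowing up is a red herring once you bound $u$ by integrating $u'$ rather than via the representation \eqref{eq:u_phi_psi_moderate}). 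With that simplification your argument is complete and, in fact, somewhat more economical than the paper's: it buys a case-free proof that does not invoke the growth Assumption \ref{assum:Growth}, whereas the paper's version leans on that assumption precisely because it estimates $u$ through $G(\varphi(y))$.
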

\begin{proof}
	See Subsection \ref{app:trans}.
\end{proof}

The next theorem is the main result of the paper. It verifies that the function $v(x)$ given by Corollary \ref{coro:solveHJB} coincides with both the value function $\Vt(x)$ of the concavified problem \cref{eq:Vt} and the value function $V(x)$ of the original problem \cref{eq:VF}. It also states that the functions $\pi^*(x)$ and $\cs(x)$, respectively given by \cref{eq:piStar} and \cref{eq:cs-moderate}, provide feedback optimal controls for both the concavified and the original problems.

\begin{theorem}\label{thm:verification_moderate}
	Let $x_0$ and $v(x)$ be given by \cref{v_moderate}, and $\varphi(y)$ and $\psi(y)$ be as in Theorem \ref{thm:DualFBP-moderate}.
	Define also the feedback functions
	\begin{align}
		\label{c_optimal_moderate}
		&c^*(x) = 
		\begin{cases}
			0,&\quad 0< x < x_0,\\
			\al+(U_+')^{-1}\big(\varphi(v'(x))\big),&\quad x\ge x_0,
		\end{cases}\\
		\intertext{and}
		\label{pi_optimal_moderate}
		&\pi^*(x) = 
		\begin{cases}
			\frac{\mu(1-\lambda)}{\sigma^2}x,&\quad 0<x<x_0, \\
			\frac{\mu}{\sigma^2\rho}(1+\rho x)\psi(v'(x)), &\quad x\geq x_0.
		\end{cases}
	\end{align}
	It then holds that $V(x)=\Vt(x)=v(x)$ for all $x>0$, in which $V(x)$ and $\Vt(x)$ are value functions of problems \cref{eq:VF} and \cref{eq:Vt}, respectively. Furthermore, for any $x>0$, the SDE
	\begin{equation}\label{SDE:optimal}
		\begin{cases}
			& \dd X^*_t = \Big((r+\rho)X^*_t + \mu\pi^*(X^*_t)- (1+\rho X^*_t)c^*(X^*_t)\Big) \dd t + \sig \pi^*(X^*_t)\dd B_t,\\
			& X^*_0 = x,
		\end{cases}
	\end{equation}
	admits a unique strong solution $\{X^*_t\}_{t\ge0}$, and $\big\{\big(\pi^*(X^*_t),c^*(X^*_t)\big)\big\}_{t\geq 0}$ is a common optimal feedback control pair for both problems \cref{eq:VF} and \cref{eq:Vt}.\qed
\end{theorem}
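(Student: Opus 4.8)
The plan is to split the proof into two essentially independent parts: (b$'$) the well-posedness of the controlled state equation \eqref{SDE:optimal} under the feedback pair $(\pi^*,c^*)$, and (a$'$) a verification argument showing that the function $v$ of Corollary \ref{coro:solveHJB} is the value function of the concavified problem \eqref{eq:Vt} and that $(\pi^*,c^*)$ attains it; the identity $V=\Vt$ then follows from the specific range of $c^*$. For the \emph{upper bound}, fix $x>0$ and $(\pi,c)\in\Acr(x)$ with state $X$ from \eqref{eq:X}. By Corollary \ref{coro:solveHJB} the pair $(v,x_0)$ solves \eqref{eq:HJB-FBP1-moderate} together with \eqref{eq:Ansatz2-moderate} and $v''<0$, hence (as recorded at the end of Subsection \ref{subsec:candidatepolicies}) $v$ solves the HJB equation \eqref{eq:HJB-concavified}; in particular $\big((r+\rho)y+\mu p-(1+\rho y)q\big)v'(y)+\frac12\sig^2 p^2 v''(y)-\del v(y)+\Ut(q)\le0$ for all $y>0$, $p\in\Rb$, $q\ge0$. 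Applying It\^o's formula to $e^{-\del t}v(X_t)$, using this inequality along $(\pi_t,c_t)$, and localizing so that the stochastic integral has zero expectation, I get $\Eb\!\big[\int_0^{T\wedge\tau_n}e^{-\del s}\Ut(c_s)\,\dd s\big]+\Eb\!\big[e^{-\del(T\wedge\tau_n)}v(X_{T\wedge\tau_n})\big]\le v(x)$. Since $\Ut\ge\Ut(0)=U(0)=-U_-(\al)$ by Lemma \ref{lem:Ut}, and $v$ is increasing with $v(0^+)=U(0)/\del$ (so $v$ is bounded below), monotone convergence on the first term and Fatou's lemma on the second allow $n\to\infty$, and then $T\to\infty$ via the transversality condition of Lemma \ref{lem:trans} gives $v(x)\ge\Eb\!\big[\int_0^\infty e^{-\del t}\Ut(c_t)\,\dd t\big]$. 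Taking the supremum, $v\ge\Vt$, and since $U\le\Ut$, also $v\ge V$ on $(0,\infty)$.

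For the state SDE, on each of the regions $(0,x_0)$ and $(x_0,\infty)$ the feedback coefficients of \eqref{SDE:optimal} are $C^1$ and hence locally Lipschitz: $v'=(u')^{-1}(-\cdot)\in C^1$ because $u\in C^2(0,\infty)$ with $u''>0$ (Corollary \ref{coro:solution_dual}), while $\varphi,\psi\in C^1$ by Theorem \ref{thm:DualFBP-moderate}. At $x=x_0$ only the drift is discontinuous — it has a downward jump of size $(1+\rho x_0)c_0$, since $c^*$ jumps from $0$ to $c_0$ whereas $\pi^*=-\frac{\mu}{\sig^2}v'/v''$ and $v''<0$ are continuous — and the diffusion $\sig\pi^*(x_0)=\frac{\mu(1-\lam)}{\sig}x_0>0$ is non-degenerate there. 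On $(0,x_0)$ the equation is linear (geometric Brownian motion), so a solution started at $x>0$ never reaches $0$. I would then invoke the one-dimensional theory for SDEs with locally bounded measurable drift and locally Lipschitz, locally non-degenerate diffusion — e.g. a Zvonkin-type space transformation, or Nakao's pathwise-uniqueness theorem together with strong existence — to obtain a unique strong solution $X^*$ on $(0,\infty)$ up to a possible explosion time. The linear bounds $|\pi^*(x)|\le\frac{\mu}{\sig^2\rho}(1+\rho x)$ and $c^*(x)\ge0$ then give, via $\dd|X^*_t|^2$ together with Gronwall and the Burkholder--Davis--Gundy inequality, $\Eb[\sup_{t\le T}|X^*_t|^2]<\infty$; this precludes explosion and, with positivity, shows $(\pi^*(X^*),c^*(X^*))\in\Acr(x)$. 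Non-degeneracy at $x_0$ also gives $\int_0^t\Ib_{\{X^*_s=x_0\}}\,\dd s=0$ a.s., so the drift discontinuity does not enter It\^o's formula for $v(X^*_t)$. This is the step I expect to demand the most care.

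To show the bound is attained, apply It\^o to $e^{-\del t}v(X^*_t)$ along $(\pi^*,c^*)$; the HJB inequality now holds with equality, as $\pi^*$ and $c^*$ are the maximizers in \eqref{eq:HJB-concavified}. The stochastic integrand $e^{-\del s}\sig\pi^*(X^*_s)v'(X^*_s)$ is bounded on $(0,\infty)$ — $\pi^*v'$ vanishes at $0$ (by the explicit linear form there) and is $O(1)$ at $+\infty$ (using $(1+\rho x)v'(x)<\phi_0$ from \eqref{eq:Ansatz2-moderate} and $\psi<1$) — so that term is a true martingale; and $v$ grows at most logarithmically (again from $v'(x)\le\phi_0/(1+\rho x)$ for $x\ge x_0$, with $v$ bounded on $(0,x_0]$), whence $\{e^{-\del(T\wedge\tau_n)}v(X^*_{T\wedge\tau_n})\}_n$ is $L^2$-bounded and uniformly integrable. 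Passing $n\to\infty$ (using this uniform integrability) and then $T\to\infty$ (Lemma \ref{lem:trans}) gives $v(x)=\Eb\!\big[\int_0^\infty e^{-\del t}\Ut(c^*(X^*_t))\,\dd t\big]\le\Vt(x)$, so with the upper bound $v=\Vt$ and $(\pi^*,c^*)$ is optimal for \eqref{eq:Vt}.

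It remains to pass to the original problem \eqref{eq:VF}. The feedback consumption satisfies $c^*(X^*_t)\in\{0\}\cup[c_0,+\infty)$ a.s.: $c^*\equiv0$ on $(0,x_0)$, while on $[x_0,\infty)$ one has $c^*(x)=\al+(U_+')^{-1}\big(\varphi(v'(x))\big)\ge\al+(U_+')^{-1}(\phi_0)=c_0$ because $\varphi<\phi_0$ on $(0,y_0)$, $\varphi(y_0)=\phi_0$, and $(U_+')^{-1}$ is decreasing. On the set $\{0\}\cup[c_0,+\infty)$ we have $\Ut=U$ by Lemma \ref{lem:Ut} (there $\Ut(0)=-U_-(\al)=U(0)$ and $\Ut=U_+(\cdot-\al)=U$ on $[c_0,+\infty)$). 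Therefore $\Eb\!\big[\int_0^\infty e^{-\del t}U(c^*(X^*_t))\,\dd t\big]=\Eb\!\big[\int_0^\infty e^{-\del t}\Ut(c^*(X^*_t))\,\dd t\big]=v(x)\ge V(x)$, while $(\pi^*(X^*),c^*(X^*))\in\Acr(x)$ forces the reverse inequality. Hence $V=\Vt=v$ on $(0,\infty)$, and $\{(\pi^*(X^*_t),c^*(X^*_t))\}_{t\ge0}$ is a common optimal control for \eqref{eq:VF} and \eqref{eq:Vt}.
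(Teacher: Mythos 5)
Your argument is correct and, for the verification half (It\^o's formula plus the HJB inequality, the transversality condition of Lemma \ref{lem:trans}, and the identity $U(c^*)=\Ut(c^*)$ on $\{0\}\cup[c_0,\infty)$ to pass from $\Vt$ to $V$), it coincides with the paper's proof. Where you genuinely diverge is the well-posedness of \eqref{SDE:optimal}. The paper splits into the two exhaustive cases of Theorem \ref{thm:DualFBP-moderate}.(i): when $\varphi(0)>0$ it passes to $Z=\log X$, where the drift is globally bounded and the diffusion is locally Lipschitz and locally non-degenerate, and cites Proposition 5.5.17 of \cite{KS1991}; when $\psi(0)=1$ (so $c^*$ may be unbounded at infinity) it first gets a unique weak solution from Theorem 5.5.15 of \cite{KS1991} and then runs Feller's test at both endpoints, using the lower bound $\psi\ge\Psi_0>0$ to control the scale integral at $+\infty$. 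You instead treat both cases at once: local strong well-posedness from the piecewise-$C^1$ structure (Zvonkin/Nakao for the drift jump at $x_0$), non-explosion at $+\infty$ from the one-sided linear growth bound $2xb(x)+a(x)^2\le K(1+x^2)$ (valid because $c^*\ge0$ kills the consumption term) via Gronwall/BDG, and non-attainment of $0$ from the GBM dynamics on $(0,x_0)$. This buys a unified argument that avoids the dichotomy entirely, and your explicit $L^2$/uniform-integrability bookkeeping for passing $n\to\infty$ in the localized identity (using $v(x)=O(\log x)$ from \eqref{eq:Ansatz2-moderate} and the second-moment bound) is actually more careful than the paper's corresponding step. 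Two spots deserve a touch more rigor: the claim that the process "never reaches $0$" should be upgraded to the statement that, by pathwise uniqueness, $X^*$ coincides with a geometric Brownian motion on each excursion into $(0,x_0)$, or simply to the scale-function computation the paper performs (its $f(0^+)=+\infty$ in \eqref{eq:Feller_fn_0}); and the appeal to Zvonkin/Nakao should note that the diffusion coefficient is uniformly non-degenerate and Lipschitz on a neighborhood of the single discontinuity point $x_0$, together with the zero-occupation-time remark you already make so that the value of the drift at $x_0$ is immaterial in It\^o's formula. Neither point is a gap.
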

\begin{proof}
	See Subsection \ref{app:verify}.
\end{proof}

We end this section by a brief discussion of the optimal policies characterized in \cref{c_optimal_moderate} and \cref{pi_optimal_moderate}. A comprehensive examination of these policies and numerical illustrations are given in the next section.
The agent's optimal investment and consumption depend on whether she should take austerity measures or not. At time $t\ge0$, let $W^*_t$ be her (optimally controlled) current wealth and $H^*_t$ be her current consumption habit under the consumption control process $\{C^*_s\}_{0\le s<t}$. Then,
\begin{itemize}
	\item If her wealth-to-habit ratio $X^*_t=W^*_t/H^*_t$ is below the \emph{austerity threshold} $x_0$, she will take austerity measures by not consuming (i.e. $C^*_t=c =0$ by \cref{eq:ratios} and \cref{c_optimal_moderate}) and investing a constant proportion $\mu(1-\lam)/\sig^2$ of her wealth in the risky asset (in view that $\Pi^*_t/W^*_t=\pi^*(X^*_t)/X^*_t=\mu(1-\lam)/\sig^2$ according to \cref{eq:ratios} and \cref{pi_optimal_moderate}). In particular, the agent takes this austerity measure because she is loss averse when her consumption rate is in the range $(0, \al H^*_t)$. For low levels of wealth (determined precisely by the condition $0<W^*_t\le x_0 H^*_t$), the agent prefers to avoid consumption and build up the wealth instead. More importantly, it is interesting to highlight two observations on the constant proportion $\mu(1-\lam)/\sig^2$ of wealth invested in the risky asset when the wealth-to-habit falls below the austerity threshold $x_0$: (i) It is strictly larger than the Kelly's fraction $\mu/\sigma^2$  as $\lambda<0$ in our model; (ii) It is universal for any choice of utility function $U(x)$. The first observation can be explained by the fact that the loss averse agent at low wealth levels ($0<W^*_t\le x_0 H^*_t$) is incentivized to strategically invest more aggressively in the risky asset to accumulate wealth in order to reach the threshold $x_0$ and start the positive consumption as soon as possible. The robustness of the constant proportion with respect to utility in the second observation is a consequence of the fact that the S-shaped utility forces the optimal consumption $C_t^*=0$ when the current wealth level is low and hence the specific form of utility function (particularly, $U_{-}(x)$) becomes irrelevant as only $U(0)$ matters in the objective function.
	
	\item If her wealth-to-habit ratio is above the austerity threshold (i.e. $X^*_t=W^*_t/H^*_t\ge x_0$), the agent will optimally consume at a rate above her loss reference point $\al H^*_t$ since, by \cref{c_optimal_moderate},
	\begin{align*}
		C^*_t = \cs(X^*_t) H^*_t = \Big[\al+(U_+')^{-1}\big(\varphi(v'(X^*_t))\big)\Big] H^*_t\ge \al H^*_t.
	\end{align*}
	Furthermore, our numerical investigation in the next section indicates that the agent adjust her consumption and investment behavior to a more moderate regime as the wealth-to-habit ratio increases. In particular, the portfolio weight of the risky asset and the ratio of the net consumption rate to wealth are both adjusted from higher to lower levels as wealth-to-habit ratio increases. See the bottom two plots of Figure \ref{fig:CRRA} in the next section.
\end{itemize}

In summary, the agent's optimal policy is to invest a more aggressive constant portion of her wealth and consume nothing in the austerity region (i.e. $W^*_t< x_0 H^*_t$). In the prosperity region (i.e. $W^*_t\ge x_0 H^*_t$), the agent consumes above her habit reference point and invest less aggressively in the stock. As the ratio of wealth to habit increases, the agent consumes and invests more moderately, in that the portfolio rate of the stock and the consumption-to-wealth ratio are smaller compare to their levels near the austerity region. A more detailed investigation of the optimal policies and their characteristics is provided in the next section.

\begin{remark}
	In the infinite-horizon Merton consumption problem \cite{Merton69}, the objective
	\[
	\mathbb{E}\left[\int_0^{+\infty}e^{-\del\,t} \frac{C_t^p}{p} d t\right]
	\]may become infinite when $p\in(0,1)$ and the discount rate $\delta>0$ is too small. To ensure finiteness of the value function, one must impose a lower bound on $\del$ which prevents unbounded expected utility over admissible consumption policies.\\
	In our setting, no such additional restriction on $\delta$ is required because preferences are defined over \emph{relative} consumption. The objective depends on the ratio $C_t/H_t$, where $H_t$ is the exponentially weighted average of past consumption. Increasing current consumption therefore raises future habit, reducing the future value of the ratio $C_t/H_t$. This endogenous feedback prevents the agent from escalating consumption in a way that would otherwise lead to an unbounded value function. In other words, the strictly positive habit process $\{H_t\}_{t\ge0}$ acts as an \emph{implicit discounting mechanism}: higher consumption today forces an increase in the future benchmark, thereby keeping the value function finite without requiring a lower bound on~$\delta$.
\end{remark}

%
%
\section{Illustrative Examples and Financial Implications}\label{sec:Examples}
This section illustrates our theoretical results in Section \ref{sec:OptimalPolicies} using several numerical examples. In Subsection \ref{subsec:power}, we provide a thorough numerical experiment for the case of a power S-shaped utility function of the form \cref{eq:U_power}. Subsection \ref{subsec:sens_RALA} investigates sensitivity of the optimal policies to changes in the risk-aversion and loss-aversion parameters of the power S-shaped utility, while sensitivity to changes in the market excess risk return, habit persistent, and utility discounting is explored in Subsection \ref{subsec:Sens_mu_rho_del}. In subsection \ref{subsec:limits}, we show that several other models are limiting case of our model by reproducing their numerical experiments with our model. Finally, in subsection \ref{subsec:non-power}, we provide numerical experiments for some non-power S-shaped utility functions, namely, an exponential S-shaped utility function and a SAHARA S-shaped utility function.

The numerical experiment in this section are based on standard and commonly available numerical algorithms. Specifically, to solve the free-boundary problem \cref{eq:ODESys-moderate} and \cref{eq:FB-moderate}, we use the explicit Runge-Kutta method of order 5(4) (provided by the Python package `scipy.integrate' with option `method="RK45"') and a bisection search (to find the free boundary $y_0$ in Theorem \ref{thm:DualFBP-moderate}.$(i)$). The value function $V(x)$ and the optimal feedback relative policies $\pi^*(x)$ and $c^*(x)$ are then obtain using Corollary \ref{coro:solveHJB} and Theorem \ref{thm:verification_moderate}.
%
%
\subsection{Optimal policies for the power S-shaped utility functions}\label{subsec:power}
In this subsection, we assume that $U(c)$ in \cref{eq:U} is an S-shaped power utility, namely,
\begin{align}\label{eq:U_power}
	U(c) = 
	\begin{cases}
		\frac{1}{p}\big[(c-\al+\eps_1)^p-\eps_1^p\big],\quad c>\al,\\[1ex]
		-\frac{\kap}{q}\big[(\al-c+\eps_2)^q - \eps_2^q\big],\quad 0\le c\le \al,
	\end{cases}
\end{align}
in which we have set the following default parameter values
\begin{align}\label{eq:default_param_utility}
	\boxed{
		\al=0.5,\, p=-1.0,\, q=-0.5, \eps_1=\eps_2=0.05,\text{ and } \kap=1.
	}
\end{align}
Throughout this subsection, we have considered the following default values for the other model parameters:
\begin{align}\label{eq:default_param_market}
	\boxed{
		r=0.04,\, \mu=0.05,\, \sig=0.2,\, \rho=0.7,\text{ and } \del=0.1.
	}
\end{align}
Note that the interest rate is $r=4\%$ and the expected excess stock return is $\mu=5\%$. Thus, the stock expected return is $\mu+r=9\%$.

Figure \ref{fig:CRRA} shows the value function $V(x)$ in \cref{eq:VF} for the power S-shaped utility and the default parameter values. The top leftmost plot shows the utility function $U(c)$ (the solid line) and its concave envelope $\Ut(c)$ (the dotted line) given by \cref{eq:Uc}. See the caption of Figure \ref{fig:U_smooth_nonsmooth} (in Section \ref{sec:model}) for further details in this plot. The next plot (that is, top right plot) shows the value function $V(x)$ and the free-boundary $x_0$, both given by Corollary \ref{coro:solveHJB}. Note that the value function $V(x)$ is lower bounded, with the lower bound $V(x)\ge V(0)=U(0)/\del$ (see \cref{eq:VF_lowerbound} below).

The feedback optimal relative investment policy $\pi^*(x)$ and the feedback optimal relative consumption policy $c^*(x)$ are illustrated in the middle two plots of Figure \ref{fig:CRRA}. Recall that $\pi^*(x)$ (respectively, $c^*(x)$) is the optimal ratio of the amount invested in the stock (respectively, the consumption rate) divided by the current habit, assuming that $x$ is the ratio of current wealth divided by current habit.  If $x=w/h\in(0,x_0)$, then $c^*=C^*=0$ and $\pi^*/x = \Pi^*/w =\mu(1-\lam)/\sig^2$, according to \cref{c_optimal_moderate} and \cref{pi_optimal_moderate}. That is, if the wealth-to-habit ratio $x=w/h$ is below the threshold $x_0$, the agent takes austerity measures by not consuming while investing a fixed proportion of her wealth in the risky asset. If, on the other hand, $x\ge x_0$, then the agent consumes at a rate above her habit reference point, that is $c^*(x)=C^*/h \ge\al$.

Interestingly, as $x=w/h$ becomes larger than the threshold $x_0$, the optimal risky investment \emph{decreases} to a minimum and then increases. This behavior (specifically, the initial decrease in risky investment) is an indication that the agent is trying to keep her wealth-to-habit ratio $x$ above threshold $x_0$. Specifically, if $x$ is sufficiently larger than the threshold $x_0$, the agent decreases her proportion of stock investment if $x$ decreases.

%
%

\begin{figure}[H]
	\centerline{
			\adjustbox{trim={0.0\width} {0.02\height} {0.0\width} {0.0\height},clip}
			{\includegraphics[scale=0.3, page=1]{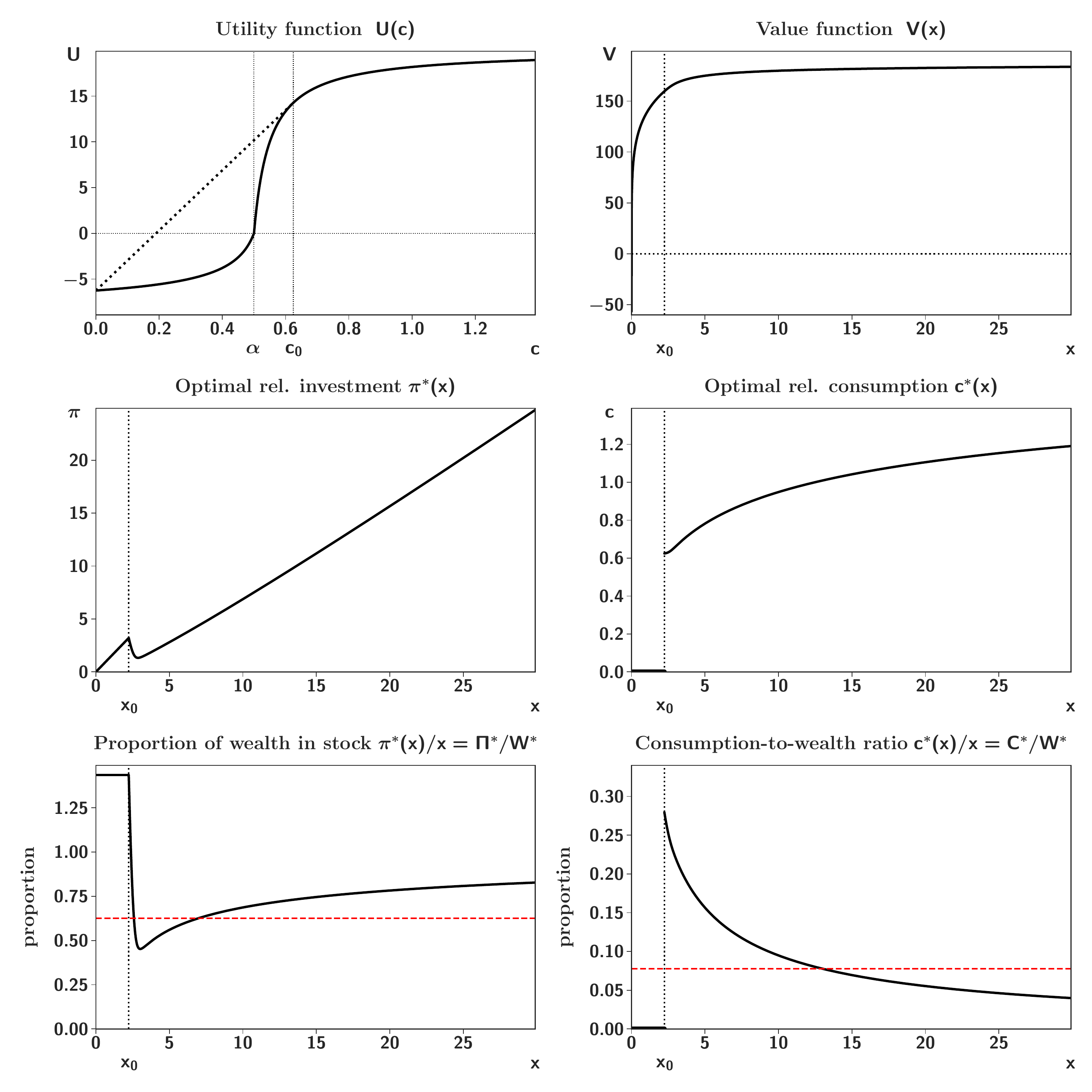}}
	}\vspace{1em}
	\caption{The value function and the optimal policies for the S-shaped power utility given by \cref{eq:U_power} and assuming the default value of the parameters in \cref{eq:default_param_utility} and \cref{eq:default_param_market}. The red dashed horizontal lines in the bottom plots represent the constant optimal portfolio weight $\frac{\mu}{(1-p)\sig^2}= 62.5\%$ and the constant optimal proportional consumption $\frac{1}{1-p}\left[\del - p \left(r +\frac{\mu^2}{2(1-p)\sig^2}\right)\right]\approx 7.8\%$ in the Merton \cite{Merton69} infinite horizon model.
		\label{fig:CRRA}}
\end{figure}

\noindent But, as $x\to x_0^+$, the agent increases her stock investment although her wealth to habit ratio is decreasing. In essence, the agent is placing a bet on the stocks to avoid further decrease of $x$, but only if $x$ is sufficiently close to the threshold $x_0$. See, also, the discussions for Figure \ref{fig:power_util_sens} (sensitivity with respect to $\kap$) and Figure \ref{fig:ABY22_Limit} for other explanations for this behavior of the optimal investment policy.

The two bottom plots of Figure \ref{fig:CRRA} provide an alternative way of illustrating the optimal investment and consumption policies. Note that, by \cref{eq:ratios}, we have $\Pi_t/W_t=\pi_t/X_t$ and $C_t/W_t=c_t/X_t$. Therefore, the ratio $\pi^*(x)/x$ is the optimal portfolio weight of the risky asset (that is, the proportion of wealth invested in the risky asset) if the wealth-to-habit ratio is $x$. Similarly, $c^*(x)/x$ is the optimal ratio of the (net) consumption rate to wealth for a wealth-to-habit ratio of $x$. The two bottom plots of Figure \ref{fig:CRRA} show the ratios $\pi^*(x)/x$ and $c^*(x)/x$ against $x$. These ratios enable us to compare our optimal policies with those in other studies. For instance, in the infinite-horizon optimal consumption problem of \cite{Merton69}, the optimal portfolio weights and the ratio of the consumption rate to wealth are constants given by \cref{eq:Merton_pi_c} below. These constants ratios are represented by the red horizontal dashed lines in the bottom two plots of  Figure \ref{fig:CRRA}.\footnote{Here, we assume a power utility of consumption in the Merton problem with constant relative risk-aversion $1-p=2$, subjective discount rate $\del=0.1$, excess risky mean return $\mu=0.05$, and volatility $\sig=0.2$.} As the bottom left plot indicates, the agent's stock investment is significantly higher than the Merton portfolio weight in the austerity region (i.e when $x<x_0$). As $x$ increases above $x_0$, the agent's optimal portfolio weight has a sharp decrease below Merton's weight, and then gradually increases above it. As the bottom right plot indicates, the agent takes extreme austerity measure by not consuming when $x<x_0$. As $x$ becomes larger than $x_0$, the agent first consumes at a rate above Merton's policy. But, once her wealth-to-habit ratio becomes sufficiently large, she consumes at a rate significantly lower than Merton's policy. These findings are expected and consistent with other studies involving consumption habit formation and loss aversion. In particular, habit formation makes the agent more reluctant to consume at a higher rate, since doing so would increase her habit and reduce future utility of consumption. Loss aversion, on the other hand, makes the agent wary of consumption below their loss threshold. Thus, the agent avoids consumption and builds up her wealth so that, in the future, she is able to consume above her loss threshold.

In the next two subsections, we provide sensitivity of the optimal policies to changes in several model parameters.

\subsection{Sensitivity to risk-aversion and loss-aversion parameters}\label{subsec:sens_RALA}
In this subsection, we investigate how the optimal policies, corresponding to the power S-shaped utility of the previous subsection, are affected if we change the risk-aversion parameter $p$, the loss-aversion parameter $\kap$, or the loss reference level $\al$. When investigating sensitivity of our results with respect to a certain parameter, we change the value of that parameter while keeping the remaining parameters at their default values in \cref{eq:default_param_utility} and \cref{eq:default_param_market}.

The top row of Figure \ref{fig:power_util_sens} illustrates sensitivity with respect to the risk tolerance parameter $p$. Note that $1-p$ is the (constant) relative risk aversion of the gain power utility in \cref{eq:U_power}. So, a higher value of $p$ means that the agent is less risk-averse (i.e. more risk tolerant). The plots show the default value of $p=-1$ by solid black curves, a more risk averse value $p=-2$ by the dashed blue curves, and the more risk-tolerant value of $p=0.0$ by the dash-dotted green curve. 
When $x$ is sufficiently larger than $x_0$ (that is, when the agent is sufficiently away from the austerity region), more risk averse agents (with lower value of $p$) generally invest less in the risky asset and consume less, both of which are reasonable and expected. The plots also show that more risk averse agents take stricter austerity measure by having a higher austerity threshold $x_0$.

The middle row of Figure \ref{fig:power_util_sens} shows sensitivity of the optimal policies with respect to the loss-aversion parameter $\kap$. By \cref{eq:U_power},  a larger value of $\kap$ indicates that the agent has a stronger aversion to consuming below the reference habit level. 
The solid black curves correspond to the default value of $\kap=1$, the dashed blue curves correspond to a slightly more loss averse agent with $\kap=2$, and the dash-dotted green curves correspond to an extremely more loss averse agent with $\kap=100$. 
The plots indicate that increasing $\kap$ has a mixed effect. On one hand, it decreases the austerity threshold $x_0$. On the other hand, the drop in the stock investment (due to austerity measure) is more pronounced for higher values of $\kap$, indicating that the agent is taking a more drastic austerity measure.

%
%

\begin{figure}[H]
	\centerline{
			\adjustbox{trim={0.0\width} {0.0\height} {0.0\width} {0.5\height},clip}
			{\includegraphics[scale=0.32, page=1]{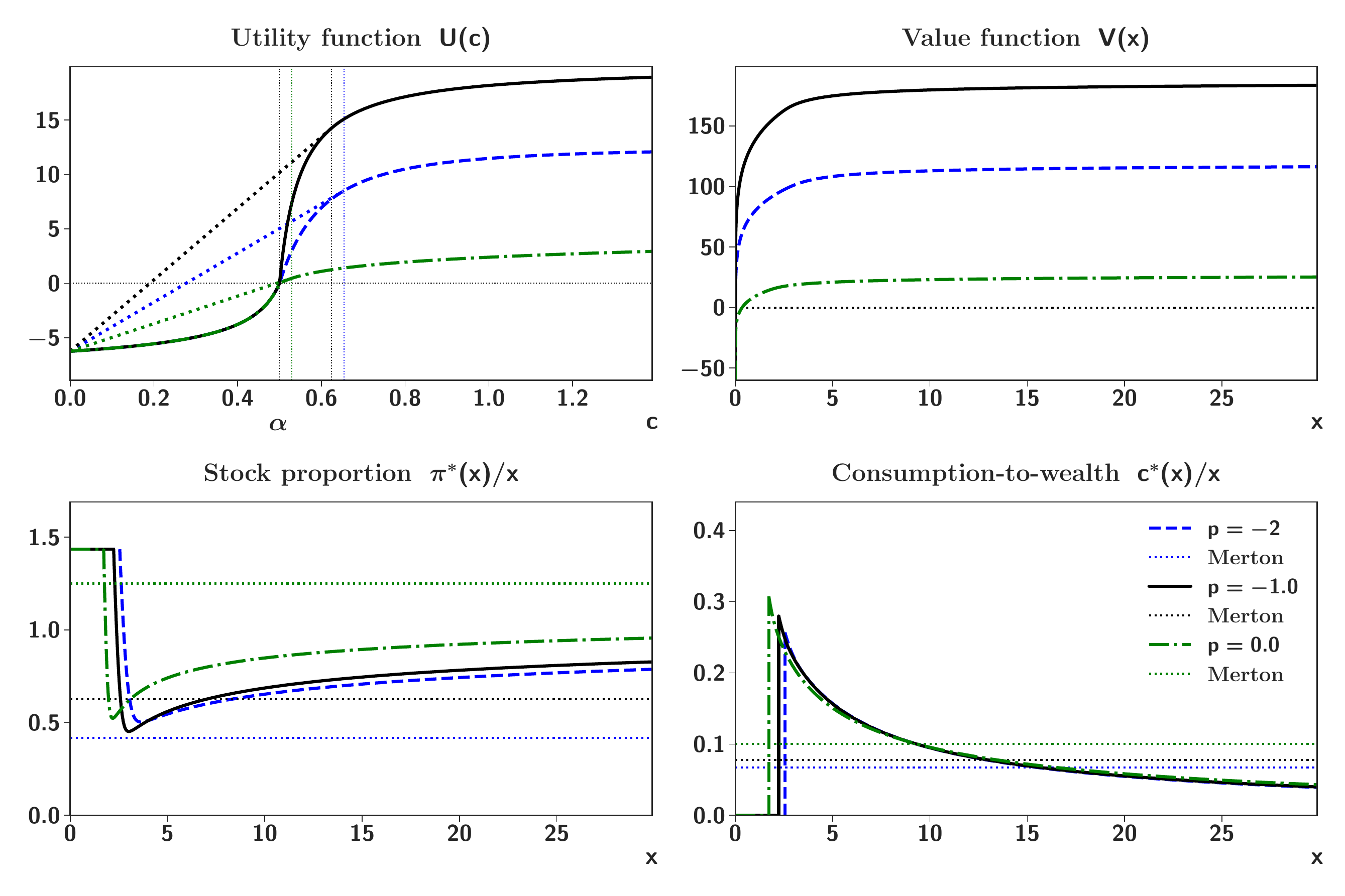}}
	}\vspace{1em}
	\centerline{
			\adjustbox{trim={0.0\width} {0.0\height} {0.0\width} {0.5\height},clip}
			{\includegraphics[scale=0.32, page=1]{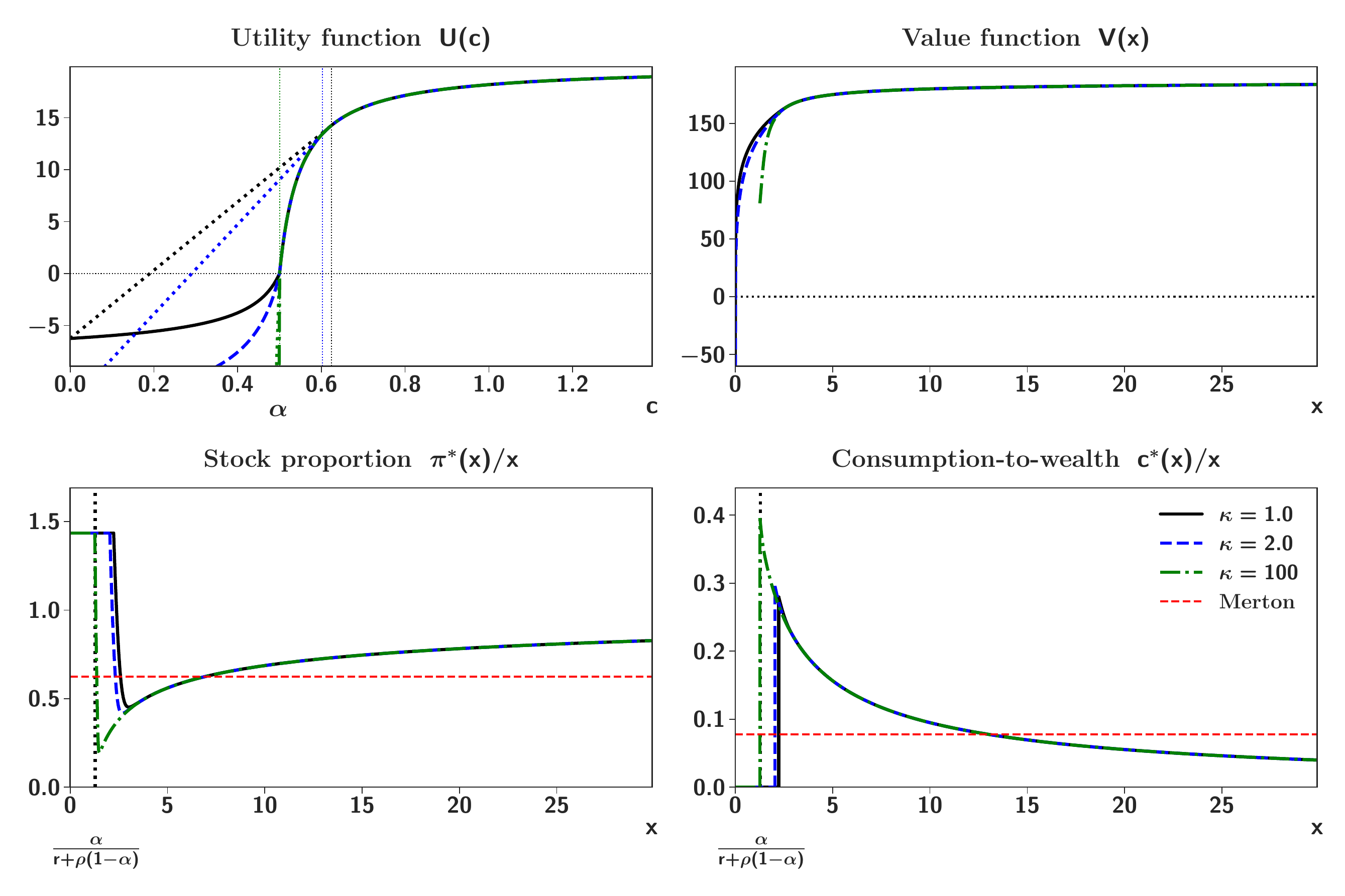}}
	}\vspace{1em}
	\centerline{
			\adjustbox{trim={0.0\width} {0.0\height} {0.0\width} {0.5\height},clip}
			{\includegraphics[scale=0.32, page=1]{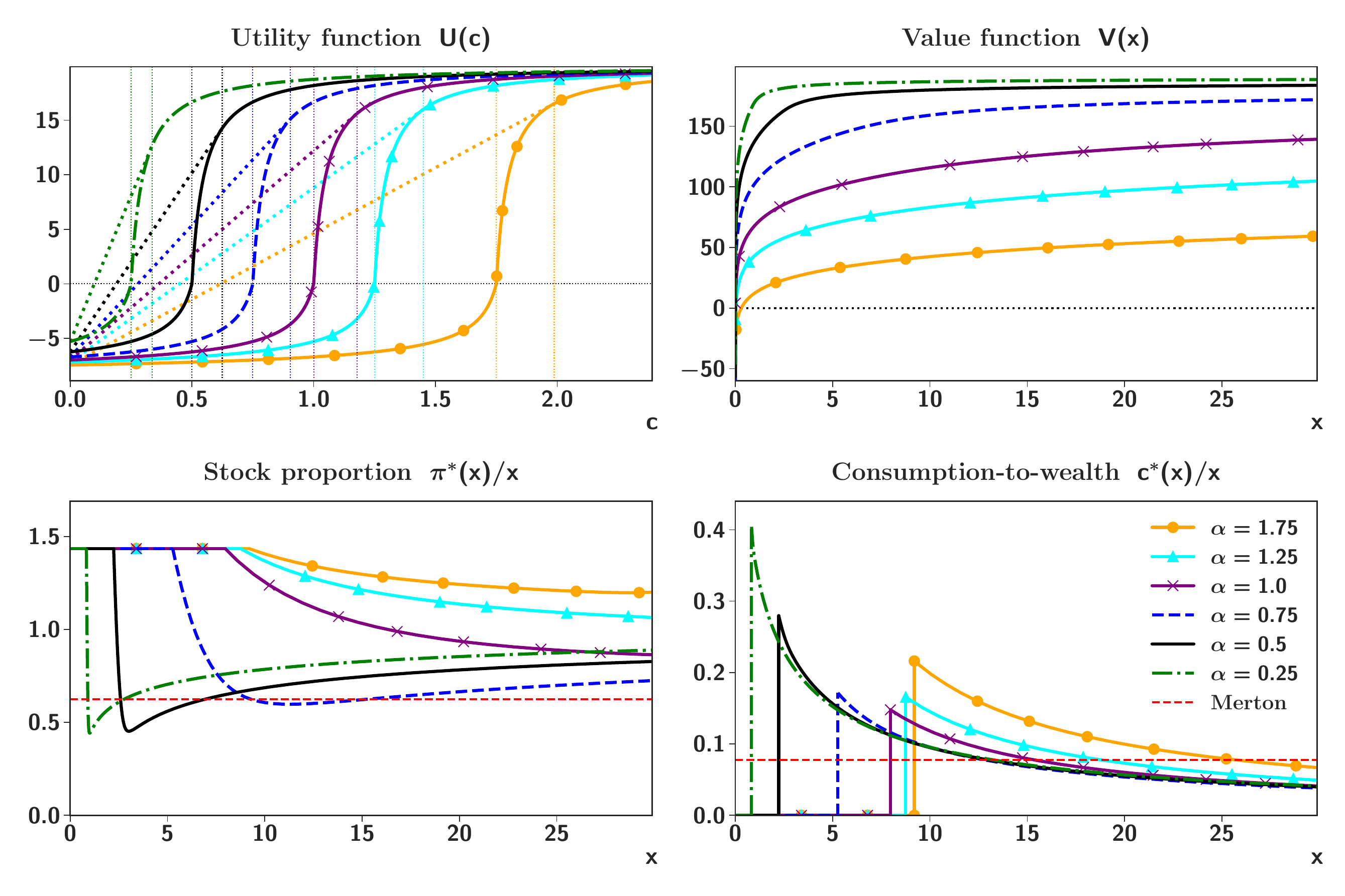}}
	}
	\caption{ Sensitivity of the optimal relative policies with respect to the parameters of the S-shaped power utility, namely, the risk tolerance parameter $p$ (the top row),  the loss aversion parameter $\kap$ (the middle row), and the loss reference $\al$ (the bottom row).
		\label{fig:power_util_sens}}
\end{figure}

The plot for the optimal investment (i.e. the middle left plot of Figure \ref{fig:power_util_sens}) indicates an interesting phenomenon in the case of extreme loss-aversion (i.e. the dash-dotted green curve). The portfolio weight of the risky asset approaches zero at the threshold $x=\al/\big(r+\rho(1-\al)\big)$. This behavior is explained as follows. For very larger values of $\kap$, the agent becomes extremely loss averse and would avoid consuming at a rate $C_t<\al H_t$. Thus, the agent (almost) adapts the constraint $C_t\ge \al H_t$ for all $t\ge0$. This is the habit-formation constraint in \cite{ABY22}. By Lemma 2.2 therein, under this consumption constraint and to avoid bankruptcy, the wealth-to-habit ratio must satisfy the no-bankruptcy constraint $X_t=W_t/H_t\ge\al/\big(r+\rho(1-\al)\big)$. To enforce this constraint, any admissible policy must invest fully in the riskless asset (i.e $\Pi_t=0$) whenever $X_t=\al/\big(r+\rho(1-\al)\big)$. For the case $\kap=100$ (i.e. the dash-dotted green curve), the agent almost adapts the habit-formation constraint $C_t\ge \al H_t$, $t\ge0$. As a result, she also almost enforces the the no-bankruptcy constraint $X_t=W_t/H_t\ge\xu$, $t\ge0$, by almost fully investing in the riskless asset at $x=\al/\big(r+\rho(1-\al)\big)$. In the Subsection \ref{subsec:limits}, we investigate how our model approximates the results of \cite{ABY22} under the habit-formation constraint. See Figure \ref{fig:ABY22_Limit} and its discussion.

The bottom row of Figure \ref{fig:power_util_sens} illustrates the effect of changing the habit reference point $\al$ on the optimal policies. By \cref{eq:U_power}, increasing $\al$ shifts the utility function to the right. 
That is, increasing $\al$ makes the agent more loss averse by increasing her loss region. The plots display three cases with $\alpha<1$ (black solid: $\alpha=0.5$; blue dashed: $\alpha=0.75$; green dash--dotted: $\alpha=0.25$) and three cases with $\alpha \ge 1$ (purple $\times$: $\alpha=1$; cyan $\blacktriangle$: $\alpha=1.25$; orange $\bullet$: $\alpha=1.75$). As expected, the austerity threshold $x_0$ increases with $\alpha$. Moreover, $x_0$ is substantially more sensitive to changes in $\alpha$ than to variations in risk tolerance $p$ or loss aversion $\kappa$.

Regarding the intensity of austerity behavior, two clear patterns emerge:
\begin{itemize}
	\item \textbf{When $\alpha<1$:}  
	The less loss-averse agent (green) begins consuming at a higher rate upon exiting austerity, but her consumption soon aligns with that of more loss-averse agents (black and blue). For large $x$, she invests more in the risky asset, whereas for small $x$ she invests less, reflecting the tighter austerity imposed on more loss-averse agents.
	\item \textbf{When $\alpha \ge 1$:}  
	The more loss-averse agents (cyan and orange) consume at higher rates outside the austerity region and invest more aggressively for all wealth levels. The stock proportion $\pi^*(x)/x$ is decreasing for $x>x_0$. These patterns indicate stricter austerity and more pronounced consumption deferral under high aspiration levels.
\end{itemize}

\subsection{Sensitivity to the market return, habit persistent, and utility discounting}\label{subsec:Sens_mu_rho_del}
In Figure \ref{fig:power_market_HF_sens}, we investigate the effect of changing other model parameters, namely, the market excess expected return $\mu$, the habit formation persistence $\rho$, and the subjective discount rate $\del$. 
As in the previous subsection, we only change one parameter while keeping the remaining parameters at their default values in \cref{eq:default_param_utility} and \cref{eq:default_param_market}.

The top row of Figure \ref{fig:power_market_HF_sens} shows sensitivity of 
the optimal policies with respect to the excess expected return $\mu$. The solid black curves represent the default value of $\mu=5\%$, the dashed blue curves represent the case of a more profitable risky investment with $\mu=10\%$, and the dash-dotted green curves represent a less profitable risky investment with $\mu=2\%$.\footnote{Recall that the expected stock return is $\mu+r$ in our model, see \cref{eq:S-SDE}. Since $r=4\%$, these values correspond to expected stock returns of $9\%$, $14\%$, and $6\%$, respectively.} 
The top plot of Figure \ref{fig:power_market_HF_sens} indicate that the agent invests more in the risky asset and consumes more when the risky asset is more profitable. The austerity threshold $x_0$ is decreasing in $\mu$, meaning that the loss averse agent requires a lower wealth threshold when the risky asset is more profitable, which is reasonable. 

The plots in the middle row of Figure \ref{fig:power_market_HF_sens} illustrate sensitivity of 
the optimal policies with respect to the habit formation persistence parameter $\rho$. As pointed out after \cref{eq:H}, a larger value of $\rho$ makes the agent's habit more sensitive to her current consumption, while a smaller value makes the habit process more persistent by assigning higher weights to past consumption rates. In the middle row plots of Figure \ref{fig:power_market_HF_sens}, the solid black curves represent the default value of $\rho=0.7$, the dashed blue curves represents the case of a more persistent habit process with $\rho=0.2$, and the dash-dotted green line represents a more transient (i.e. more sensitive to current consumption) habit process with $\rho=1.2$. 
For smaller values of $\rho$, it is harder for the agent to change her habit process. Thus, the effect of loss aversion and habit formation become more pronounced. That is, for smaller values of $\rho$, the

%
%

\begin{figure}[H]
	\centerline{
			\adjustbox{trim={0.0\width} {0.0\height} {0.0\width} {0.5\height},clip}
			{\includegraphics[scale=0.32, page=1]{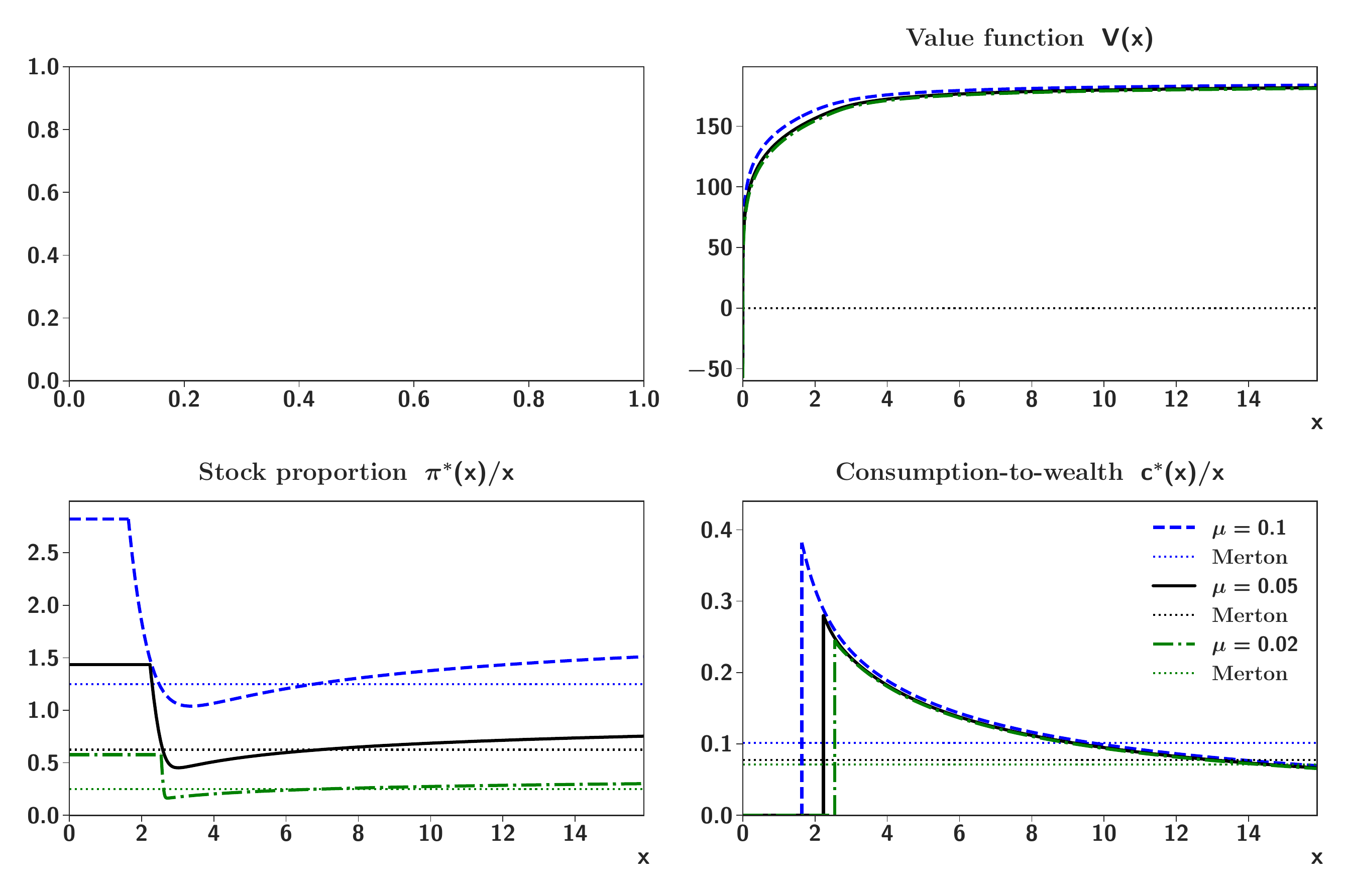}}
	}\vspace{1em}
	\centerline{
			\adjustbox{trim={0.0\width} {0.0\height} {0.0\width} {0.5\height},clip}
			{\includegraphics[scale=0.32, page=1]{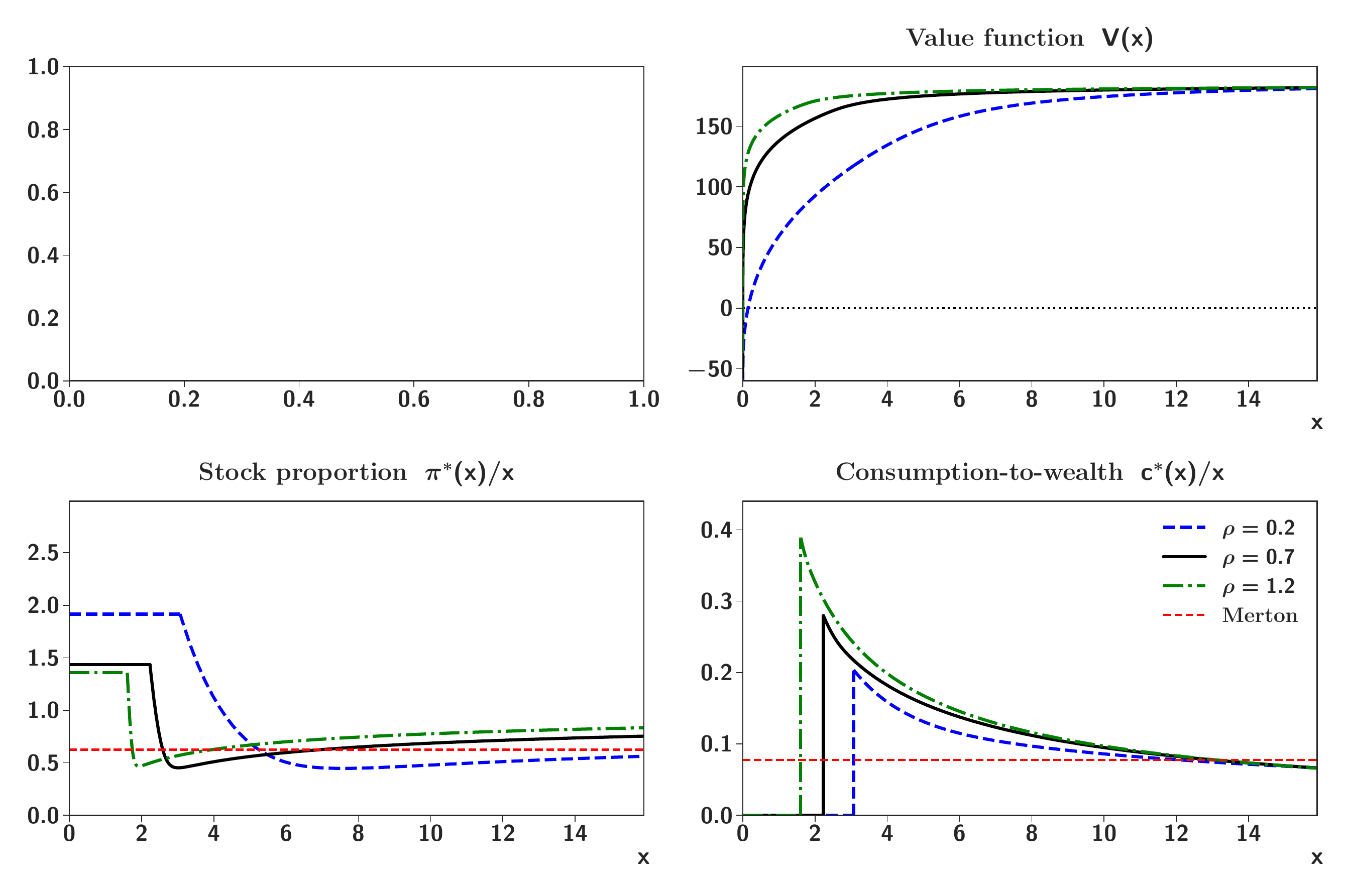}}
	}\vspace{1em}
	\centerline{
			\adjustbox{trim={0.0\width} {0.0\height} {0.0\width} {0.5\height},clip}
			{\includegraphics[scale=0.32, page=1]{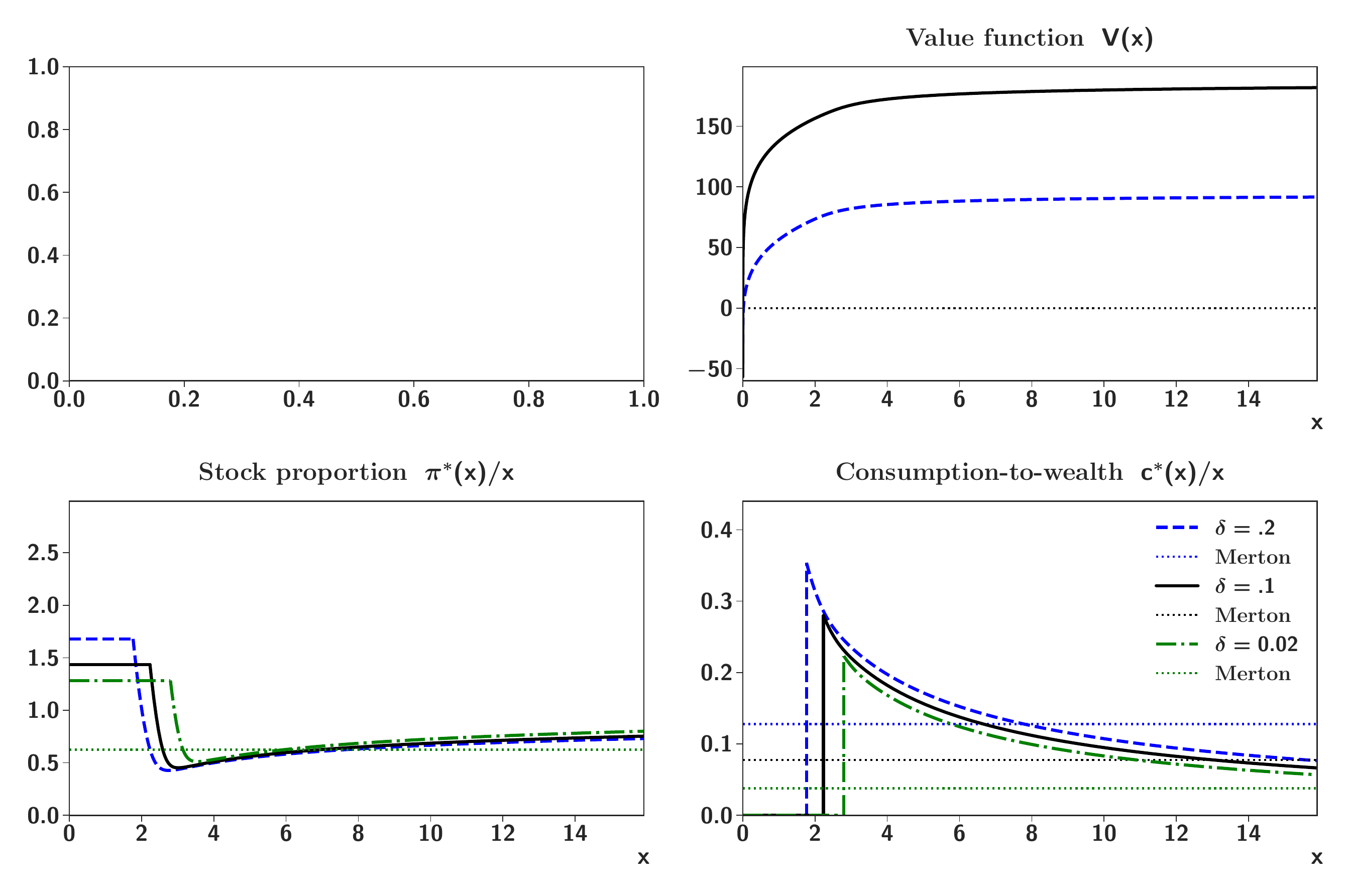}}
	}
	\caption{Sensitivity of the value function and optimal relative policies with respect to the excess return of the risky asset $\mu$ (the top row), the habit formation persistence rate $\rho$ (the middle row), and  the subjective utility discount rate $\del$ (the bottom row).
		\label{fig:power_market_HF_sens}}
\end{figure}

\noindent   agent takes stricter austerity measure in the loss region (i.e. when $x$ is near $x_0$). For smaller values of $\rho$, she also reduces her consumption and stock investment for sufficiently large values of $x$ in the gain region (i.e. for sufficiently large $x$ larger than $x_0$). 

Finally, the plots in the bottom row of Figure \ref{fig:power_market_HF_sens} show the effect of changing the subjective discount rate $\del$. By \cref{eq:VF}, a larger value of $\del$ indicates that the agent is more impatient in that she prefers consuming earlier rather than later. In the plots, the solid black curves represent the default value of $\del=0.1$, the dashed blue curves represents the case of a more impatient agent with larger $\del=0.2$, and the dash-dotted green curves represent a more patient agent with smaller $\del=0.02$. 
The bottom left plot of Figure \ref{fig:power_market_HF_sens} indicates that, during austerity (i.e. when $x$ is small), the more impatient agent invests more in the risky asset and have a smaller austerity threshold $x_0$, which is reasonable. As $x$ becomes larger, the more impatient agent more quickly adjust her investment pattern from the austerity measure, and will eventually (i.e. as $x$ gets larger) invest more-or-less the same amount as a more patient agent. As the bottom right plot of Figure \ref{fig:power_market_HF_sens} indicates, the more impatient agent consumes more than a more patient agent. These are also expected behavior.

%
%
\subsection{Connections with existing literature}\label{subsec:limits}

The goal of this subsection is to reproduce numerical experiments in other papers using our model. 
Doing so serves two purposes. Firstly, it provides an alternative way of validating our results. Secondly, it showcases the flexibility of our model as it can encompass several other studies as limiting cases. 

We provide connection between our model and three other infinite horizon optimal consumption and investment models, namely,
\begin{enumerate}
	\item[(a)] The classical infinite horizon model in \cite{Merton69} under a power utility.
	
	
	\item[(b)] The multiplicative habit formation model with (strictly concave) power utility in \cite{Rogers2013}.
	
	\item[(c)] The optimal policies in \cite{ABY22} under habit formation constraint. 
\end{enumerate}
These models are limiting cases of our model, in that they are obtained by letting specific parameters approach certain values in our model.

Since our goal is to reproduce the numerical experiments of other papers, we have to use different model parameters than the one used in Subsection \ref{subsec:power}.

We start with the classical infinite horizon optimal investment and consumption problem in \cite{Merton69}, namely,
\begin{align}\label{eq:Merton_VF}
	V_M(w) = \sup_{(\Pi, C)\in\Ac_0(w)} \Eb\left[\int_0^{+\infty}\ee^{-\del t} \frac{C_t^p}{p} \dd t\right],\quad w>0.
\end{align}
In this model, the value function is
\begin{align*}
	V_M(w) = \gam_M^{p-1} \frac{w^p}{p},\quad w>0,
\end{align*}
and the feedback optimal policies are $\big\{\Pi_M(W^*_t)\big\}_{t\ge0}$ and $\big\{C_M(W^*_t)\big\}_{t\ge0}$, in which
\begin{align}\label{eq:Merton_pi_c}
	\Pi_M(w):= \frac{\mu}{(1-p)\sig^2} w,\quad \text{and}\quad C_M(w):= \gam_M w,
\end{align}
with the constant $\gam_M$ given by
\begin{align*}
	\gam_M := \frac{1}{1-p}\left[\del - p \left(r +\frac{\mu^2}{2(1-p)\sig^2}\right)\right].
\end{align*}
Consider our model with the power S-shaped utility \cref{eq:U_power}. As $\al,\eps_1\to0^+$ for the utility function $U(c)$ in \cref{eq:U_power}, the function $U(c)+\eps_1^p/p$ becomes the power utility function in \cref{eq:Merton_VF}. Furthermore, by \cref{eq:H}, the habit process becomes constant (i.e. $H_t\to h$) as $\rho\to0^+$. It then follows that for small values of $\rho$, $\al$, and $\eps_1$, the optimal stock proportion $\pi^*(x)/x$ and the optimal proportional consumption $c^*(x)/x$ in our model should approach, respectively, the constants $\frac{\mu}{(1-p)\sig^2}$ and $\gam_M$ in Merton's model. This convergence is observed in Figure \ref{fig:Merton_Limit}, which shows Merton's policies vs. ours with $\al=\rho=\eps_1=0.01$.

%
%
\begin{figure}[htbp]
	\centerline{
			\adjustbox{trim={0.0\width} {0.025\height} {0.0\width} {0.5\height},clip}
			{\includegraphics[scale=0.32, page=1]{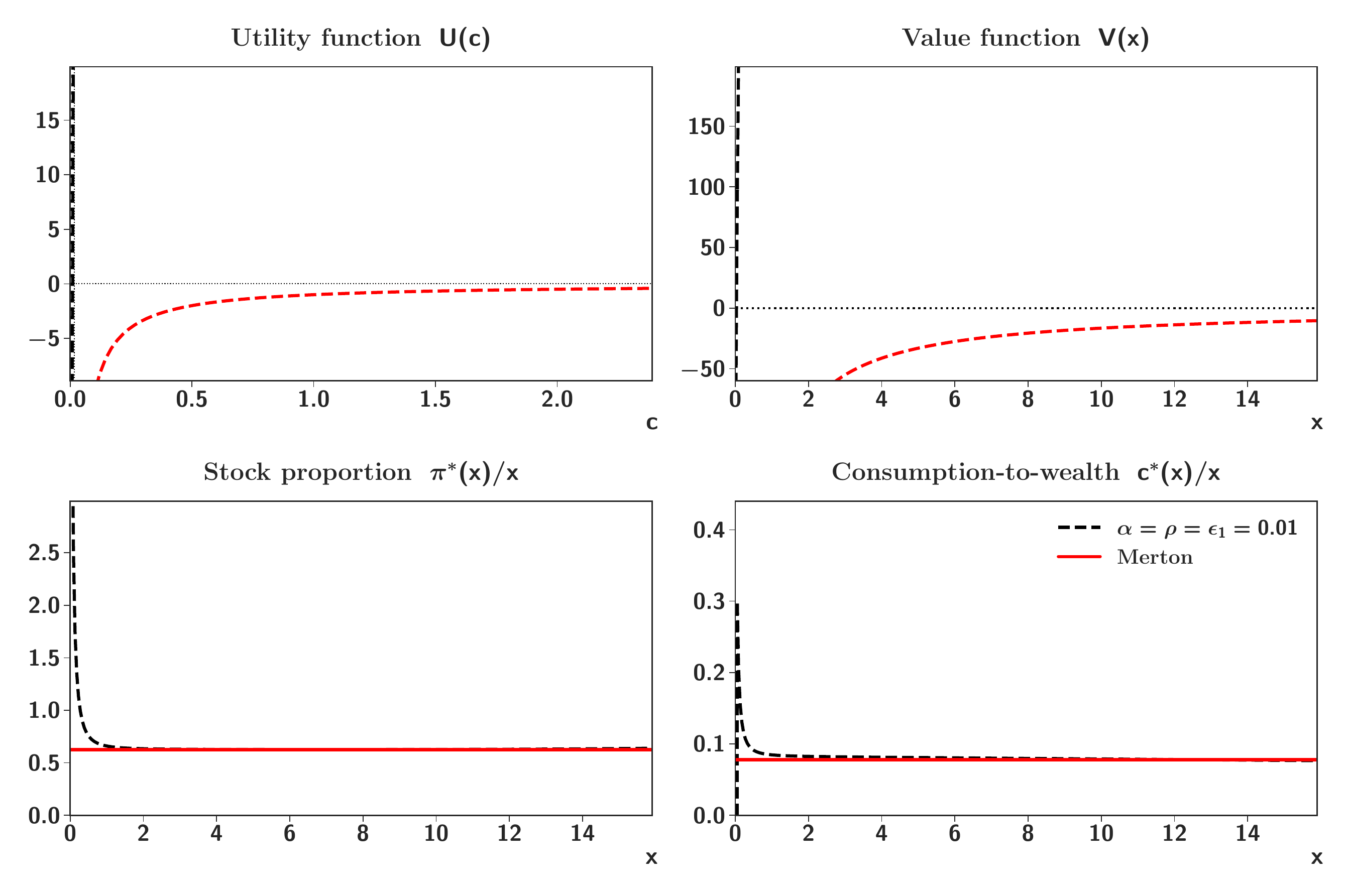}}
	}
	\caption{
		By letting $\al\to0^+$ and $\rho\to 0^+$, our model converges to the classical infinite horizon optimal consumption model of \cite{Merton69}.
		\vspace{1.5em}
		\label{fig:Merton_Limit}}
	\centerline{
			\adjustbox{trim={0.0\width} {0.025\height} {0.0\width} {0.5\height},clip}
			{\includegraphics[scale=0.32, page=1]{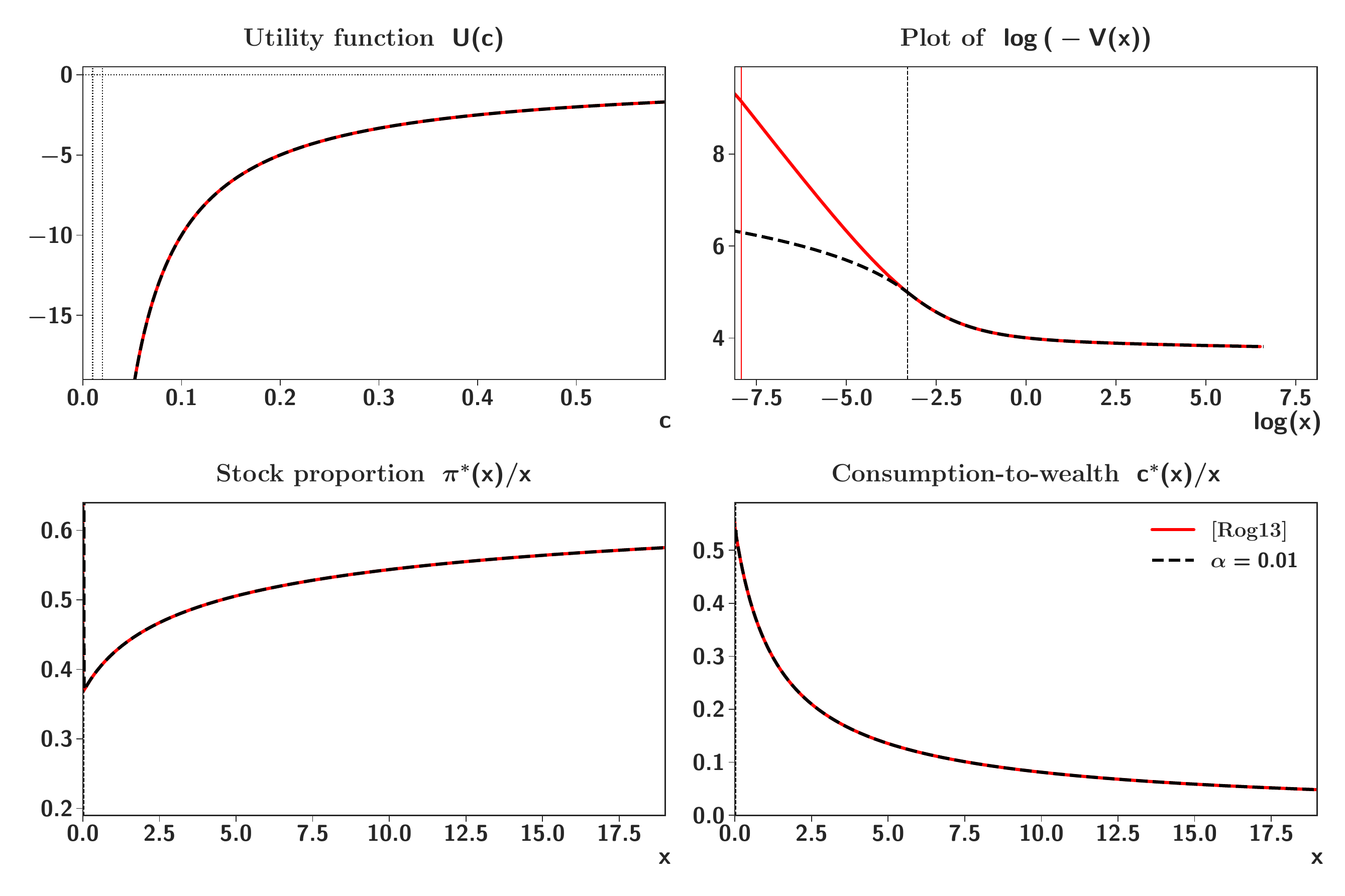}}
	}
	\caption{
		By letting $\alpha \to 0+$, our model converges to multiplicative habit formation proposed in Section 2.3 of \cite{Rogers2013}.
		\vspace{1.5em}
		\label{fig:MultipHF_limit}}
	\centerline{
			\adjustbox{trim={0.0\width} {0.025\height} {0.0\width} {0.5\height},clip}
			{\includegraphics[scale=0.32, page=1]{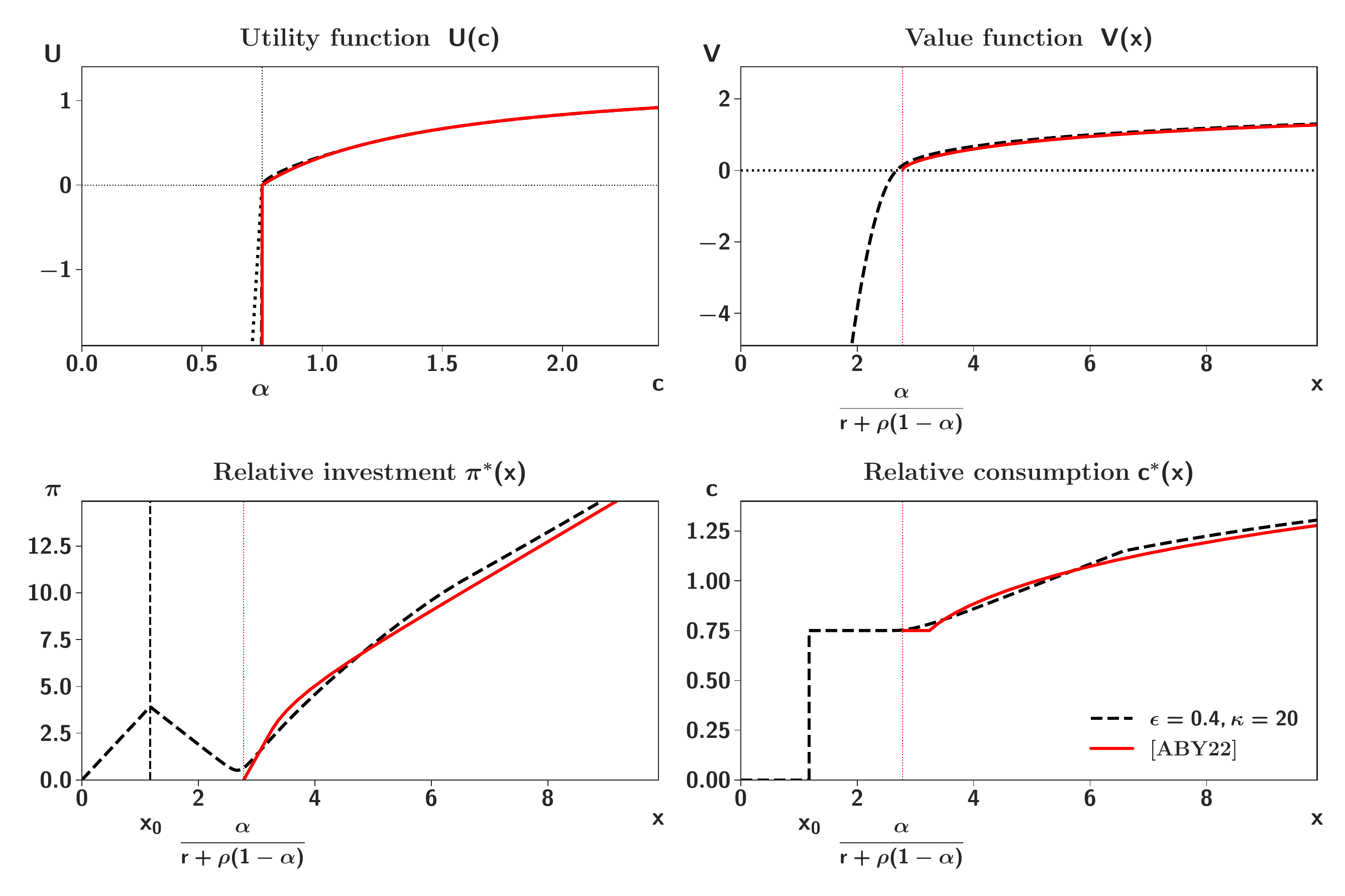}}
	}
	\caption{
		By letting $\kappa\to \infty$ and $\epsilon\to 0$, our model converges to habit formation constraint model in \cite{ABY22}.
		\label{fig:ABY22_Limit}}
\end{figure}


We now turn to a multiplicative habit formation model without loss-aversion, proposed in Section 2.3 of \cite{Rogers2013}. After a dimension reduction similar to that in \cref{eq:ratios}, they numerically solve the problem
\begin{align*}
	V(x) = \sup_{(\pi, c)\in\Acr(x)} 
	\Eb\left[\int_0^{+\infty}-\ee^{-\del t} c_t^{-1}\dd t\right].
\end{align*}
In \cite{Rogers2013}, they utilized two different numerical methods and achieved consistent results, as shown in Fig. 2.2 on page 36 of \cite{Rogers2013}. For the comparison, let us set $p=-1$, $\delta=0.02$, $\sigma=0.35$, $r=0.05$, and $\mu+r=0.14$. Figure \ref{fig:MultipHF_limit} replicates the plots in \cite{Rogers2013}, including the utility function, the logarithm of the minus value function, the optimal stock proportion, and the optimal consumption-to-wealth ratio. As Figure \ref{fig:MultipHF_limit} clearly illustrates, setting $\al=0.01$ in our model (with the above mentioned values for other parameters) yields feedback optimal stock proportion $\pi^*(x)/x$ and consumption-to-wealth ratio $c^*(x)/x$ that are almost indistinguishable with the curves shown in Fig. 2.2 on page 36 of \cite{Rogers2013}. 

In the rest of this subsection, we compare with the recent paper \cite{ABY22} with the habit formation constraint, which solves the following problem
\begin{align*}
	V(x) = \sup_{(\pi, c)\in\Acr(x)} \left\{
	\Eb\left[\int_0^{+\infty}\ee^{-\del t} \frac{c_t^p}{p}\dd t\right]
	\,:\,c_t\ge\al \text{ and } X_t\ge\xu \text{ for all } t\ge0
	\right\},\quad x\ge \xu.
\end{align*}
Here, $\xu :=  \frac{\al}{r+\rho(1-\al)}$ and $p<0$, and the habit formation constraint $C_t\ge \al H_t$ is enforced for all $t\ge0$. Under this constraint and to avoid bankruptcy, the wealth-to-habit ratio must be above the minimum bound $X_t:=W_t/H_t\ge \xu$, $t\ge0$; see Lemma 2.2 in \cite{ABY22}.

In their numerical example, they choose $p=-1$. Thus, their utility function is
\begin{align*}
	U_0(c) &= 
	\begin{cases}
		\displaystyle\frac{1}{\al}-\frac{1}{c};\quad c\ge \al\\
		-\infty;\quad 0<c<\al,
	\end{cases}
\end{align*}
in which we have shift their utility by the constant $1/\al$ to match our convention of $U(\al)=0$. To approximate the utility function $U(\cdot)$, we use the following three-piece S-shaped utility function parameterized by $\kap,\eps>0$,
\begin{align}\label{eq:ABY22_approx_utility}
	U_\eps(c) &:= 
	\begin{cases}
		\displaystyle\frac{1}{\al}-\frac{1}{c};\quad c\ge \al+\eps,\\[1ex]
		\displaystyle \frac{1}{\al(\al+\eps)}\eps^{\frac{\eps}{\al+\eps}} (c-\al)^\frac{\al}{\al+\eps};\quad \al\le c < \al+\eps,\\[1ex]
		\displaystyle2 \kap (\al-c)^{0.5};\quad 0<c<\al.\\
	\end{cases}
\end{align}
In particular, $U_\eps(\cdot)\to U_0(\cdot)$ as $\kap\to+\infty$ and $\eps\to 0$. Note also that, for any $\eps>0$, $U_\eps(\cdot)$ satisfies \cref{eq:U}, \cref{eq:Uc}, Assumption \ref{assumption:moderateLA}, and Assumption \ref{assum:Growth}.

Figure \ref{fig:ABY22_Limit} illustrates the optimal relative consumption and investment policies in Figure 3 \cite{ABY22} and the corresponding optimal policies in our model with values of $\epsilon=0.4$ and $\kappa=20$ in \cref{eq:ABY22_approx_utility}. To match the parameter values in \cite{ABY22}, we have also set $\delta=0.3$, $\al=0.75$, $\sigma=0.2$, $r=0.02$, and $\mu+r=0.12$. As the plots indicate, our policy closely approximate those in \cite{ABY22}. Finally, note that \cite{ABY22} did not study the case $0\le x\le \al/\big(r+\rho(1-\al)\big)$ in which bankruptcy is unavoidable. Our model, however, can approximate the optimal policy for the extension of \cite{ABY22} model to the case $x\in(0,\xu)$.

%
%
\subsection{Non-power S-shaped utilities}\label{subsec:non-power}

Our previous examples mainly focus on the commonly used power S-shaped utility functions and illustrate some optimal portfolio and consumption behavior induced by the loss-aversion and the habit formation. However, our theoretical characterization of the optimal relative policies and the associated free boundary problems in Theorem 
\ref{thm:verification_moderate} are applicable to general S-shaped utilities as long as Assumptions \ref{assumption:moderateLA} and \ref{assum:Growth} are satisfied. For instance, we may consider the following two examples of non-power S-shaped utility functions that appeared in the literature:

(i) The Exponential S-shaped utility, for $q\ge p>0$, $\kap\ge1$,
\begin{align*}
	U(c) = 
	\begin{cases}
		\displaystyle 
		1-\ee^{-p(c-\al)},\quad c>\al,\\[1em]
		\displaystyle 
		\kap\left(\ee^{-q(c-\al)}-1\right),\quad 0\le c\le \al.
	\end{cases}
\end{align*}

%
%

\begin{figure}[tbp]
	\centerline{
			\adjustbox{trim={0.0\width} {0.02\height} {0.0\width} {0.0\height},clip}
			{\includegraphics[scale=0.35, page=1]{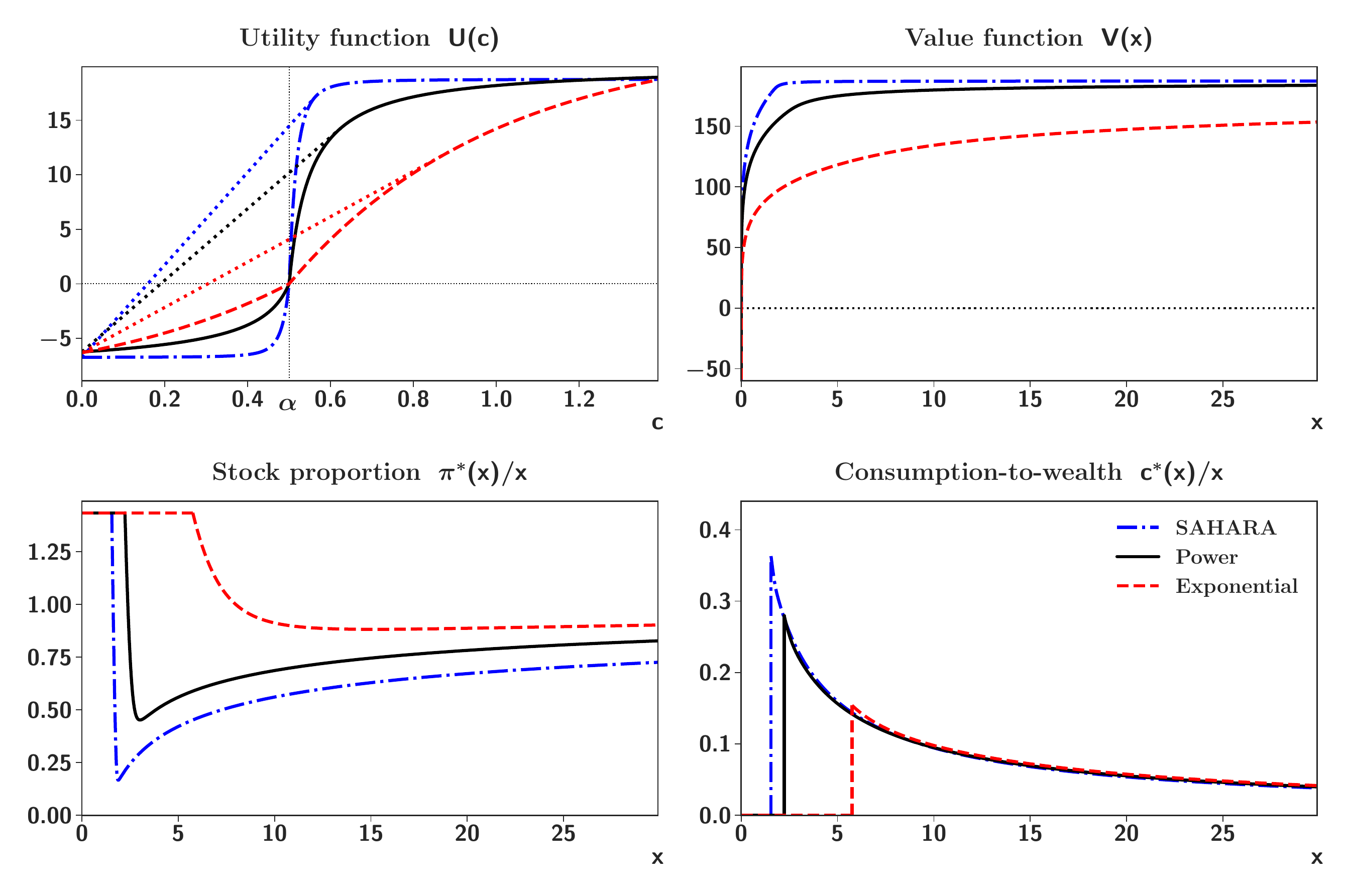}}
	}
	\caption{
		The value functions and the optimal feedback controls for different types of S-shaped utility functions.
		\label{fig:non_power}}
\end{figure}

(ii) The Symmetric Asymptotic Hyperbolic Absolute Risk Aversion (SAHARA) S-shaped utility 
\begin{align*}
	U(c) = 
	\begin{cases}
		U^{\gamma_1, \beta_1}_+(c-\al),\quad c>\al,\\
		-U^{\gamma_2, \beta_2}_-(\al-c),\quad 0\le c\le \al,
	\end{cases}
\end{align*}
where the utility function $U^{\gamma,\beta}(x)$, introduced by \cite{CPV2011}, is given by
\begin{align}\label{eq:SAHARA}
	U^{\gam,\beta}(x) :=
	\begin{cases}
		\frac{1}{1-\gam^2}\left(x+\gam\sqrt{x^2+\beta^2}\right)
		\left(x+\sqrt{x^2+\beta^2}\right)^{-\gam},
		&\quad \gam>0, \gam\ne1,\\[1ex]
		\frac{1}{2}\log\left(x+\sqrt{x^2+\beta^2}\right) + \frac{x}{2\left(x+\sqrt{x^2+\beta^2}\right)},
		&\quad \gam=1,
	\end{cases}
\end{align}
for $x\in\Rb$, in which $\gam>0$ is the risk-aversion parameter and $\beta>0$ is the scaling factor. They are characterized by their absolute risk-aversion function $-(U^{\gam,\beta})^{\prime\prime}(x)/(U^{\gam,\beta})'(x)=\gam/\sqrt{x^2+\beta^2}$, $x>0$. 
The power and logarithmic utility functions can be expressed as the limit of the SAHARA utility function:
\begin{itemize}
	\item For $\gam>0$, $\gam\ne1$, and $\beta\to 0^+$, we obtain the power utility function $U^{\gam,0^+}(x)= \frac{2^{-\gam} x^{1-\gam}}{1-\gam}$, $x>0$.
	\item For $\gam=1$ and $\beta\to 0^+$, we obtain the logarithmic utility function $U^{1,0^+}(x)= 0.5\log2x + 0.25$, $x>0$.
\end{itemize}

To better exemplify the generality of our theoretical findings, we also plot the value function, the optimal stock proportion $\pi^*(x)/x$ and the optimal consumption-to-wealth $c^*(x)/x$ in Figure \ref{fig:non_power} for the above two examples of S-shaped utility functions by employing the piecewise feedback functions in \cref{pi_optimal_moderate} and \cref{c_optimal_moderate} and the solutions $\varphi(y)$ and $\psi(y)$ of the associated free boundary problems in Theorem \ref{thm:DualFBP-moderate}.

%
%
\section{Technical Proofs}\label{sec:techproofs}
This sections includes the technical proofs of several results in the earlier sections.

\subsection{Solution to an auxiliary boundary value problem}\label{app:Aux}

As a preparation for the proof of Theorem \ref{thm:DualFBP-moderate}, this subsection first investigates an auxiliary system of ODEs with certain boundary value conditions. In particular, let us consider the boundary value problem
\begin{align}\label{eq:FBP_ybar}
	\begin{cases}\displaystyle
		\varphi'(y) = g_1\big(y,\varphi(y),\psi(y)\big),\quad y<\yb,\\
		\psi'(y) =  g_2\big(y,\varphi(y),\psi(y)\big),\quad y<\yb,\\
		\varphi(\bar{y}) = \phi_0,\quad\psi(\bar{y})= \Psi(\bar{y}),
	\end{cases}
\end{align}
for a given boundary $\bar{y}\in (\gamma\phi_0,\phi_0)$. 

Throughout this subsection, we adopt the notations and assumptions in Theorem \ref{thm:DualFBP-moderate}. Let us define $\gam := \frac{\lam}{\lam-1}\in(0,1)$,
\begin{align}\label{eq:Dc}
	\Dc := \{(y,\varphi,\psi):y>0,\varphi>0,\psi\in (0,1)\},
\end{align}
and
\begin{align}\label{Psi_def}
	\Psi(y) := 
	\frac{2\sig^2}{\phi_0\mu^2}\left(\frac{\del}{ \lam} +r + \rho-\del\right) (y-\phi_0)=
	\frac{1}{\phi_0}(\lambda-1)(y-\phi_0),\quad y>0.
\end{align}
To derive the second equation in \cref{Psi_def}, we have used the fact that $\lambda$ satisfies
\begin{align*}
	\frac{\mu^2}{2\sigma^2}\lambda^2-\left(\frac{\mu^2}{2\sigma^2}+r+\rho-\delta\right)\lambda -\delta = 0,
\end{align*}
as pointed out in Remark \ref{rem:lam}.
Note also that 
$\Psi(\gamma\phi_0)=1$, $\Psi(\phi_0)=0$.
Finally, for $(y,\varphi,\psi)\in \Dc$, we define
\begin{align}
	\label{eq:g1}
	g_1(y,\varphi,\psi) := \frac{\varphi}{y}(1-\psi),
\end{align}
and
\begin{align}\label{eq:g2}
	g_2(y,\varphi,\psi) :=
	\textstyle
	-\frac{2\rho\sig^2}{\mu^2} 
	\left[
	\frac{1-\psi}{y}
	\left(\frac{\mu^2}{2\rho\sig^2}\psi - (U_+')^{-1}\big(\varphi\big)+\frac{r-\del}{\rho}+1-\al\right)
	-\frac{r+\rho}{\rho \varphi} + \frac{\del}{\rho y}
	\right].
\end{align}

Because the boundary value conditions in \cref{eq:FBP_ybar} are in the interior of $\Dc$, and $g_1$ and $g_2$ are locally Lipschitz inside $\Dc$, \cref{eq:FBP_ybar} is locally solvable inside $\Dc$. In particular, there exists a function $\epsilon:(\gamma\phi_0,\phi_0)\to [0,\infty)$ with $\epsilon(y)<y$, such that $(\epsilon(\bar{y}),\bar{y}]$ is the maximal interval in which the solution of \cref{eq:FBP_ybar} exists inside $\Dc$. 

The next result provides further properties of the solution of \cref{eq:FBP_ybar} and, specifically, its dependence on $\yb$. Figure \ref{fig:phi_psi_moderate} illustrates the solutions of \cref{eq:FBP_ybar} for various values of $\yb$, and it is helpful to refer to this figure when  reading the statement of the lemma and its proof.

\begin{lemma}\label{lemma:exit_FBP_moderate}
	Given a $\yb\in (\gamma\phi_0,\phi_0)$, let $\big(\varphi_{\bar{y}}(\cdot),\psi_{\bar{y}}(\cdot)\big)$ be the local solution of \cref{eq:FBP_ybar} and $\epsilon(\bar{y})$ be the left endpoint of its maximal existence interval. Then, we have:\vspace{1ex}
	
	\noindent$(i)$ If $\gam\phi_0< \yb' < \yb<\phi_0$ and $\eps(\yb)<\yb'$, then $\varphi_{\yb'}(y)>\varphi_{\bar{y}}(y)$ and $\psi_{\bar{y}'}(y)>\psi_{\bar{y}}(y)$ for values of $y$ at which both solutions exist (i.e. for $\max\{\epsilon(\bar{y}),\epsilon(\bar{y}')\}< y\le\bar{y}'$).\vspace{1ex}
	
	\noindent$(ii)$ If $\epsilon(\bar{y})>0$, then $\big\{\big(y,\varphi_{\bar{y}}(y),\psi_{\bar{y}}(y)\big):\eps(\yb)<y\le\yb\big\}$ exits $\Dc$ either through
	$
	\overline{\Dc}_0 := (0,\phi_0)^2\times \{0\}
	$, 
	or through 
	$
	\overline{\Dc}_1:= (0,\gam\phi_0)\times (0,\phi_0)\times \{1\}.
	$
	\vspace{1ex}
	
	\noindent$(iii)$ For $\bar{y}$ sufficiently close to $\gam\phi_0$ (respectively, $\phi_0$), $\big\{\big(y,\varphi_{\bar{y}}(y),\psi_{\bar{y}}(y)\big):\eps(\yb)<y\le\yb\big\}$ exits $\Dc$ through $\overline{\Dc}_1$ (respectively, through $\overline{\Dc}_0$).\qed
\end{lemma}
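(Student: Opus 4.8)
The plan is to prove the three parts in order, since part (iii) is a continuity‑in‑$\yb$ argument resting on (i) and (ii). Two structural facts about the vector field on $\Dc$ will be used throughout. Since $g_1>0$ on $\Dc$, every $\varphi_{\yb}(\cdot)$ is strictly increasing, so $\phi_0\,y/\yb\le\varphi_{\yb}(y)\le\phi_0$ wherever the solution exists. Moreover $g_1$ is strictly decreasing in $\psi$, and $g_2$ is strictly decreasing in $\varphi$ — the latter because $U_+^{\prime(-1)}$ is decreasing (as $U_+'$ is), so both $\varphi$‑dependent terms of $g_2$ are decreasing in $\varphi$.

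For part (i) I would first establish the ordering of the two trajectories at the right endpoint $y=\yb'$ and then propagate it leftward by a no‑crossing argument. At $y=\yb'$ one has $\varphi_{\yb'}(\yb')=\phi_0>\varphi_{\yb}(\yb')$ immediately from monotonicity of $\varphi_{\yb}$. The $\psi$‑ordering $\psi_{\yb'}(\yb')=\Psi(\yb')>\psi_{\yb}(\yb')$ is the crucial sub‑claim that the $\yb$‑trajectory stays strictly below the graph of $\Psi$ on $(\eps(\yb),\yb)$; I would prove this by a first‑zero argument for $F(y):=\Psi(y)-\psi_{\yb}(y)$, which vanishes at $\yb$. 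At a would‑be first zero $y_*<\yb$ of $F$, $F'(y_*)=\tfrac{\lambda-1}{\phi_0}-g_2\big(y_*,\varphi_{\yb}(y_*),\Psi(y_*)\big)$, and since $\varphi_{\yb}(y_*)<\phi_0$ and $g_2$ is decreasing in $\varphi$ this is at most $\tfrac{\lambda-1}{\phi_0}-g_2\big(y_*,\phi_0,\Psi(y_*)\big)$; one then checks $g_2\big(s,\phi_0,\Psi(s)\big)\ge\tfrac{\lambda-1}{\phi_0}$ for $s\in(\gam\phi_0,\phi_0)$, using the quadratic satisfied by $\lambda$ (Remark \ref{rem:lam}), the identity $U_+^{\prime(-1)}(\phi_0)=c_0-\al$, and the defining relation $c_0U_+'(c_0-\al)-U_+(c_0-\al)=U_-(\al)$; hence $F'(y_*)<0$, contradicting $F>0$ just to the right of $y_*$. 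With strict ordering of both components at $y=\yb'$ in hand, suppose the ordering fails somewhere and let $y^*<\yb'$ be the supremum of the set of failure points; at $y^*$ exactly one of $\Delta\varphi:=\varphi_{\yb'}-\varphi_{\yb}$ and $\Delta\psi:=\psi_{\yb'}-\psi_{\yb}$ vanishes (both vanishing is excluded by uniqueness in the extended phase space $(y,\varphi,\psi)$), and monotonicity of $g_1$ in $\psi$, respectively of $g_2$ in $\varphi$, forces the vanishing difference to have strictly negative derivative at $y^*$, hence to be negative just to the right of $y^*$ — a contradiction.

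For part (ii), assume $\eps(\yb)>0$. Then $y\in[\eps(\yb),\yb]$ and $\varphi_{\yb}\in[\phi_0\eps(\yb)/\yb,\phi_0]$, both bounded away from $0$; on this compact set the potentially singular factors of $g_2$ (the $1/y$, $1/\varphi$ and $U_+^{\prime(-1)}(\varphi)$ terms) are controlled, so $\psi_{\yb}'$ is bounded, $\psi_{\yb}$ is uniformly continuous on $(\eps(\yb),\yb]$, and it extends continuously to $\eps(\yb)$. By maximality of the existence interval the limit point $\big(\eps(\yb),\varphi_{\yb}(\eps(\yb)^+),\psi_{\yb}(\eps(\yb)^+)\big)$ lies on $\partial\Dc$, and since $\eps(\yb)>0$ and $\varphi_{\yb}(\eps(\yb)^+)\in(0,\phi_0)$ the only options are $\psi_{\yb}(\eps(\yb)^+)=0$, giving exit through $\overline{\Dc}_0$, or $\psi_{\yb}(\eps(\yb)^+)=1$. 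In the latter case I would pass to the limit in the sub‑claim $\psi_{\yb}<\Psi$ to get $1\le\Psi(\eps(\yb))$, hence $\eps(\yb)\le\gam\phi_0$ because $\Psi$ is strictly decreasing with $\Psi(\gam\phi_0)=1$; strictness ($\eps(\yb)<\gam\phi_0$) follows by excluding $\eps(\yb)=\gam\phi_0$, which would require $\psi_{\yb}(y)\uparrow1$ as $y\downarrow\gam\phi_0$ while $g_2$ at $\big(\gam\phi_0,\varphi_{\yb}(\gam\phi_0^+),1\big)$ strictly exceeds $g_2(\gam\phi_0,\phi_0,1)=\tfrac{\lambda-1}{\phi_0}$ (again by $\varphi<\phi_0$ and monotonicity of $g_2$), which is incompatible with that limiting behavior. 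So the exit is through $\overline{\Dc}_1$.

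For part (iii) I would use continuous dependence on the terminal data. As $\yb\downarrow\gam\phi_0$ the terminal point $\big(\yb,\phi_0,\Psi(\yb)\big)$ tends to $\big(\gam\phi_0,\phi_0,1\big)\in\partial\Dc$, at which $g_2=\tfrac{\lambda-1}{\phi_0}<0$ (a short computation from the $\lambda$‑quadratic); hence for $\yb$ close to $\gam\phi_0$ the trajectory stays near this point, where $\psi'\approx\tfrac{\lambda-1}{\phi_0}<0$ drives $\psi_{\yb}$ up to $1$ at some $\eps(\yb)>0$ before $(y,\varphi,\psi)$ can move appreciably, so the exit is through $\{\psi=1\}$ and, by part (ii), through $\overline{\Dc}_1$. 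Dually, as $\yb\uparrow\phi_0$ the terminal value $\Psi(\yb)\downarrow0$, the trajectory starts arbitrarily close to $\{\psi=0\}$, and the same near‑boundary estimate (which also shows $\eps(\yb)>0$, so part (ii) applies) forces exit through $\overline{\Dc}_0$. The main obstacle I anticipate is the algebraic inequality $g_2(s,\phi_0,\Psi(s))\ge\tfrac{\lambda-1}{\phi_0}$ on $(\gam\phi_0,\phi_0)$ underlying the sub‑claim in part (i) and fixing the threshold $\gam\phi_0$ in parts (ii)–(iii): it has to be verified from the unwieldy explicit form of $g_2$ together with the relation $(r+\rho-\del)\lambda+\del=\tfrac{\mu^2}{2\sig^2}\lambda(\lambda-1)$ and the definition of $c_0$, and it must hold for an arbitrary gain utility satisfying only \eqref{eq:Uc} and Assumption \ref{assumption:moderateLA}.
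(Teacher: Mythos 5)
Your proposal is correct and follows essentially the same route as the paper: the heart of both arguments is that $(\phi_0,\Psi)$ is a (backward) subsolution, i.e.\ $\Psi'(y)\le g_2\big(y,\phi_0,\Psi(y)\big)$ — exactly the inequality you flag as the main obstacle, which the paper verifies by reducing it (via the $\lam$-quadratic and $U_+^{\prime(-1)}(\phi_0)=c_0-\al$) to the sign of $-\rho c_0(1-\Psi(y))/y$ plus an exact cancellation — followed by quasi-monotone comparison of trajectories and the exit analysis. Your minor deviations (a hand-rolled first-crossing argument in place of the cited comparison lemma, the a priori bound $\varphi_{\yb}(y)\ge\phi_0\,y/\yb$ to rule out exit through $\{\varphi=0\}$ instead of the paper's comparison with $(\phit,\psit)=(y,0)$, and a local derivative argument rather than the paper's monotonicity-plus-uniqueness argument to exclude $\eps(\yb)=\gam\phi_0$) are all sound.
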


\begin{proof}
	\noindent$(i)$ Define $\big(\overline{\varphi}(y),\overline{\psi}(y)\big):=\big(\phi_0,\Psi(y)\big)$, $y\in(0,\yb)$, with $\Psi(\cdot)$ given by \cref{Psi_def}. For $y\in(0,\yb)$, we have
	\begin{align*}
		&\overline{\psi}'(y)-g_2\big(y,\overline{\varphi}(y),\overline{\psi}(y)\big)\\*
		&=\textstyle\Psi'(y)+\frac{2\sigma^2}{\mu^2}\left[\frac{1-\Psi(y)}{y}\left(\frac{\mu^2}{2\sigma^2}\Psi(y)-\rho (U_+')^{-1}(\phi_0)+r-\delta+\rho-\rho\alpha\right)-\frac{r+\delta}{\phi_0}+\frac{\delta}{y}        \right]\\
		&=\textstyle\frac{2\sigma^2}{\mu^2}\left[ \left( \frac{\delta}{\lambda}-\delta\right)\frac{1}{\phi_0}+\frac{\delta}{y}  +\frac{1-\Psi(y)}{y}\left(-\rho (U_+')^{-1}(\phi_0)-\rho\alpha+y\Psi'(y)-\frac{\delta}{\lambda}       \right) \right]\\
		&\textstyle\leq \frac{2\sigma^2}{\mu^2}\left[ \delta\left(\frac{1}{y}-\frac{1}{\phi_0}\right)\left(1-\frac{1}{\lambda}\right)+\frac{\Psi(y)\delta}{\lambda y}      \right]
		=0.
	\end{align*}
	We have used $\Psi'(y)=\frac{2\sigma^2}{\phi_0\mu^2}\left( \frac{\delta}{\lambda}+r+\rho-\delta\right)$ and 
	$\Psi(y)=y\Psi'(y)-\frac{2\sigma^2}{\mu^2}\left(\frac{\delta}{\lambda}+r+\rho-\delta\right)$ (c.f. \cref{Psi_def}) for the second step. The third step follows from
	$-\rho (U_+')^{-1}(\phi_0)-\rho\alpha+y\Psi'(y)	=-\rho c_0 + \frac{y}{\phi_0}(\lam-1)<0$, in which we have used $\phi_0:= U_+'(c_0-\al)$ from Lemma \ref{lem:Ut} and that $\lam<0$ by \cref{eq:lam_bounds}. The last equality directly follows from \cref{Psi_def}. By \cref{eq:g1}, we also have that
	$\overline{\varphi}'(y)-g_1(y,\overline{\varphi}(y),\overline{\psi}(y))=-\frac{1}{y}\overline{\varphi}(y)(1-\overline{\psi}(y))\leq 0$. 
	We have thus shown that
	\begin{align*}
		\begin{cases}
			\overline{\varphi}'(y)\le g_1(y,\overline{\varphi}(y),\overline{\psi}(y)),\\
			\overline{\psi}'(y)\le g_2\big(y,\overline{\varphi}(y),\overline{\psi}(y)),
		\end{cases}
	\end{align*}
	for $y\in(0,\yb)$. Since $\overline{\varphi}(\cdot)$ and $\overline{\psi}(\cdot)$ (trivially) satisfy the terminal conditions of \cref{eq:FBP_ybar}, the comparison theorem for the system of differential equation in \cref{eq:FBP_ybar} (see Lemma B.2 of \cite{ABY22}) then yields that $\psi_{\bar{y}}(y)\leq \Psi(y)$ for $y\in\big(\eps(\yb),\yb)$. Therefore,  
	$\psi_{\bar{y}'}(\bar{y}') = \Psi(\bar{y}') \ge \psi_{\bar{y}}(\bar{y}')$. 
	Statement $(i)$ then follows by comparing (e.g., using Lemma B.2 of \cite{ABY22}) $(\phi_{\yb'},\psi_{\yb'})$ and $(\phi_{\yb},\psi_{\yb})$ over the domain $\big(\max\{\epsilon(\bar{y}),\epsilon(\bar{y}')\},\bar{y}'\big]$. That the inequalities are strict follows from uniqueness of the solution of \cref{eq:FBP_ybar}.\vspace{1em}
	
	%
	%
	
	\begin{figure}[t]
		\centerline{
				\adjustbox{trim={0.0\width} {0.0\height} {0.0\width} {0.0\height},clip}
				{\includegraphics[scale=0.4, page=1]{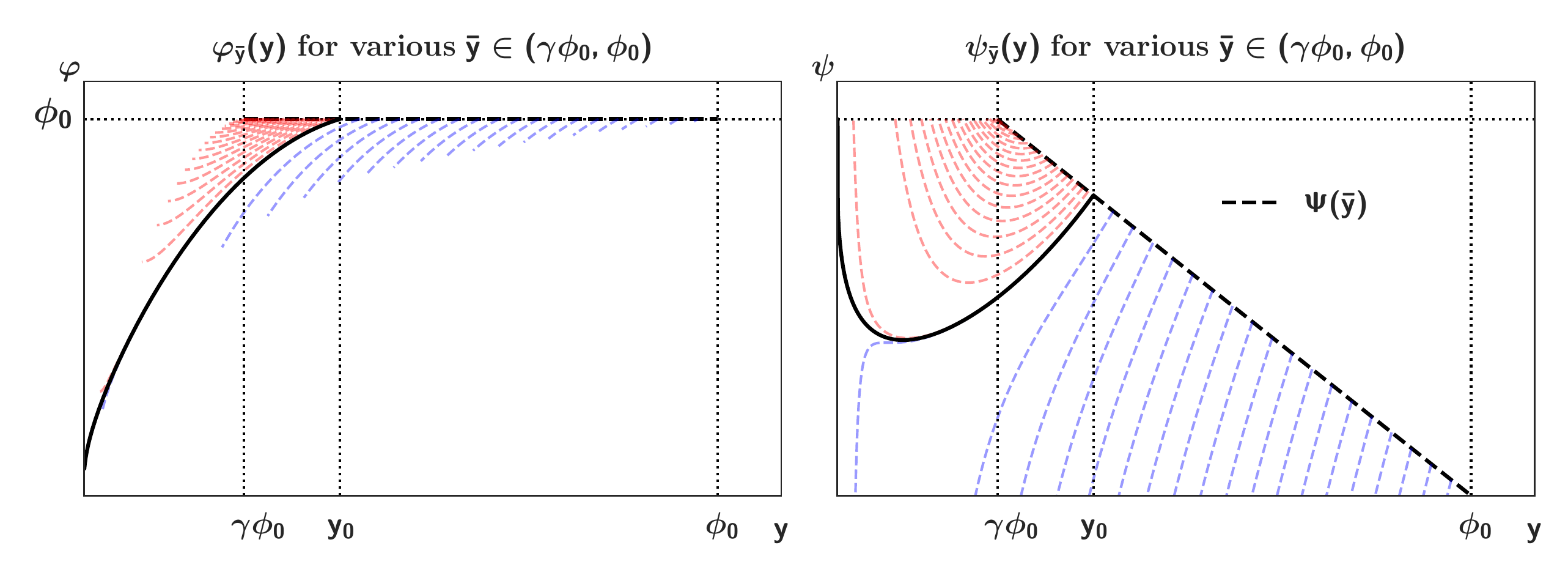}}
		}
		\caption{The solutions $(\varphi_{\yb},\psi_{\yb})$ of the terminal value problem \cref{eq:FBP_ybar} for various values of $\yb\in(\gam \phi_0, \phi_0)$. Each solution is shown up to its exit from the domain $\Dc$ of \cref{eq:Dc}, that is, for values of $y\le \yb$ satisfying $\big(y,\varphi_{\yb}(y),\psi_{\yb}(y)\big)\in\Dc$. 
			Blue dashed curves indicate solutions that exit from the boundary $\overline{\Dc}_0 := (0,\phi_0)^2\times \{0\}$ (represented by the horizontal line $\psi=0$ in the plot on the right side), while red dashed curves are solutions exiting from the boundary $\overline{\Dc}_1:= (0,\gam\phi_0)\times (0,\phi_0)\times \{1\}$ (i.e. the horizontal line $\psi=1$ in the plot on the right side). For each $\yb$, the value of $\eps(\yb)$ is the value of $y$ at which $\psi_{\yb}(y)=0$ (for the blue curves) or $\psi_{\yb}(y)=1$ (for the red curves). The interval $[\eps(\yb),\yb]$ is the domain of the solution $(\varphi_{\yb},\psi_{\yb})$ inside $\Dc$. Finally, $y_0$ is the value (of $\yb$) such that $\eps(y_0)=0$. In particular, $\varphi=\varphi_{y_0}$ and $\psi=\psi_{y_0}$ are the solution of \cref{eq:ODESys-moderate} and \cref{eq:FB-moderate} in Theorem \ref{thm:DualFBP-moderate}.These functions are shown by the solid black lines in the plots.
			\label{fig:phi_psi_moderate}}
	\end{figure}

	\noindent$(ii)$ Because $\epsilon(\bar{y})>0$, we have that $(\varphi_{\bar{y}},\psi_{\bar{y}})$ exits $\Dc$ at the point $(\epsilon(\bar{y}),\phih,\psih)\in (0,\yb)\times [0,\phi_0)\times [0,1]$ and that at least one of the following holds (a) $\phih=0$;  (b) $\psih=0$; or (c) $\psih=1$. See Figure \ref{fig:phi_psi_moderate} for an illustration. We first claim that $\phih>0$ (that is, case (a) is impossible). Assume on the contrary that $\phih:=\varphi_{\yb}(\epsilon(\yb)+)=0$. Define $\big(\phit(y),\psit(y)\big):=(y,0)$, $y>0$. For $y\in (\epsilon(\bar{y}),\bar{y}]$, we then have
	\[
		\phit'(y)=1=\frac{1}{y}\phit(y)(1-\psit(y))=g_1(y,\phit(y),\psit(y)),
	\]
	and 
	\begin{align*}\textstyle
		&\psit'(y)-g_2(y,\phit(y),\psit(y))
		=\textstyle
		\frac{2\rho\sigma^2}{\mu^2}
		\left[
		\frac{1}{y}\left(
		-(U_+')^{-1}(y)+\frac{r-\delta}{\rho}+1-\alpha
		\right)
		-\frac{r+\rho}{\rho y}+\frac{\delta}{\rho y}
		\right]\\*
		&=\textstyle\frac{2\rho\sigma^2}{y\mu^2}\left[ -(U_+')^{-1}(y)-\alpha\right]
		< 0 =\psi_{\bar{y}}'(y)-g_2\big(y,\varphi_{\bar{y}}(y),\psi_{\bar{y}}(y)\big).
	\end{align*}
	Because $\varphi_{\bar{y}}(\epsilon(\bar{y})+)=0<\phit(\epsilon(\bar{y})+)$, a comparison argument similar to the one in the proof of part $(i)$ 
	yields that $\varphi_{\bar{y}}(\bar{y})\leq \phit(\bar{y})=\bar{y}<\phi_0$. This assertion, however, contradicts the boundary condition $\varphi_{\bar{y}}(\bar{y})=\phi_0$. Hence, we must have $\phih=\varphi_{\yb}(\epsilon(\yb)+)>0$, as claimed.
	
	To complete the proof of part $(ii)$, it only remains to show that if $\psi_1=1$, then $\epsilon(\bar{y})<\gamma\phi_0$. From the proof of part $(i)$, we have that $\psi_{\bar{y}}(y)\leq \Psi(y)$ for $y\in(\eps(\yb),\yb)$, implying $\epsilon(\bar{y})\leq\gamma\phi_0$ when $\psi_1=1$. Finally, we show that $\epsilon(\bar{y})\ne \gamma\phi_0$. If $\epsilon(\bar{y})\ne \gamma\phi_0$, then for any $\yb'\in(\gamma\phi_0, \bar{y})$, part $(i)$ yields that $(\varphi_{\yb'},\psi_{\yb'})$ exits $\Dc$ through $\overline{\Dc}_1$ and $\epsilon(\yb')\geq \epsilon(\bar{y})$. It then follows that $\epsilon(\bar{y})=\epsilon(\yb')=\gamma\phi_0$, contradicting the uniqueness of the solution of \cref{eq:FBP_ybar}.\vspace{1em}
	
	\noindent $(iii)$  $g_1$ and $g_2$ are Lipschitz in a neighborhood $\Nc$ of the point $(\gamma\phi_0,\phi_0,1)$. Thus, for $\yb$ sufficiently close to $\gam\phi_0$, the solution $(\varphi_{\bar{y}},\psi_{\bar{y}})$ exists in $\Nc$. In view that
	\begin{align*}
		\textstyle
		g_2(\gamma\phi_0,\phi_0,1) = -\frac{2\sigma^2}{\mu^2}\left[ \frac{\delta}{\gamma\phi_0}-\frac{r+\rho}{\phi_0}    \right]
		=\frac{2\sigma^2}{\mu^2\phi_0}\left[r+\rho-\delta+\frac{\delta}{\lambda}\right]
		=\Psi'(\gamma\phi_0)<0,
	\end{align*}
	and the fact that $(\varphi_{\bar{y}},\psi_{\bar{y}})$ continuously depends on $\bar{y}$ when $\bar{y}$ is close to $\gamma\phi_0$, it holds that $(\varphi_{\bar{y}},\psi_{\bar{y}})$ must exit $\Dc$ through $\overline{\Dc}_1$ as $\yb\to \gam\phi_0^+$. The statement for $\yb\to\phi_0^-$ can be proved in a similar fashion.
\end{proof}

%
%

\subsection{Proof of Theorem \ref{thm:DualFBP-moderate}}
\label{proof:moderate_dual_FBP}

Based on the results from the previous subsection, we are ready to complete the proof of Theorem \ref{thm:DualFBP-moderate}.

\noindent \textbf{\emph{Proof of Theorem \ref{thm:DualFBP-moderate}.(i).}}
Let
\begin{align}\label{eq:y0_cand}
	y_0 := \sup\Big\{\bar{y}\in (\gamma\phi_0,\phi_0): (\varphi_{\bar{y}'},\psi_{\bar{y}'}) {\rm\ exits\ }\Dc {\rm\ through\ }\overline{\Dc}_1{\rm\ for\ any\ }\bar{y}'\in (\gamma\phi_0,\bar{y}) \Big\},
\end{align}
and note that $y_0\in(\gamma\phi_0,\phi_0)$ by Lemma \ref{lemma:exit_FBP_moderate}-(iii). To prove the first statement in Theorem \ref{thm:DualFBP-moderate}.(i) (that is, existence of $y_0$, $\varphi(y)$, and $\psi(y)$), we show that $\eps(y_0)=0$, with $\eps(\cdot)$ defined right before Lemma \ref{lemma:exit_FBP_moderate}. 

Suppose in contrary that $\epsilon(y_0)>0$. By \cref{eq:y0_cand}, there is an increasing sequence $\{\yb_n\}_{n=1}^\infty\to y_0$ such that $(\varphi_{\bar{y}_n},\psi_{\bar{y}_n})$ exits $\Dc$ through $\overline{\Dc}_1$ for all $n$, which implies $\psi_{\bar{y}_n}(y)\geq 0$ for $y\in (\epsilon(\bar{y}_n),\bar{y}_n)$. From the continuous dependence of the solution of \cref{eq:FBP_ybar} on its terminal conditions, we deduce that $\psi_{y_0}(y)\geq 0$, $y\in (\epsilon(y_0),y_0)$. We then reach the following dichotomy between cases (a) and (b) below.  To prove that $\eps(y_0)=0$, we will show that each case leads to a contradiction. See Figure \ref{proof_figure_moderate} for a visualization.\vspace{1ex}

\begin{figure}[tbp]
	\begin{subfigure}{0.5\textwidth}
		\centering
			\adjustbox{trim={0.1\width} {0.075\height} {0.2\width} {0.0\height},clip}
			{\includegraphics[scale=0.2, page=1]{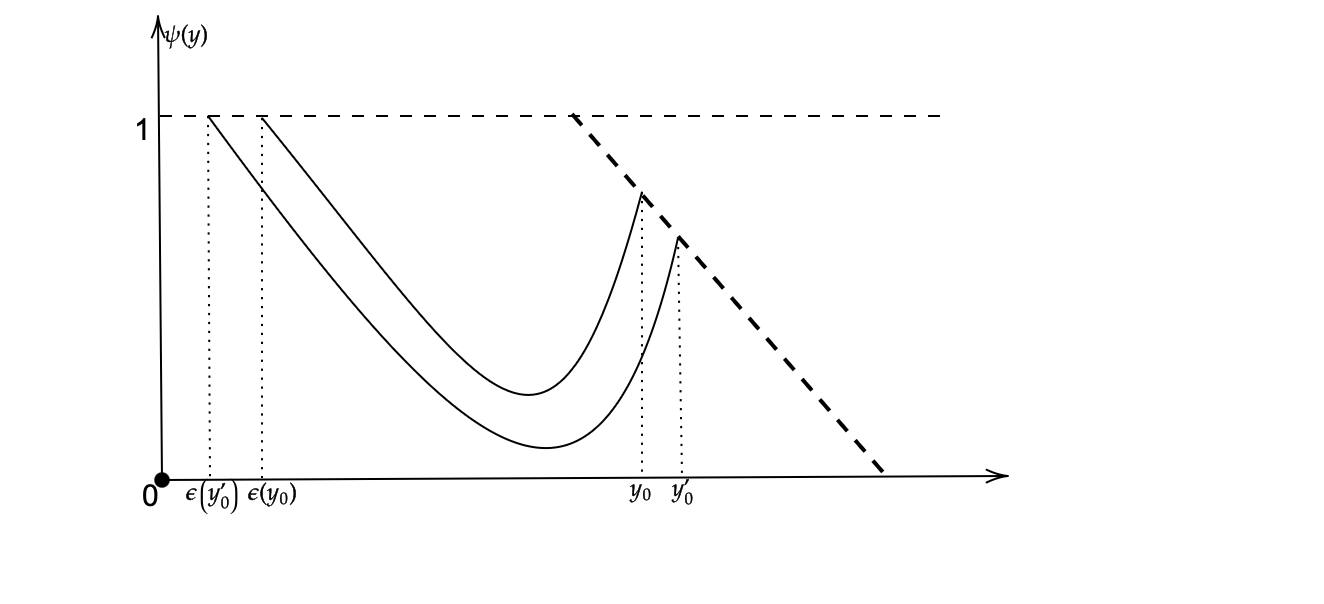}}
		\caption{Case (a).}
	\end{subfigure}%
	\begin{subfigure}{0.5\textwidth}
		\centering
			\adjustbox{trim={0.1\width} {0.075\height} {0.2\width} {0.0\height},clip}
			{\includegraphics[scale=0.2, page=1]{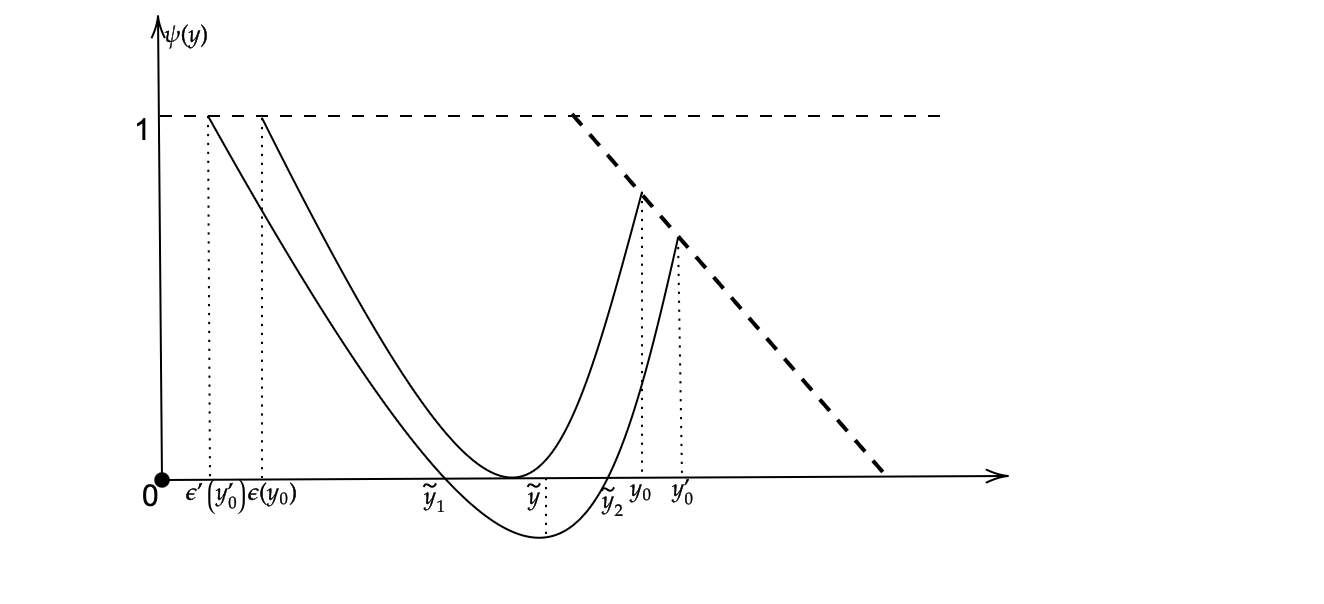}}
		\caption{Case (b).}
	\end{subfigure}
	\caption{Illustrations for the proof of Theorem \ref{thm:DualFBP-moderate}-(i). Note that these plots are used in a proof-by-contradiction. They represent cases that cannot be true.}
	\label{proof_figure_moderate}
\end{figure}
\noindent \emph{In case (a)}, we have $\min_y\big\{\psi_{y_0}(y): \epsilon(y_0)\le y\le y_0\big\}>0$. Because the solution of \cref{eq:FBP_ybar} continuously depends on the boundary conditions, there exists a $y_0'>y_0$ sufficiently close to $y_0$ such that 
$\min_y\big\{\psi_{y_0'}(y):\epsilon(y_0')\le y\le y_0'\big\}>0$. Therefore, $ (\varphi_{y_0'},\psi_{y_0'})$ exits $\Dc$ through $\overline{\Dc}_1$. For any $\bar{y}\in (y_0,y_0')$, Lemma \ref{lemma:exit_FBP_moderate}-(i) then yields that $(\varphi_{\bar{y}},\psi_{\bar{y}})$ also exits $\Dc$ through $\overline{\Dc}_1$. The last statement is in contradiction with the definition of $y_0$.
\vspace{1ex}

\noindent\emph{In case (b)}, we have $\min_y\big\{\psi_{y_0}(y): \epsilon(y_0)\le y\le y_0\big\}=0$. Define $\Dc_{\infty}:=\{y>0,\varphi>0,\psi<1\}\supset\Dc$. As $g_1$ and $g_2$ (given by \cref{eq:g1} and \cref{eq:g2}, respectively) are locally Lipschitz in $\Dc_\infty$, we have that \cref{eq:FBP_ybar} has a unique solution $(\varphi_{\yb},\psi_{\yb})$ that extends to the boundary of $\Dc_\infty$, and that is continuously dependent on $\yb$. Note that $\psi_{\yb}$ is now allowed to take negative values. Denote by $\epst(\yb)\in(0,\yb)$ the infimum of the maximal interval on which the solution of \cref{eq:FBP_ybar} exists in $\Dc_{\infty}$, noting that $\psi_{\bar{y}}$ may now explode to $-\infty$. By continuous dependence on the boundary conditions, there exists $y_0'>y_0$ sufficiently close to $y_0$ such that $0<\epst(y_0')<\epsilon(y_0)$, $\varphi_{y_0'}(\epst(y_0')+)>0$, $\psi_{y_0'}(y_0')=\Psi(y_0')>0$, and $\psi_{y_0'}(\epst(y_0')+)=1$. Because $\min_y\big\{\psi_{y_0}(y): \epsilon(y_0)\le y\le y_0\big\}=0$, Lemma \ref{lemma:exit_FBP_moderate}-(i) yields that there exists $\tilde{y}\in (\epst(y_0'),y_0')$ such that $\psi_{y_0'}(\tilde{y})<0$. Furthermore, due to $\psi_{y_0'}(y_0')=\Psi(y_0')>0$ and $\psi_{y_0'}(\epst(y_0')+)=1>0$, there exist $\tilde{y}_1$ and $\tilde{y}_2$ such that $\epst(y_0')<\tilde{y}_1<\tilde{y}<\tilde{y}_2<y_0'$,  $\psi_{y_0'}(\tilde{y}_1)=\psi_{y_0'}(\tilde{y}_2)=0$, and $\psi_{y_0'}(y)<0$ for $y\in (\tilde{y}_1,\tilde{y}_2)$, as illustrated in Figure \ref{proof_figure_moderate}.(b). In view that $\psi_{y_0'}(y)<0$ for $y\in (\tilde{y}_1,\tilde{y}_2)$ and $\psi_{y_0'}(\tilde{y}_1)=\psi_{y_0'}(\tilde{y}_2)=1$, we must have
\begin{align}\label{eq:psiprime1}
	\begin{cases}
		\psi'_{y_0'}(\tilde{y}_2)\geq 0,\\[1ex]
		\psi'_{y_0'}(\tilde{y}_1)\leq 0.
	\end{cases}
\end{align}
Setting $\yb=\yt_i$, $i\in\{1,2\}$, in \cref{eq:FBP_ybar} then yields
\begin{align}\label{eq:psiprime2}
	\psi'_{y_0'}(\tilde{y}_i)=\textstyle\frac{2\rho}{\tilde{y}_i\mu^2}\left[\textstyle\alpha+(U_+')^{-1}(\varphi_{y_0'}(\tilde{y}_i))-\frac{r+\rho}{\rho}+\frac{r+\rho}{\rho}\frac{\tilde{y}_i}{\varphi_{y_0'}(\tilde{y}_i)}\right],\quad i\in\{1,2\}.
\end{align}
From the first equation in \cref{eq:psiprime1}, it follows that
\begin{align}\label{eq:caseb_aux1}
	\frac{\tilde{y}_2}{\varphi_{y_0'}(\tilde{y}_2)}\geq 1-\frac{\rho}{r+\rho}\left(\alpha+(U_+')^{-1}(\varphi_{y_0'}\big(\tilde{y}_2)\big)\right).
\end{align}
By \cref{eq:FBP_ybar}, we have $\varphi_{y_0'}'(y)=g_1\big(y,\varphi_{y_0'}(y),\psi_{y_0'}(y)\big)> \varphi_{y_0'}(y)/y$ for $y\in (\tilde{y}_1,\tilde{y}_2)$. Therefore,
\begin{align}\label{eq:caseb_aux2}
	\textstyle
	\frac{\dd}{\dd y}\left(\frac{y}{\varphi_{y_0'}(y)}\right) = \frac{\varphi_{y_0'}(y)-y\varphi_{y_0'}'(y)}{\varphi_{y_0'}(y)^2}< 0,\quad y\in(\yt_1,\yt_2).
\end{align}
By using \cref{eq:caseb_aux2} and then \cref{eq:caseb_aux1}, we can obtain that
\begin{align*}
	\frac{\tilde{y}_1}{\varphi_{y_0'}(\tilde{y}_1)}> \frac{\tilde{y}_2}{\varphi_{y_0'}(\tilde{y}_2)}&\geq 1-\frac{\rho}{r+\rho}(\alpha+(U_+')^{-1}(\varphi_{y_0'}(\tilde{y}_2)))
	\geq 1-\frac{\rho}{r+\rho}(\alpha+(U_+')^{-1}(\varphi_{y_0'}(\tilde{y}_1))),
\end{align*}
in which the last inequality holds because $\varphi_{y_0'}$ in increasing by \cref{eq:FBP_ybar} and $U_+'$ (and thus $(U_+')^{-1}$) is decreasing by \cref{eq:Uc}. From \cref{eq:psiprime2}, we then obtain that
\begin{align*}
	\psi'_{y_0'}(\tilde{y}_1)=\textstyle\frac{2\rho}{\tilde{y}_1\mu^2}\left[\textstyle\alpha+(U_+')^{-1}(\varphi_{y_0'}(\tilde{y}_1))-\frac{r+\rho}{\rho}+\frac{r+\rho}{\rho}\frac{\tilde{y}_1}{\varphi_{y_0'}(\tilde{y}_1)}\right]>0.
\end{align*}
The last inequality contradicts the second equation in \cref{eq:psiprime1}.

We have thus shown that both cases $(a)$ and $(b)$ are impossible. Therefore, $\eps(y_0)=0$. By setting $\yb=y_0$ in the boundary value problem \cref{eq:FBP_ybar}, we obtain that $y_0$, $\phi=\phi_{y_0}$, and $\psi=\psi_{y_0}$ satisfy \cref{eq:ODESys-moderate} and \cref{eq:FB-moderate}.

It only remains to show that if $\lim_{y\to 0+} \varphi(y)=0$, then $\lim_{y\to 0+}\psi(y)=1$. Assume that $\lim_{y\to 0+} \varphi(y)=0$. By considering the change of variable $z=\log y$, $\phit(z)=\varphi(e^z)$, and $\psit(z)=\psi(e^z)$, we transform \cref{eq:ODESys-moderate} into the following system:
\begin{equation}\label{eq:ODESys-moderate-logtransform}
	\left\{	
	\begin{aligned}
		\phit'(z) &=\phit(z)\big(1-\psit(z)\big),\vspace{1ex}\\
		\psit'(z) &= \psit(z)^2-\kappa \rho (\psit(z)-1)(U_+')^{-1}\big(\phit(z)\big)+\{\kappa[r-\delta+\rho(1-\alpha)]-1\}\psit(z)
		\\&\quad{}
		-\kappa[r+\rho(1-\alpha)]+\kappa \frac{(r+\rho)e^z}{\phit(z)},
	\end{aligned}
	\right.
\end{equation}
where $\kappa = \frac{2\sigma^2}{\mu^2}$. From Theorem \ref{thm:DualFBP-moderate}, we know that $\psit(z)\in [0,1]$ for any $z<z_0:=\log y_0$. 

To complete the proof of Theorem \ref{thm:DualFBP-moderate}.(i), we will consider the only two possible cases, which we refer to as the ``monotone case'' and ``oscillatory case,'' respectively. In each case, we will prove that $\psi(0+)=\psit(-\infty)=1$.\vspace{1ex}

\noindent \textbf{Case 1 (the monotone case):} Assume that $\psit(z)$ is monotonic in a neighborhood of $-\infty$. In other words, assume that there exists a $\underline{z}$ such that either $\tilde{\psi'}(z)\geq 0$ or $\tilde{\psi'}(z)\leq 0$ for any $z<\underline{z}$ (that is, $\psit$ is monotone on $(-\infty,\underline{z}]$). In this case, $c:=\lim_{z\to -\infty}\psit(z)\in [0,1]$ is well-defined. Suppose $c<1$. From the second equation of \cref{eq:ODESys-moderate-logtransform}, we obtain that
\begin{align*}
	&\liminf_{z\to -\infty} \psit'(z)\\
	&\geq -\kappa \rho (c-1)\liminf_{z\to -\infty}(U_+')^{-1}\big(\phit(z)\big) -|\kappa[r-\delta+\rho(1-\alpha)]-1|
	-\kappa[r+\rho(1-\alpha)]
	= \infty.
\end{align*}
Therefore, there exist constants $C>0$ and $z_C<z_0$, such that $\psit'(z)\geq C$ for $z<z_C$. We then must have that $\psi(z)\leq C(z-z_C)+\psi(z_C)\to -\infty$ as $z\to -\infty$. The last statement contradicts with $\lim_{z\to -\infty}\psi(z)=c\geq 0$.\vspace{1em} 

\noindent \textbf{Case 2 (the oscillatory case):} 
Assume that $\psit(z)$ is not monotonic in a neighborhood of $-\infty$. That is, for any $\zu\le z_0$, there always exist $z_1,z_2<\zu$ such that $\psit'(z_1)<0$ and $\psit'(z_2)>0$. Throughout the proof of this case, let $\zt_0$ be the largest stationary point of $\psit(t)$, i.e., $\zt_0:=\max\{z\in(-\infty,z_0]\,:\, \psit'(z)=0\}$. Note that $\zt_0\in(-\infty, z_0]$, by the standing assumption of Case 2 and continuity of $\psit'(z)$.

This case is more intricate because it is unclear beforehand if $\lim_{z\to -\infty}\psit(z)$ exists. Indeed, due to the fact that $\psit(z)$ is not monotone in a neighborhood of $-\infty$, we need to specifically exclude the case $\liminf_{z\to -\infty}\psit(z)\ne \limsup_{z\to -\infty}\psit(z)$. We have $\limsup_{z\to -\infty}\psit(z)\le 1$, because $\psit(z)\le 1$ for $z\le z_0$. Thus, to show that $\lim_{z\to -\infty}\psit(z)=1$ in Case 2, it suffices to show that $\liminf_{z\to -\infty}\psit(z)\ge 1$. This will be our goal for the remainder of the proof of Theorem \ref{thm:DualFBP-moderate}.(i).

Let $\{z_n\}_{n=1}^{\infty}\subset(-\infty,\zt_0]$ be an arbitrary decreasing sequence satisfying $\lim_{n\to\infty}z_n=-\infty$. We will eventually show that $\lim_{n\to\infty}\psit(z_n)\ge1$, which implies that $\liminf_{z\to -\infty}\psit(z)\ge 1$ (since the sequence $\{z_n\}_{n=1}^{\infty}$ is arbitrary).

For each such sequence, we define a related sequence $\{z^*_n\}_{n=1}^{\infty}$  by
\begin{equation}\label{app:def-zstar}
	z^*_n =
	\begin{cases}
		\sup\Big\{z\in(z_n,\zt_0] \,:\, \psit'(z')<0\; {for}\, z'\in[z_n,z]\Big\}; &\quad \text{if}\;\psit'(z_n)<0,\\[1ex]
		z_n; &\quad \text{if}\;\psit'(z_n)=0,\\[1ex]
		\inf\Big\{z\in(-\infty, z_n) \,:\, \psit'(z')>0\; {for}\, z'\in[z,z_n]\Big\}; &\quad \text{if}\;\psit'(z_n)>0.\\[1ex]
	\end{cases}
\end{equation}
Roughly speaking, if $\psit(\cdot)$ is decreasing in a neighborhood of $z_n$, then $z^*_n$ is the closest stationary point of $\psit(\cdot)$ (i.e. $\psit'(z^*_n)=0$) that is larger than $z_n$. Similarly, if $\psit(\cdot)$ is increasing in a neighborhood of $z_n$, then $z^*_n$ is the closest stationary point of $\psit(\cdot)$ that is smaller than $z_n$.

Note that $z^*_n\in(-\infty,\zt_0]$ for all $n\ge1$. That $z^*_n\le \zt_0$ is clear from \cref{app:def-zstar}. If $z^*_{n_0}=-\infty$ for some $n_0\ge1$, then we must have $\psit'(z)>0$ on $(-\infty,z_{n_0}]$ by the third expression of \cref{app:def-zstar}. This contradicts our main assumption in Case 2 (that is, $\psit(z)$ is not monotonic in a neighborhood of $-\infty$). So, we must have $z^*_n\in(-\infty,\zt_0]$.

We also claim that $\{z^*_n\}_{n=1}^{\infty}$ is a non-increasing sequence.
To show this, 
take arbitrary indices $j>i\ge 1$. Note that $\{z_n\}_{n=1}^{\infty}$ is decreasing by assumption, we must have $z_j<z_i$. There are nine possibilities to consider. In each case below, we show that $z^*_j\le z^*_i$:
\begin{itemize}
	\item If $\psit'(z_j)< 0$ and $\psit'(z_i)< 0$, then
	\begin{align*}
		z^*_j &= \sup\Big\{z\in(z_j,\zt_0] \,:\, \psit'(z')<0\; {for}\, z'\in[z_j,z]\Big\}\\
		&\le \sup\Big\{z\in(z_i,\zt_0] \,:\, \psit'(z')<0\; {for}\, z'\in[z_i,z]\Big\}=z^*_i.
	\end{align*}
	
	\item If $\psit'(z_j)< 0$ and $\psit'(z_i) = 0$, then $z^*_j = \sup\Big\{z\in(z_j,\zt_0] \,:\, \psit'(z')<0\; {for}\, z'\in[z_j,z]\Big\} \le z_i=z^*_i$.
	
	\item If $\psit'(z_j)< 0$ and $\psit'(z_i) > 0$, then
	\begin{align*}
		z^*_j &= \sup\Big\{z\in(z_j,\zt_0] \,:\, \psit'(z')<0\; {for}\, z'\in[z_j,z]\Big\}\\
		&\le \inf\Big\{z\in(-\infty, z_i) \,:\, \psit'(z')>0\; {for}\, z'\in[z,z_i]\Big\}=z^*_i.
	\end{align*}
	
	\item If $\psit'(z_j)= 0$ and $\psit'(z_i) < 0$, then $z^*_j = z_j < z_i \le \sup\Big\{z\in(z_i,\zt_0] \,:\, \psit'(z')<0\; {for}\, z'\in[z_i,z]\Big\}=z^*_i$.
	
	\item If $\psit'(z_j)= 0$ and $\psit'(z_i) = 0$, then $z^*_j=z_j<z_i=z^*_i$.
	
	\item If $\psit'(z_j)= 0$ and $\psit'(z_i) > 0$, then
	\begin{align*}
		z^*_j = z_j \le \inf\Big\{z\in(-\infty, z_i) \,:\, \psit'(z')>0\; {for}\, z'\in[z,z_i]\Big\}=z^*_i.
	\end{align*}
	
	\item If $\psit'(z_j)> 0$ and $\psit'(z_i) < 0$, then
	\begin{align*}
		z^*_j &= \inf\Big\{z\in(-\infty, z_j) \,:\, \psit'(z')>0\; {for}\, z'\in[z,z_j]\Big\}\le z_j< z_i\\
		&\le \sup\Big\{z\in(z_i,\zt_0] \,:\, \psit'(z')<0\; {for}\, z'\in[z_i,z]\Big\}=z^*_i.
	\end{align*}
	
	\item If $\psit'(z_j)> 0$ and $\psit'(z_i) = 0$, then $z^*_j = \inf\Big\{z\in(-\infty, z_j) \,:\, \psit'(z')>0\; {for}\, z'\in[z,z_j]\Big\}\le z_j< z_i=z^*_i$.
	
	\item If $\psit'(z_j)> 0$ and $\psit'(z_i) > 0$, then
	\begin{align*}
		z^*_j &= \inf\Big\{z\in(-\infty, z_j) \,:\, \psit'(z')>0\; {for}\, z'\in[z,z_j]\Big\}\\
		&\le \inf\Big\{z\in(-\infty, z_i) \,:\, \psit'(z')>0\; {for}\, z'\in[z,z_i]\Big\}=z^*_i.
	\end{align*}
\end{itemize}

Finally, we claim that $\lim_{n\to\infty}z^*_n = -\infty$. Assume otherwise, as it has already been shown that $\{z^*_n\}_{n=1}^{\infty}$ is non-increasing, the only possibility is that $z^*_{\infty}:=\lim_{n\to\infty} z^*_n \in (-\infty,\zt_0]$. In view that $\lim_{n\to\infty} z_n=-\infty$ by assumption, there exist an index $N$ such that $z_n<z^*_{\infty}$ for all $n>N$. Therefore, we must have $z^*_n \ge z^*_{\infty} > z_n$ for all $n>N$. In other words, for $n>N$, $z^*_n$ is given by the top expression in \cref{app:def-zstar}. In particular, we must have $\psit(z)<0$ for any $z\in(z_n,z^*_{\infty})$ and any $n>N$. Note that $\lim_{n\to\infty} z_n=-\infty$ (by assumption), which implies that $\psit(z)<0$ for all $z\in(-\infty, z^*_{\infty})$, contradicting the standing assumption of Case 2 (that is, $\psit(z)$ is not monotonic in a neighborhood of $-\infty$).

So far, we have shown that $\{z^*_n\}_{n=1}^\infty$ is a non-increasing sequence such that $z^*_n\to-\infty$. From the definition of $z^*_n$ in \cref{app:def-zstar} and the continuity of $\psit(z)$, we also conclude that $\psit'(z^*_n)=0$ for all $n\ge1$.

We next claim that $\lim_{n\to\infty} \psit(z^*_n)=1$. As $\psit'(z^*_n)=0$ for all $n\ge1$, the second equation in \cref{eq:ODESys-moderate-logtransform} yields
\begin{align}\label{eq:psit_quadratic_eq}
	\psit(z^*_n)^2+b_n \psit(z^*_n) +c_n=0,\quad n\ge1,
\end{align}
in which  
$b_n := \kappa[r-\delta+\rho(1-\alpha)]-1-\kappa \rho (U_+')^{-1}\big(\phit(z^*_n)\big)$ and 
$c_n := \kappa \rho (U_+')^{-1}\big(\phit(z^*_n)\big)-\kappa[r+\rho(1-\alpha)]+\kappa \frac{(r+\rho)e^{z^*_n}}{\phit(z^*_n)}$. 
Thanks to the fact that $\lim_{n\to\infty}(U_+')^{-1}\big(\phit(z^*_n)\big)=(U_+')^{-1}\big(\phit(-\infty)\big)=(U_+')^{-1}\big(\varphi(0^+)\big)=+\infty$,
we have $b_n\to -\infty$ and $c_n\to +\infty$. Moreover, $b_n/c_n\to -1$ as $n\to\infty$. Therefore, if follows from \cref{eq:psit_quadratic_eq} that
\begin{equation*}
	\psit(z^*_n) \geq
	\frac{-b_n-\sqrt{b_n^2-4 c_n}}{2}
	= \frac{2 c_n}{-b_n+\sqrt{b_n^2-4 c_n}}
	= \frac{2}{-\frac{b_n}{c_n}+\sqrt{\left(\frac{b_n}{c_n}\right)^2-\frac{4}{c_n}}}
	\to 1,
\end{equation*}
as $n\to \infty$. As $\psit(z)\le 1$ for all $z\in(-\infty,z_0]$, we must have $\lim_{n\to\infty} \psit(z^*_n)=1$, as claimed.

Finally, by the definition of $z^*_n$ in \cref{app:def-zstar}, we have $\psit(z_n)\ge \psit(z^*_n)$ for all $n\ge1$. Therefore, $\lim_{n\to \infty}\psit(z_n)\ge \lim_{n\to \infty}\psit(z^*_n)=1$. Because $\{z_n\}_{n=1}^{\infty}$ was arbitrarily chosen, we conclude that $\liminf_{z\to -\infty}\psit(z)\ge 1$. As it has been argued that $\limsup_{z\to -\infty}\psit(z)\le 1$ in view of $\psit(z)\le 1$ for $z\le z_0$. We have thus shown that $\lim_{z\to -\infty}\psit(z)=1$ in Case 2, which completes the proof of Theorem \ref{thm:DualFBP-moderate}.(i).

\noindent \textbf{\emph{Proof of Theorem \ref{thm:DualFBP-moderate}.(ii).}} 
By \cref{eq:G}, we have $G'(\phi)=-\alpha-(U_+')^{-1}(\phi)$. Differentiating  \cref{eq:u_phi_psi_moderate} with respect to $y$ then yields
\begin{align}
	\nonumber
	u'(y)&=\textstyle
	\frac{1}{\delta}\left[ \frac{\mu^2}{2\rho\sigma^2}(\varphi'(y)\psi(y)+\varphi(y)\psi'(y))-\varphi'(y)(\alpha+(U_+')^{-1}(\varphi(y)))+\frac{\delta-r-\rho}{\rho}(1-\varphi'(y) )      \right]\\
	\nonumber
	&=\textstyle
	\frac{1}{\delta}\left[  -\frac{\varphi(y)(1-\psi(y))}{y}\left(\frac{r-\delta+\rho}{\rho}-\alpha-(U_+')^{-1}(\varphi(y))    \right)+\frac{r+\rho}{\rho}-\frac{\delta\varphi(y)}{\rho y}                         \right]\\
	&\qquad\textstyle{}+\frac{1}{\delta}\left[-\varphi'(y)(\alpha+(U_+')^{-1}(\varphi(y)))+\frac{\delta-r-\rho}{\rho}-\varphi'(y)\frac{\delta-r-\rho}{\rho}\right]=\frac{y-\varphi(y)}{\rho y}, \label{uprime-phi}
\end{align} 
which is \cref{up_phi_main}. Setting $y=y_0$ yields $u'(y_0)=\big(y_0-\varphi(y_0)\big)/(\rho y_0)$. As $\varphi(y_0)=\phi_0$ by \cref{eq:FB-moderate}, we obtain the first boundary condition in \cref{eq:modFBPDual-reduced}. The second boundary condition follows from \cref{eq:u_phi_psi_moderate},
\begin{align*}
	u(y_0)&=\frac{1}{\delta}\left[ \left(\frac{\delta}{\rho\lambda}+\frac{r+\rho-\delta}{\rho}    \right)(y_0-\phi_0)+G(\phi_0)+\frac{\delta-r-\rho}{\rho}(y_0-\phi_0) \right]
	=\frac{y_0-\phi_0}{\rho \lambda}+\frac{U(0)}{\delta},
\end{align*}
in which we used $G(\phi_0)=U(0)$ that was shown in \cref{eq:Psi-phi0}. 

To show that the differential equation in \cref{eq:FB-moderate} also holds, we proceed as follows. Differentiating \cref{uprime-phi} with respect to $y$ and noting that $\varphi'(y) = \frac{1}{y}\varphi(y)\big(1-\psi(y)\big)$ (by \cref{eq:ODESys-moderate}) yield
\begin{equation}\label{upp}
	u^{\prime\prime}(y)=\frac{1}{\rho y^2}\varphi(y)\psi(y),\quad 0<y<y_0,
\end{equation}
which is \cref{upp_phi_psi_main}. Therefore, 
\begin{align*}
	&\frac{\mu^2}{2\sig^2} y^2 u^{\prime\prime}(y) + G\big(y-\rho y u'(y)\big) 
	+ (\del - r - \rho) y u'(y) - \del u(y)\\*
	&=\frac{\mu^2}{2\rho\sigma^2}\varphi(y)\psi(y)+G(\varphi(y))+\frac{\delta-r-\rho}{\rho}(y-\varphi(y))-\delta u(y)=0,
\end{align*}
in which the last step follows from \cref{eq:u_phi_psi_moderate}. 

To show that $u$ is convex, we use \cref{upp} to obtain $u^{\prime\prime}\geq 0$ because of $\varphi,\psi\geq 0$. Furthermore, in view that $u'(y_0)=\frac{y_0-\phi_0}{\rho y_0}<0$ by \cref{uprime-phi}, and that $u^{\prime\prime}<0$, we conclude that $u'(y)<0$ for $y\in (0,y_0)$. Thus, $u$ is decreasing.

It only remains to show that $\lim_{y\to 0+}u'(y)=-\infty$. In Theorem \ref{thm:DualFBP-moderate}.(i), we have shown that either $\lim_{y\to 0+}\varphi(y)>0$ or $\lim_{y\to 0+}\psi(y)>0$ (or both). If $\lim_{y\to 0+}\varphi(y)>0$, we have $\lim_{y\to 0+}\frac{\varphi(y)}{y}=+\infty$. We thus deduce from \cref{uprime-phi} that $\lim_{y\to 0+}u'(y)=\frac{1}{\rho}-\frac{1}{\rho}\lim_{y\to 0+}\frac{\varphi(y)}{y}=-\infty$. If $\lim_{y\to 0+}\psi(y)=1$, we assume $\lim_{y\to 0+}u'(y)=-\infty$ does not hold and argue by contradiction. Because $u'$ is increasing and $u'(y_0)=\frac{y_0-\phi_0}{\rho y_0}<0$ by \cref{uprime-phi}, there exists a constant $m>0$ such that $\lim_{y\to0^+} u'(y) = -m$. It then follows from \cref{uprime-phi} that
\begin{align*}
	\lim_{y\to 0^+}\frac{\varphi(y)}{y} = 1+\rho m \in(1,+\infty),
\end{align*}
which yields $\lim_{y\to0^+}\varphi(y)=0$. By L'H\^{o}pital's rule, \cref{eq:ODESys-moderate}, and the fact $\lim_{y\to0^+}\psi(y)=1$ from part $(i)$, we then obtain 
\begin{align*}
	\lim_{y\to 0^+}\frac{\varphi(y)}{y} = \lim_{y\to 0^+}\varphi'(y)
	=\lim_{y\to 0^+} \frac{\varphi(y)}{y}\left(1-\psi(y)\right)
	=0,
\end{align*}
which yields a contradiction. Thus, we must have $\lim_{y\to0^+} u'(y)=-\infty$ as claimed.

%
%
\subsection{Proof of Lemma \ref{lem:trans}}\label{app:trans}

We first show that $v(x)$ is lower bounded. By \cref{eq:u1-moderate} and \cref{v_moderate}, we have
\begin{align*}
	v(x) = \frac{y_0-\phi_0}{\rho\lambda}\left(\frac{x}{x_0}\right)^{\frac{\lambda}{\lambda-1}} + \frac{U(0)}{\delta}, 
	\quad x\in(0,x_0).
\end{align*}
As $v(x)$ is strictly increasing (by Corollary \ref{coro:solveHJB}), we deduce that
\begin{align}\label{eq:VF_lowerbound}
	v(x)>v(0)=\frac{U(0)}{\delta}\in\Rb, \quad x>0.
\end{align}

Let $(\varphi(y), \psi(y), y_0)$ be the solution of the system of problems \cref{eq:ODESys-moderate} and \cref{eq:FB-moderate} in Theorem \ref{thm:DualFBP-moderate}, and let $u(y)$ be the solution of \cref{eq:modFBPDual1}--\cref{eq:modFBPDual4} (given in Corollary \ref{coro:solution_dual}). In light of the item $(i)$ of Theorem \ref{thm:DualFBP-moderate}, we shall split the proof into two separate cases, depending on whether  $\varphi(0)>0$ or $\psi(0)=1$.\vspace{1em} 

\noindent \textbf{Case 1:} Assume that $\varphi(0)>0$. In this case, we will show that $|v(x)|$ is in fact uniformly bounded for $x>0$, which readily yields \cref{eq:trans}. Let $(u(y), y_0)$ be the solution of \cref{eq:modFBPDual1}--\cref{eq:modFBPDual4} in Corollary \ref{coro:solution_dual}. It was shown in the proof of Corollary \ref{coro:solveHJB} that $(u')^{-1}(-x)\in(0,y_0)$ for $x>x_0$. Using \cref{v_moderate} and then applying the change-of-variable $y=(u')^{-1}(-x)\Leftrightarrow x=-u'(y)$, we obtain that
\begin{align}
	\nonumber
	&\lim_{x\to+\infty} v(x) 
	= \lim_{x\to+\infty} \left[u\big((u')^{-1}(-x)\big) + x\,(u')^{-1}(-x)\right]
	=\lim_{y\to0^+} \Big[u(y) - u'(y) y\Big]\\*
	\nonumber
	&= \frac{1}{\del} \lim_{y\to0^+} \left[
	\frac{\mu^2}{2\rho\sig^2} \varphi(y)\psi(y)
	+ G\big(\varphi(y)\big) +\frac{\del - r - \rho}{\rho} \big(y-\varphi(y)\big)\right]
	-\lim_{y\to0^+} \frac{y-\varphi(y)}{\rho}\\
	\label{eq:VF_upperbound}
	&= \frac{1}{\del} \left[
	\frac{\mu^2}{2\rho\sig^2} \varphi(0)\psi(0)+ G\big(\varphi(0)\big) -\frac{\del - r - \rho}{\rho} \varphi(0)\right]
	+\frac{\varphi(0)}{\rho}<+\infty.
\end{align}
The third equality follows from \cref{eq:u_phi_psi_moderate} and \cref{up_phi_main}. The last step follows from the boundedness of $\varphi(y)$ and $\psi(y)$ (see Theorem \ref{thm:DualFBP-moderate}.(i)) and that, by \cref{eq:G} and $\varphi(0)>0$, we have
\begin{align*}
	G\big(\varphi(0)\big) &:= 
	K+U_+\Big((U_+')^{-1}\big(\varphi(0)\big)\Big) - \varphi(0) \Big(\al+(U_+')^{-1}\big(\varphi(0)\big)\Big)<+\infty.
\end{align*}
Combing \cref{eq:VF_lowerbound}, \cref{eq:VF_upperbound}, and the fact that $v(x)$ is increasing (by Corollary \ref{coro:solveHJB}), we conclude that $|v(x)|$ is uniformly bounded for $x>0$, and \cref{eq:trans} easily follows.

\vspace{1em}

\noindent \textbf{Case 2:} Assume that $\psi(0)=1$. Take an arbitrary $x>0$ and any admissible $(\pi_t,c_t)_{t\geq 0}\in \Acr(x)$. Consider the process $\{Y_t\}_{t\geq 0}$ given by 
\begin{align}\label{eq:Y}
	Y_t := \exp\left(-\frac{\mu}{\sigma}B_t-\left[r+\rho+\frac{1}{2}\left(\frac{\mu}{\sigma}\right)^2 \right]t +\rho \int_0^t c_s\dd s\right),\quad t\ge 0.
\end{align}
Note that $\{Y_t\}_{t\geq 0}$ is the unique (strong) solution of the stochastic differential equation
\begin{align}\label{eq:Y-SDE}
	\begin{cases}
		\displaystyle\frac{\dd Y_t}{Y_t} = -\big(r+\rho(1-c_t)\big)\dd t - \frac{\mu}{\sig}\dd B_t,\quad t\ge0,\\
		Y_0 = 1.
	\end{cases}
\end{align}
From \cref{eq:X} and \cref{eq:Y-SDE}, it is easy to see that 
$X_T Y_T + \int_0^T c_s Y_s \dd s = x + \int_0^T \left(\sig \pi_s - \frac{\mu}{\sig} X_s\right) Y_s \dd B_s$, $T\ge0$.  
Therefore, by some standard localizing arguments, it holds that $\Eb[X_T Y_T]\le x$ for $T>0$. 
Using \cref{eq:Legendre} then yields
\begin{align*}
	\Eb\left[v(X_T)\right]
	&= \Eb\left[v(X_T) - X_T Y_T + X_T Y_T\right]
	\le \Eb\left[u(Y_T) + X_T Y_T\right]\\
	&\le \Eb\left[u(Y_T)\right] + x,\quad T>0.
\end{align*}
By taking \cref{eq:VF_lowerbound} into account, it follows that
\begin{align*}
	e^{-\delta T}\frac{U(0)}{\del}\le \Eb\left[e^{-\delta T}v(X_T)\right] \le \Eb\left[e^{-\delta T}u(Y_T)\right] + xe^{-\delta T},
	\quad T>0.
\end{align*}

That is, to prove the transversality condition \cref{eq:trans}, it is sufficient to show that
\begin{equation}\label{eq:trans:Y}
	\lim_{T \to +\infty}\Eb \left[e^{-\delta T}u(Y_T) \right]=0.
\end{equation}		
To this end, we first note that, by \cref{eq:u_phi_psi_moderate},
\begin{align}\label{eq:u_estimate1}
	u(y) &= \frac{1}{\del}\left[
	\frac{\mu^2}{2\rho\sig^2} \varphi(y)\psi(y)
	+ G\big(\varphi(y)\big) +\frac{\del - r - \rho}{\rho} \big(y-\varphi(y)\big)
	\right]\notag\\
	&\le \frac{\mu^2}{2\del\rho\sig^2} \phi_0 + \frac{\del - r - \rho}{\del\rho} y_0
	+ \frac{1}{\del} G\big(\varphi(y)\big),
\end{align}
for $0<y<y_0$. Furthermore, by Assumption \ref{assum:Growth}, there exists constants $A_1,A_2>0$ and $A_3\ge0$ such that
\begin{align}\label{eq:G_growth}
	G\big(\varphi(y)\big)\le A_1 + A_2 \varphi(y)^{-A_3}, \quad y\in(0,y_0].
\end{align}
Next, take an arbitrary constant $\eps>0$. From \cref{eq:u_estimate1}, \cref{eq:G_growth}, and Lemma \ref{lem:phi_estimate} below, it follows that,
\begin{align}\nonumber
	u(y) &\le \frac{\mu^2}{2\del\rho\sig^2} \phi_0 + \frac{\del - r - \rho}{\del\rho} y_0
	+ \frac{1}{\del} G\big(\varphi(y)\big)\\
	\nonumber
	&\le \frac{\mu^2}{2\del\rho\sig^2} \phi_0 + \frac{\del - r - \rho}{\del\rho} y_0
	+ \frac{A_1}{\del} + \frac{A_2}{\del} \varphi(y)^{-A_3}\\
	\label{eq:u_estimate2}
	&\le \frac{\mu^2}{2\del\rho\sig^2} \phi_0 + \frac{\del - r - \rho}{\del\rho} y_0
	+ \frac{A_1}{\del} + \frac{A_2 B_\eps^{-A_3}}{\del} y^{-\eps A_3},
\end{align}
for $y\in(0,y_0)$, in which $B_\eps>0$ is a constant that may depend on $\eps$ (see Lemma \ref{lem:phi_estimate} as below). By choosing appropriate constants $\At$ and $\Bt_\eps>0$, we may rewrite \cref{eq:u_estimate2} in a way that
\begin{align}\label{eq:u_estimate3}
	u(y) \le \At + \Bt_\eps\, y^{-\eps A_3},\quad 0<y\le y_0.
\end{align}
By virtue of \cref{eq:Y} and \cref{eq:u_estimate3}, we derive that
\begin{align*}
	&\Eb\left[e^{-\delta T}u(Y_T)\right]
	\le \At e^{-\delta T} + \Bt_\eps \Eb\left[\ee^{-\del T}(Y_T)^{-\eps A_3}\right]\\
	&= \At e^{-\delta T} + \Bt_\eps \Eb\left[\ee^{-\del T}\exp\left(-\frac{\mu}{\sigma}B_T-\left[r+\rho+\frac{1}{2}\left(\frac{\mu}{\sigma}\right)^2 \right]T +\rho \int_0^T c_s\dd s\right)^{-\eps A_3}\right]\\
	&= \At e^{-\delta T} + \Bt_\eps \Eb\left[\ee^{-\del T}\exp\left(
	\eps A_3 \frac{\mu}{\sigma}B_T
	+\eps A_3 \left[r+\rho+\frac{1}{2}\left(\frac{\mu}{\sigma}\right)^2 \right]T 
	-\eps A_3 \rho \int_0^T c_s\dd s
	\right)\right]\\
	&\le \At e^{-\delta T} + \Bt_\eps 
	\exp\left(\left[\eps A_3 \left(r+\rho+\frac{1}{2}\left(\frac{\mu}{\sigma}\right)^2 \right) - \del \right]T 
	\right)
	\Eb\left[\ee^{\eps A_3 \frac{\mu}{\sigma}B_T}\right]
	\\
	&\le \At e^{-\delta T} + \Bt_\eps 
	\exp\left(
	\left[
	\eps A_3 \left(r+\rho+\frac{1}{2}\left(\frac{\mu}{\sigma}\right)^2 \right) + \eps^2 A_3^2\frac{\mu^2}{\sig^2} - \del 
	\right] T 
	\right).
\end{align*}
On the other hand, by the Legendre transform and \cref{eq:VF_lowerbound}, it holds that $u(y)\ge\lim_{x\to0^+}\big(v(x)-xy\big)=v(0)$ for all $y>0$. It thus follows that
\begin{align}\nonumber
	&e^{-\delta T} v(0) \le \Eb\left[e^{-\delta T}u(Y_T)\right]\\
	\label{eq:u_estimate4}
	&\le \At e^{-\delta T} + \Bt_\eps 
	\exp\left(\textstyle
	\left[
	\eps A_3 \left(r+\rho+\frac{1}{2}\left(\frac{\mu}{\sigma}\right)^2 \right) + \eps^2 A_3^2\frac{\mu^2}{\sig^2} - \del 
	\right] T 
	\right),
\end{align}
for any $\eps>0$. Note that the value of the constant $A_3\ge0$ in \cref{eq:G_growth} does \emph{not} depend on $\eps$, we may choose a sufficiently small $\eps$ such that
\begin{align*}
	\eps A_3 \left(r+\rho+\frac{1}{2}\left(\frac{\mu}{\sigma}\right)^2 \right) + \eps^2 A_3^2\frac{\mu^2}{\sig^2} < \del.
\end{align*}
By choosing such a value of $\eps$ in \cref{eq:u_estimate4} and then letting $T\to+\infty$, we obtain \cref{eq:trans:Y}. This completes the proof.

To obtain \cref{eq:u_estimate2} in the last part of the proof, we have used the next lemma.	

\begin{lemma}\label{lem:phi_estimate}
	Assume that $\psi(0)=1$. Then, for any $\eps>0$, there exists a constant $B_\eps>0$ such that $\varphi(y)> B_\eps y^{\eps}$ for all $y\in(0,y_0]$.\qed
\end{lemma}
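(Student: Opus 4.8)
The plan is to pass to $\log\varphi$ and integrate the first equation of the ODE system \eqref{eq:ODESys-moderate}, then exploit $\psi(0)=1$ to control the resulting integral near the origin. Since $\varphi>0$ on $(0,y_0]$ by Theorem \ref{thm:DualFBP-moderate}.(i), that equation reads $\frac{\dd}{\dd y}\log\varphi(y)=\frac{1-\psi(y)}{y}$; integrating from $y$ to $y_0$ and using the boundary value $\varphi(y_0)=\phi_0$ from \eqref{eq:FB-moderate} gives the closed form
\[
\log\varphi(y)=\log\phi_0-\int_y^{y_0}\frac{1-\psi(s)}{s}\,\dd s,\qquad 0<y\le y_0 .
\]
Everything will follow from a bound on the right-hand integral that is uniform in $y$.

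First I would reformulate the claim. Using the identity $-\eps\log y=\int_y^{y_0}\frac{\eps}{s}\,\dd s-\eps\log y_0$, one checks that $\varphi(y)\ge B_\eps y^{\eps}$ is equivalent to
\[
I(y):=\int_y^{y_0}\frac{1-\psi(s)-\eps}{s}\,\dd s\ \le\ \log\phi_0-\eps\log y_0-\log B_\eps ,
\]
so it suffices to bound $I(y)$ from above by a constant depending on $\eps$ only, and then take $B_\eps$ small enough (with an extra factor $\tfrac12$ to turn $\ge$ into the strict $>$ in the statement).

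The key step is the behaviour near $0$, and this is precisely where $\psi(0)=1$ is used. By continuity of $\psi$ on $[0,y_0]$ and $\psi(0)=1$, there is $\delta_\eps\in(0,y_0)$ with $1-\psi(s)<\eps$ for $s\in(0,\delta_\eps)$, so the integrand of $I$ is strictly negative on $(0,\delta_\eps)$. Hence for $y\le\delta_\eps$ one has $I(y)\le\int_{\delta_\eps}^{y_0}\frac{1-\psi(s)-\eps}{s}\,\dd s=:C_\eps$, a finite constant since $[\delta_\eps,y_0]$ is bounded away from $0$ and $\psi$ is bounded. For $\delta_\eps\le y\le y_0$ the crude bound $1-\psi(s)-\eps\le1$ gives $I(y)\le\int_{\delta_\eps}^{y_0}\frac{\dd s}{s}=\log(y_0/\delta_\eps)$. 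Thus $I(y)\le M_\eps:=\max\{C_\eps,\log(y_0/\delta_\eps)\}$ for all $y\in(0,y_0]$, and one may take $B_\eps:=\tfrac12\,\phi_0\,y_0^{-\eps}e^{-M_\eps}>0$.

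I do not expect a genuine obstacle here: once one moves to $\log\varphi$ the argument is a short integral estimate. The only care needed is the bookkeeping of the split between $y$ near $0$ — where $\psi(0)=1$ makes the integrand of $I$ negative, which is exactly what forces $\varphi$ not to decay too fast — and $y$ bounded away from $0$, which is handled by a trivial bound; and checking that the final constant $B_\eps$ depends on $\eps$ only through $\delta_\eps$ and $M_\eps$ (and not on $y$), as the statement requires.
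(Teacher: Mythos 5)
Your proof is correct and follows essentially the same route as the paper's: both arguments use $\psi(0)=1$ to get $1-\psi(s)<\eps$ on a neighborhood $(0,\delta_\eps)$ of the origin, compare $\varphi$ with a multiple of $y^\eps$ there, and dispose of the region $[\delta_\eps,y_0]$ by a trivial bound. The only cosmetic difference is that you integrate $\frac{\dd}{\dd y}\log\varphi=\frac{1-\psi(y)}{y}$ explicitly, whereas the paper invokes an ODE comparison theorem with the barrier $f(y)=\varphi(\eta)(y/\eta)^\eps$; your bookkeeping of the constant $B_\eps$ checks out.
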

\begin{proof}
	Take an arbitrary $\eps>0$. As $\psi(0)=1$, there exists an $\eta\in(0,y_0)$ such that $1-\psi(y)<\eps$ for all $y\in(0,\eta)$.	It then follows that
	\begin{align*}
		\varphi'(y) - \frac{\eps}{y} \varphi(y) < \varphi'(y) - \frac{1-\psi(y)}{y} \varphi(y) = 0,\quad y\in(0,\eta),
	\end{align*}
	in which the last step follows from \cref{eq:ODESys-moderate}. Let $f(y):=\varphi(\eta)(y/\eta)^\eps$ for $y\in(0,\eta)$. Note that $f(\eta)=\varphi(\eta)$ and
	\begin{align*}
		\varphi'(y) - \frac{\eps}{y} \varphi(y) < 0 = f'(y) - \frac{\eps}{y} f(y),\quad y\in(0,\eta),
	\end{align*}
	we can apply the standard comparison theorem for boundary value ordinary differential equations (see, for instance, the Corollary on page 91 of \cite{Walter1998}) to obtain that
	\begin{align*}
		\varphi(y) > f(y) = \frac{\varphi(\eta)}{\eta^\eps} y^\eps \ge \frac{\varphi(\eta)}{y_0^\eps} y^\eps, \quad 0<y<\eta.
	\end{align*}
	As $\varphi(y)$ is increasing (according to Theorem \ref{thm:DualFBP-moderate}.(i)), we also have
	\begin{align*}
		\frac{\varphi(y)}{y^\eps} > \frac{\varphi(\eta)}{y_0^\eps}, \quad \eta\le y\le y_0.
	\end{align*}
	Finally, by setting $B_\eps := \varphi(\eta)/y_0^{\eps}>0$, the last two inequalities lead to the desired result $\varphi(y)\ge B_\eps y^{\eps}$ for all $0<y\le y_0$.
\end{proof}

%
%
\subsection{Proof of Theorem \ref{thm:verification_moderate}}\label{app:verify}
Take an arbitrary initial relative wealth $x>0$.
We first prove that \cref{SDE:optimal} has a positive unique strong solution. We consider two cases depending on whether $\varphi(0)>0$ or $\psi(0)=1$. These two cases are exhaustive according to Theorem \ref{thm:DualFBP-moderate}.(i).

\begin{itemize}
	\item \textbf{Case 1:} Assume that $\varphi(0)>0$. Note that if $\{X^*_t\}_{t\ge0}$ is a strong positive solution of \cref{SDE:optimal}, It\^o's formula yields that $\{Z^*_t=\log X^*_t\}_{t\ge0}$ is a strong solution of the stochastic differential equation
	\begin{equation}\label{SDE:optimal_Z}
		\begin{cases}
			\dd Z^*_t = \bt(Z^*_t)\dd t + \at(Z^*_t)\dd B_t,\\[1ex]
			Z^*_0 = \log x,
		\end{cases}
	\end{equation}
	in which $\bt(z):=r+\rho+\mu \pi^*(e^z)e^{-z}-(\ee^{-z}+\rho)c^*(e^z)-\frac{\sigma^2}{2}(\pi^*(e^z)e^{-z})^2$ and $\at(z) := \sigma \pi^*(e^z)e^{-z}$.
	Conversely, if $\{Z^*_t\}$ is a strong solution of \cref{SDE:optimal_Z}, then $\{X^*_t= \ee^{Z^*_t}\}_{t\ge0}$ is a strong positive solution of \cref{SDE:optimal}. Therefore, we only need to show that \cref{SDE:optimal_Z} has a (non-explosive) unique strong solution. In view that $\psi(\cdot)$ is bounded by Theorem \ref{thm:DualFBP-moderate}.(i), it follows from \cref{pi_optimal_moderate} that $\pis(x)/x$ is uniformly bounded for $x>0$. Therefore, $\pi^*(\ee^z)\ee^{-z}$ is uniformly bounded for $z\in\Rb$. Since $\varphi(0)>0$, it follows from \cref{c_optimal_moderate} that $\cs(\ee^z)$ is uniformly bounded for $z\in\Rb$. It then follows that $\bt(z)$ is uniformly bounded for $z\in\Rb$. Furthermore, by \cref{pi_optimal_moderate}, $\at:\Rb\to(0,+\infty)$ is Lipschitz continuous and bounded away from zero on compact subsets of $\Rb$. It then follows from Proposition 5.5.17 of \cite{KS1991} that \cref{SDE:optimal_Z} has a non-exploding unique strong solution. Therefore, \cref{SDE:optimal} has a positive unique strong solution (assuming that $\varphi(0)>0$).
	
	\item \textbf{Case 2:} Assume that $\psi(0)=1$. Note that in this scenario, we allow $\varphi(0)=0$. Therefore, we cannot guarantee that the function $\cs(\ee^z)$ (and thus $\bt(z)$) are bounded. Therefore, the argument in Case 1 is not applicable.\footnote{Conversely, the argument presented in Case 2 is not applicable to Case 1, since we cannot assume that \cref{eq:Psi0} holds for a constant $\Psi_0>0$ in Case 1. In other words, we must consider Case 1 and Case 2 separately.} 
	
	Let $b(\xi):=(r+\rho)\xi+\mu\pi^*(\xi)-(1+\rho \xi)c^*(\xi)$ and $a(\xi):=\sigma \pi^*(\xi)$ be the drift and diffusion functions of the stochastic differential equation \cref{SDE:optimal}.
	By \cref{c_optimal_moderate}, \cref{pi_optimal_moderate}, and properties of $\varphi(y)$ and $\psi(y)$ given by Theorem \ref{thm:DualFBP-moderate}.(i), we have that $b(\xi)$ is locally bounded and $a(\xi)$ is locally Lipschitz continuous and bounded away from 0 on compact subsets of $(0,+\infty)$. Theorem 5.5.15 of \cite{KS1991} then yields that \cref{SDE:optimal} has a (possibly exploding) unique weak solution. To show that \cref{SDE:optimal} has a positive unique strong solution, it suffices to show that any weak solution \cref{SDE:optimal} is non-exploding in the interval $(0,+\infty)$ (i.e. that it does not exit the interval $(0,+\infty)$ in finite time). To this end, define
	\begin{align}\label{eq:Feller_fn}
		f(\xi) := \int_{x_0}^\xi\int_{x_0}^y \frac{2}{a(z)^2}\exp\left(-2\int_z^y \frac{b(s)}{a(s)^2}\dd s\right) \dd z\dd y,
		\quad \xi>0.
	\end{align} 
	By Feller's test for explosions (see, for instance, Theorem 5.5.29 of \cite{KS1991}), a weak solution of \cref{SDE:optimal} is non-exploding in $(0,+\infty)$ if and only if
	\begin{align}\label{eq:Feller_test}
		\lim_{\xi\to0^+} f(\xi) = \lim_{\xi\to+\infty} f(\xi)=+\infty.
	\end{align}
	
	For $\xi\in(0,x_0)$, \cref{c_optimal_moderate} and \cref{pi_optimal_moderate} yield 
	$a(\xi) = e_0 \xi$ and $b(\xi)=b_0 \xi$ in which $e_0:=\frac{\mu(1-\lambda)}{\sigma}$ and $b_0:=r+\rho+\frac{\mu^2(1-\lambda)}{\sigma^2}$.
	By inserting these into \cref{eq:Feller_fn} and some algebra, we obtain that
	\begin{align}\label{eq:Feller_fn_0}
		f(\xi) =
		\begin{cases}
			\frac{2e_0^2}{(2b_0-e_0^2)^2}\left[
			\frac{2b_0-e_0^2}{e_0^2}\log\left(\frac{\xi}{x_0}\right)
			+\left(\frac{\xi}{x_0}\right)^{-\frac{2b_0-e_0^2}{e_0^2}} - 1
			\right],
			&\quad \text{if } 2b_0\ne e_0^2,\\[2em]
			\frac{1}{e_0^2}\left(\log\left(\frac{\xi}{x_0}\right)\right)^2,
			&\quad \text{if } 2b_0= e_0^2,
		\end{cases}
	\end{align}
	for $\xi\in(0,x_0)$. 
	It then follows that $\lim_{\xi\to0^+} f(\xi)=+\infty$ (note that, in the first expression of \cref{eq:Feller_fn_0}, we need to consider cases $2b_0>e_0^2$ and $2b_0<e_0^2$ separately). 
	
	It only remains to show that $\lim_{\xi\to+\infty} f(\xi)=+\infty$. Note that, since $\psi(0)=1$ (by the standing assumption of Case 2) and $\psi(y)$ is continuous for $y\in[0,y_0]$, there must exist a constant $\Psi_0\in(0,1)$ such that
	\begin{align}\label{eq:Psi0}
		0<\Psi_0\le \psi(y)\le 1,\quad 0<y<y_0.
	\end{align}
	For $y>z>x_0$, by \cref{c_optimal_moderate} and \cref{pi_optimal_moderate}, we have
	\begin{align}\nonumber\textstyle
		\int_z^y \frac{b(s)}{a(s)^2}\dd s
		&=\textstyle
		\int_z^y \frac{(r+\rho)s+\mu\pi^*(s)-(1+\rho s)c^*(s)}{\sig^2\pis(s)^2}\dd s
		\le \int_z^y \frac{(r+\rho)s+\mu\pi^*(s)}{\sig^2\pis(s)^2}\dd s\\
		\nonumber
		&= \textstyle
		\int_z^y \left(
		\frac{\sig^2\rho^2(r+\rho)s}{\mu^2(1+\rho s)^2\psi(v'(s))^2}
		+ \frac{\rho}{(1+\rho s)\psi(v'(s))}
		\right)\dd s\\
		\nonumber
		&\le \textstyle
		\frac{\rho}{\Psi_0}\int_z^y \frac{1+(\rho+e_1) s}{(1+\rho s)^2}\dd s
		=\frac{1}{\rho\Psi_0}\left[
		\frac{\rho e_1(z-y)}{(1+\rho y)(1+\rho z)}
		+ (\rho+e_1)\log\left(\frac{1+\rho y}{1+\rho z}\right)
		\right]\\
		\label{eq:Feller_fn_aux1}
		&\le\textstyle \frac{\rho+e_1}{\rho\Psi_0} \log\left(\frac{1+\rho y}{1+\rho z}\right).
	\end{align}
	To obtain the second inequity, we have used \cref{eq:Psi0} and defined $e_1:=\sig^2\rho(r+\rho)/(\mu^2\Psi_0)$. Define also
	\begin{align*}
		b_1 := \frac{\rho+e_1}{\rho\Psi_0} = \frac{(r+\rho)\sig^2}{\mu^2\Psi_0^2}+\frac{1}{\Psi_0}>1,
	\end{align*}
	in which the inequality holds since $\Psi_0\in(0,1)$ in \cref{eq:Psi0}.
	For $\xi>x_0$, \cref{eq:Feller_fn}, \cref{eq:Feller_fn_aux1}, and \cref{pi_optimal_moderate} yield
	\begin{align*}
		f(\xi) 
		&= \int_{x_0}^\xi\int_{x_0}^y \frac{2}{a(z)^2}\exp\left(-2\int_z^y \frac{b(s)}{a(s)^2}\dd s\right) \dd z\dd y\\
		&\ge 
		\int_{x_0}^\xi\int_{x_0}^y \frac{2}{\sigma^2 \pi^*(z)^2}\exp\left(-2b_1 \log\left(\frac{1+\rho y}{1+\rho z}\right)\right) \dd z\dd y\\
		&=
		\int_{x_0}^\xi\int_{x_0}^y 
		\frac{2\sigma^2\rho^2}{\mu^2(1+\rho z)^2\psi\big(v'(z)\big)^2}\left(\frac{1+\rho y}{1+\rho z}\right)^{-2b_1}\dd z\dd y\\
		&\ge
		\frac{2\sig^2\rho^2}{\mu^2}
		\int_{x_0}^\xi (1+\rho y)^{-2b_1} 
		\int_{x_0}^y (1+\rho z)^{2b_1-2}\dd z\dd y\\
		&=
		\frac{2\sig^2\rho}{\mu^2(2b_1-1)}
		\left[\log\left(\frac{1+\rho \xi}{1+\rho x_0}\right) 
		+ \frac{1}{2b_1-1}\left(
		\left(\frac{1+\rho x_0}{1+\rho \xi}\right)^{2b_1-1}
		- 1
		\right)
		\right],
	\end{align*}
	in which the second inequality holds because of \cref{eq:Psi0}. By letting $\xi\to+\infty$, we then conclude that $\lim_{\xi\to+\infty} f(\xi)=+\infty$. We have shown that \cref{eq:Feller_test} holds which, as we have argued, is equivalent to \cref{SDE:optimal} having a positive unique strong solution in Case 2.
\end{itemize}
We have shown that \cref{SDE:optimal} has a unique strong solution $\{X^*_t\}_{t\ge0}$. Therefore, the relative investment and consumption policy  $\big\{\big(\pi^*(X^*_t),c^*(X^*_t)\big)\big\}_{t\geq 0}$ is admissible (c.f. Definition \ref{def:rel-admis}). Henceforth, with a slight abuse of notations, we define
\begin{align*}
	c^*_t := \cs(X^*_t),\quad \text{ and } \quad \pi^*_t:=\pis(X^*_t),\quad t\ge0.
\end{align*}

Next, we show that the policy $\{(c^*_t,\pi^*_t)\}_{t\ge0}$ is optimal for the concavified problem \cref{eq:Vt}. We do so in two steps, by first showing that
\begin{align}\label{eq:Verfity_STEP1}
	v(x) = \Eb\left[\int_0^{+\infty}\ee^{-\del t} \Ut\left(c^*_t\right)\dd t\right] \le \Vt(x),
\end{align}
and then proving
\begin{align}\label{eq:Verfity_STEP2}
	v(x) \ge \Vt(x).
\end{align}

In Corollary \ref{coro:solveHJB} it was shown that $v(\cdot)$ and $x_0$ satisfy \cref{eq:HJB-FBP1-moderate}, \cref{eq:Ansatz2-moderate} and $v^{\prime\prime}(\cdot)<0$. It is also straightforward to verify that $\cs(\cdot)$ of \cref{c_optimal_moderate} satisfies \cref{eq:cs-moderate}, and that $\pis(\cdot)$ of \cref{pi_optimal_moderate} satisfies \cref{eq:piStar}. By the argument presented in Subsection \ref{subsec:candidatepolicies}, it then follows that, for all $\pi\in\Rb$, $c\ge0$, and $\xi>0$,
\begin{align}\nonumber
	&\textstyle
	-\del v(\xi) + \big((r+\rho) \xi + \mu \pi - (1+\rho \xi) c\big) v'(\xi) + \frac{1}{2}\sig^2\pi^2 v^{\prime\prime}(\xi) + \Ut(c)\\*
	\nonumber
	&\textstyle
	\le -\del v(\xi) + \big((r+\rho) \xi + \mu \pi^*(\xi) - (1+\rho \xi) \cs(\xi)\big) v'(\xi) + \frac{1}{2}\sig^2\pis(\xi)^2 v^{\prime\prime}(\xi) + \Ut\big(\cs(\xi)\big)\\
	\label{eq:HJB-concavified_alt}
	&=0.
\end{align}

To show \cref{eq:Verfity_STEP1}, denote $Z^*_t := \log X^*_t$ which satisfies \cref{SDE:optimal_Z}, and let $\tau^*_n := \inf\{t\geq 0: X^*_t \geq n {\rm\ or\ }X^*_t\leq 1/n \}$. Because $\{X^*_t\}_{t\geq 0}$ is non-exploding, we have $\lim_{n\to \infty}\tau_n^*=\infty$ almost surely. Moreover, on $[0,\tau_n^*]$, $X^*_t$ is bounded away from 0 and $\infty$. Therefore, $\int_0^{\tau_n^*} e^{-\delta s}|\pi^*(X^*_s)v'(X^*_s)|^2\dd s$ is almost surely bounded (with a bound possibly depending on $n$). By It\^o's formula and for an arbitrary constant $T>0$, we have 
\begin{align*}
	\ee^{-\delta (T\wedge \tau^*_n)}v(X^*_{T\wedge \tau^*_n})-v(x) = -\int_0^{T\wedge \tau_n^*}\ee^{-\delta s}\Ut\big(c^*(X^*_s)\big)\dd s + \int_0^{T\wedge \tau^*_n}\sigma v'(X^*_s)\pi^*(X^*_s) \dd B_s,
\end{align*}
where we have used the equality in \cref{eq:HJB-concavified_alt}. Taking expectations on both sides yields 
\begin{equation*}
	v(x) = \Eb \left[\ee^{-\delta (T\wedge \tau^*_n)}v(X_{T\wedge \tau^*_n}) \right]+\Eb\left[ \int_0^{T\wedge \tau^*_n}\ee^{-\delta s}\Ut(c^*(X^*_s))\dd s\right].
\end{equation*}
By first sending $n\to \infty$, then letting $T\to \infty$, we obtain from the transversality condition \ref{eq:trans} and the monotone convergence theorem that
\begin{align}\label{eq:v_Vt_optimal}
	v(x)=\E\left[\int_0^{\infty}\ee^{-\delta s}\Ut\big(c^*(X^*_s)\big) \dd s\right],
\end{align}
which is the equality in \cref{eq:Verfity_STEP1}. The inequality in \cref{eq:Verfity_STEP1} trivially follows from the definition of $\Vt(x)$ in \cref{eq:Vt}, as we have already established that $\{(\pi^*_t,c^*_t)\}_{t\geq 0}\in \Acr(x)$.

To prove \cref{eq:Verfity_STEP2}, take an arbitrary admissible relative policy $\{(\pi_t,c_t)\}_{t\geq 0}\in \Acr(x)$. Let $\{X_t\}_{t\geq 0}$ (i.e. the corresponding relative wealth process) be the strong solution of \cref{eq:X}. By replacing $\{(\pi^*_t,c^*_t)\}_{t\geq 0}$ with $\{(\pi_t,c_t)\}_{t\geq 0}$ in the arguments that yielded \cref{eq:v_Vt_optimal}, we obtain the following inequality
\begin{align}\label{eq:v_Vt_suboptimal}
	v(x)\ge \E\left[\int_0^{\infty}\ee^{-\delta s}\Ut(c_t) \dd s\right].
\end{align}
In particular, we can only use the inequality in \cref{eq:HJB-concavified_alt} (instead of the equality therein), which leads to an inequality in \cref{eq:v_Vt_suboptimal}. By taking the supremum among all  $\{(\pi_t,c_t)\}_{t\geq 0}\in \Acr(x)$ on the right side of \cref{eq:v_Vt_suboptimal} and using the definition of $\Vt(x)$ in \cref{eq:Vt}, we obtain \cref{eq:Verfity_STEP2}.

From \cref{eq:Verfity_STEP1} and \cref{eq:Verfity_STEP2}, and thanks to the previous result that $\{(\pi^*_t,c^*_t)\}_{t\geq 0}\in \Acr(x)$, we deduce that $v(x)=\Vt(x)$ and that $\{(\pi^*_t,c^*_t)\}_{t\geq 0}$ is an optimal policy for the concavified problem \cref{eq:Vt}.

To complete the proof of Theorem \ref{thm:verification_moderate}, it remains to show that $V(x)=\Vt(x)$ and that the policy $\{(\pi^*_t,c^*_t)\}_{t\geq 0}$ is also optimal in the original stochastic control problem \cref{eq:VF}. To this end, we first claim that
\begin{align}\label{eq:Ucs_Utcs}
	U\big(\cs(\xi)\big) = \Ut\big(\cs(\xi)\big),\quad \xi>0.
\end{align}
Indeed, if $0<\xi<x_0$, then $U\big(\cs(\xi)\big)=U(0)=\Ut(0)=\Ut\big(\cs(\xi)\big)$ by \cref{c_optimal_moderate} and \cref{eq:Ut-moderate}. Similarly, if $\xi\ge x_0$, then $\cs(\xi)=\al+(U_+')^{-1}\big(\varphi(v'(\xi))\big)\ge \al+(U_+')^{-1}\big(\varphi(v'(x_0))\big)=c_0$, and therefore $U\big(\cs(\xi)\big)=\Ut\big(\cs(\xi)\big)$ by \cref{eq:Ut-moderate}.

Finally, by virtue of \cref{eq:Ucs_Utcs}, it holds that 
\begin{align*}
	\Vt(x)
	=\Eb\left[\int_0^{+\infty}\ee^{-\del t} \Ut\big(\cs(X^*_t)\big)\dd t\right]
	=\Eb\left[\int_0^{+\infty}\ee^{-\del t} U\big(\cs(X^*_t)\big)\dd t\right]\leq V(x)\leq \Vt(x),
\end{align*}
where the second to last step follows from the definition of $V(x)$ in \cref{eq:VF}, and the last step holds in view of $U(c)\le\Ut(c)$ for $c>0$ by the definition of $\Ut(c)$ in \cref{eq:Ut_defined}. Thus, we conclude $V(x)=\Vt(x)$, which completes the proof.


\section*{Acknowledgments}
We are grateful to the Associate Editor and the anonymous referee for their helpful comments. 
B. Angoshtari acknowledges a Simons Foundation Travel Support for Mathematicians, ID 946036. X. Yu is supported by the Hong Kong RGC General Research Fund (GRF) under grant no. 15306523 and by the Research Center for Quantitative Finance at the Hong Kong Polytechnic University under grant no. P0042708. F. Yuan is supported by the start-up fund at the Chinese University of Hong Kong (Shenzhen) with grant no. UDF01004253.
%
%
\printbibliography 

@article{Abel,
author = {Abel, A. B.},
title = {Asset prices under habit formation and catching up with the joneses},
journal = {The American Economic Review},
volume = {80},
number = {2},
pages = {38-42},
year = {1990}
}

@misc{ABG2025,
title={Stochastic internal habit formation and optimality},
author={Michele Aleandri and Alessandro Bondi and Fausto Gozzi},
year={2025},
eprint={2502.05081},
archivePrefix={arXiv},
primaryClass={math.OC},
}

@article{ABY22,
author = {Angoshtari, B. and Bayraktar, E. and Young, V. R.},
title = {Optimal investment and consumption under a habit-formation constraint},
journal = {SIAM Journal on Financial Mathematics},
volume = {13},
number = {1},
pages = {321-352},
year = {2022}
}

@article{ABY23,
title={Optimal consumption under a habit-formation constraint: the deterministic case},
author={Angoshtari, B. and Bayraktar, E. and Young, V. R},
journal = {SIAM Journal on Financial Mathematics},
volume = {14},
number = {2},
pages = {557-597},
year = {2023}
}

@article{BKP,
title={Optimal portfolio choice under loss aversion},
author={Berkelaar, A. B. and Kouwenberg, R. and Post, T.},
journal={Review of Economics and Statistics},
volume={86},
number={4},
pages={973--987},
year={2021}
}

@article{BWY,
title={A mean field game approach to equilibrium consumption under external habit formation},
author={Bo, L. and Wang, S. and Yu, X.},
journal={Stochastic Processes and their Applications},
volume={178},
pages={104461},
year={2024}
}

@article{Camp,
title={Why is consumption so smooth},
author={Campbell, J.Y. and Deaton, A.},
journal={Review of Economic Studies},
volume={56},
pages={357--373},
year={1989}
}

@article{Carp,
title={Does option compensation increase managerial risk appetite?},
author={Carpenter, J. N.},
journal={Journal of Finance},
volume={55},
pages={2311--2331},
year={2000}
}

@article{CarrollEtal1997,
title={Comparison utility in a growth model},
author={Carroll, Christopher D and Overland, Jody and Weil, David N},
journal={Journal of economic growth},
volume={2},
pages={339--367},
year={1997},
publisher={Springer}
}

@article{CarrollEtal2000,
title={Saving and growth with habit formation},
author={Carroll, Christopher D and Overland, Jody and Weil, David N},
journal={American Economic Review},
volume={90},
number={3},
pages={341--355},
year={2000},
publisher={American Economic Association}
}

@article{Car2000,
title={Solving consumption models with multiplicative habits},
author={Carroll, C. D.},
journal={Economics Letters},
volume={68},
pages={67--77},
year={2000}
}

@article{CPV2011,
title={Modeling non-monotone risk aversion using SAHARA utility functions},
author={Chen, A. and Pelsser, A. and Vellekoop, M.},
journal={Journal of Economic Theory},
volume={146},
number={5},
pages={2075--2092},
year={2011}
}

@article{Cons,
title={Habit formation: A resolution of the equity premium puzzle},
author={Constantinides, G. M.},
journal={Journal of Political Economy},
volume={98},
number={3},
pages={519--543},
year={1990}
}

@article{corr,
title={Multiplicative habit formation and consumption: A note.},
author={Corrado, L. and Holly, S.},
journal={Economics Letters},
volume={113},
pages={116--119},
year={2011}
}

@article{Cur17,
title={Optimal portfolio choice with loss aversion over consumption},
author={Curatola, Giuliano},
journal={The Quarterly Review of Economics and Finance},
volume={66},
pages={345--358},
year={2017},
publisher={ELSEVIER}
}

@article{DZ1991,
title={Asset prices in an exchange economy with habit formation},
author={Detemple, J. B. and Zapatero, F.},
journal={Econometrica},
volume={59},
number={6},
pages={1633--1657},
year={1991},
publisher={JSTOR}
}

@article{DZ1992,
title={Optimal consumption-portfolio policies with habit formation},
author={Detemple, J. B. and Zapatero, F.},
journal={Mathematical Finance},
volume={2},
number={4},
pages={251--274},
year={1992}
}

@article{DongZheng,
title={Optimal investment with S-shaped utility and trading and value at risk constraints},
author={Dong, Y. and Zheng, Y.},
journal={European Journal of Operational Research},
volume={281},
pages={341--356},
year={2020}
}

@article{Eng,
title={Utility maximization with habit formation: Dynamic programming and stochastic PDEs},
author={Englezos, N. and Karatzas, I.},
journal={SIAM Journal on Control and Optimization},
volume={48},
number={2},
pages={481--520},
year={2009}
}

@article{Fuh,
title={Habit formation in consumption and its implications for monetary-policy models},
author={Fuhrer, J. C.},
journal={American Economic Review},
volume={90},
pages={367--390},
year={2000}
}

@article{GHLY,
title={Retirement decision with addictive habit persistence in a jump diffusion market},
author={Guan, G. and Huang, Q. and Liang, Z. and Yuan, F.},
journal={Preprint, available at: arXiv:2011.10166},
year={2024}
}

@article{HYang,
title={Realization utility with adaptive reference points},
author={He, X. and Yang, L.},
journal={Mathematical Finance},
volume={29},
number={2},
pages={409--447},
year={2019}
}

@article{HZhou,
title={Myopic loss aversion, reference point, and money illusion},
author={He, X. and Zhou, X. Y.},
journal={Quantitative Finance},
volume={14},
number={9},
pages={1541--1554},
year={2014}
}

@article{JZhou,
title={Behavioral portfolio selection in continuous time},
author={Jin, H. and Zhou, X. Y.},
journal={Mathematical Finance},
volume={18},
number={3},
pages={385--426},
year={2008}
}

@article{Kahneman79,
title={Prospect theory: An analysis of decision under risk},
author={Kahneman, D. and Tversky, A.},
journal={Econometrica},
volume={47},
number={2},
pages={263--292},
year={1979}
}

@article{Kam22,
title={Dual formulation of the optimal consumption problem with multiplicative habit formation},
author={Kamma, T. and Pelsser, A.},
journal={Preprint, available at SSRN: https://ssrn.com/abstract=4020203},
year={2022}
}

@article{KKP,
title={Loss aversion and consumption choice: Theory and experimental evidence},
author={Karle, H. and Kirchsteiger, G. and Peitz, M.},
journal={American Economic Journal: Mircoeconomics},
volume={7},
number={2},
pages={101--120},
year={2015}
}

@book {KS1991,
AUTHOR = {Karatzas, I. and Shreve, S. E.},
TITLE = {Brownian motion and stochastic calculus},
SERIES = {Graduate Texts in Mathematics},
EDITION = {Second},
PUBLISHER = {Springer-Verlag, New York},
YEAR = {1991}
}

@article{Kos09,
title={Reference-dependent consumption plans},
author={K\"{o}szegi, B. and Rabin, M.},
journal={American Economic Review},
volume={99},
pages={909--936},
year={2009}
}

@article{LWY,
title={Robust consumption and portfolio choices with habit formation},
author={Li, T. and Wang, S. and Yang, J.},
journal={Economic Modelling},
volume={98},
pages={227--246},
year={2021}
}

@article{LYZ,
title={Optimal consumption with loss aversion and reference to past spending maximum},
author={Li, X. and Yu, X. and Zhang, Q.},
journal={SIAM Journal on Financial Mathematics},
volume={15},
number={1},
pages={121--160},
year={2024}
}

@article{Mehra,
title={The equity premium: A puzzle},
author={Mehra, Rajnish and Prescott, Edward C},
journal={Journal of monetary Economics},
volume={15},
number={2},
pages={145--161},
year={1985}
}

@article{Merton69,
title={Lifetime portfolio selection under uncertainty: The continuous time case},
author={Merton, R. C.},
journal={The Review of Economics and Statistics},
volume={51},
number={3},
pages={247--257},
year={1969}
}

@article{Merton71,
title={Optimal consumption and portfolio rules in a continuous-time model},
author={Merton, R. C.},
journal={Journal of Economic Theory},
volume={3},
pages={373--413},
year={1971}
}

@article{Munk,
title={Portfolio and consumption choice with stochastic investment opportunities and habit formation in preferences},
author={Munk, C.},
journal={Journal of Economic Dynamics and Control},
volume={32},
pages={3560--3589},
year={2008}
}

@book{Rogers2013,
title={Optimal investment},
author={Rogers, L.C.G.},
year={2013},
publisher={Springer}
}

@article{SS2002,
title={An isomorphism between asset pricing models with and without linear habit formation},
author={Schroder, M. and Skiadas, C.},
journal={Review of Financial Studies},
volume={15},
pages={1189--1221},
year={2002}
}

@article{Sun89,
title={Intertemporally dependent preferences and the volatility of consumption and wealth},
author={Sundaresan, S. M.},
journal={Review of Financial Studies},
volume={2},
pages={73--89},
year={1989}
}

@article{TverskyKahneman1992,
title={Advances in prospect theory: Cumulative representation of uncertainty},
author={Tversky, A. and Kahneman, D.},
journal={Journal of Risk and Uncertainty},
volume={5},
number={4},
pages={297--323},
year={1992}
}

@article{VanBilsenvLaevenNijman2020,
title={Consumption and portfolio choice under loss aversion and endogenous updating of the reference level},
author={Van Bilsen, Servaas and Laeven, Roger JA and Nijman, Theo E},
journal={Management Science},
volume={66},
number={9},
pages={3927--3955},
year={2020},
publisher={INFORMS}
}

@article{VanBilsen2020,
title={Consumption and portfolio choice under internal multiplicative habit formation},
author={Van Bilsen, S. and Bovenberg, A. L. and Laeven, R. J. A.},
journal={Journal of Financial and Quantitative Analysis},
volume={55},
number={7},
pages={2334--2371},
year={2020}
}

@book {Walter1998,
AUTHOR = {Walter, Wolfgang},
TITLE = {Ordinary differential equations},
SERIES = {Graduate Texts in Mathematics},
VOLUME = {182},
NOTE = {Translated from the sixth German (1996) edition by Russell
 Thompson,
 Readings in Mathematics},
PUBLISHER = {Springer-Verlag, New York},
YEAR = {1998}
}

@article{YY,
author = {Yang, Y. and Yu, X.},
journal = {Advances in Applied Probability},
number = {2},
pages ={433--459},
title = {Optimal entry and consumption under habit formation},
volume = {54},
year = {2022}
}

@article{Yu2015,
author = {Yu, X.},
journal = {Annals of Applied Probability},
number = {3},
pages ={1383--1419},
title = {Utility maximization with addictive consumption habit formation in incomplete semimartingale markets},
volume = {25},
year = {2015}
}

@article{Yu2017,
author = {Yu, X.},
journal = {Annals of Applied Probability},
number = {2},
pages ={960--1002},
title = {Optimal consumption under habit formation in markets with transaction costs and random endowments},
volume = {27},
year = {2017}
}

@article{Song2017,
title = {Optimal consumption and portfolio selection problems under loss aversion with downside consumption constraints},
journal = {Applied Mathematics and Computation},
volume = {299},
pages = {80-94},
year = {2017},
author = {Jingjing Song and Xiuchun Bi and Rong Li and Shuguang Zhang}
}

\end{document}